\documentclass{article}
\usepackage{graphicx} 
\usepackage{amsmath,amssymb,amsfonts,amscd,amsxtra,amsthm}

\usepackage{latexsym}
\usepackage{xcolor}
\usepackage[title]{appendix}
\usepackage{graphicx}
\usepackage{subcaption} 
\usepackage{booktabs}
\usepackage{bbm}
\usepackage{float}
\usepackage{overpic}
\usepackage{caption}
\usepackage{tikz-cd}
\usepackage{tikz}
\usepackage{algorithmic}
\usepackage{cite}
\usepackage[ruled,linesnumbered,noend]{algorithm2e}

\usepackage{appendix}
\usepackage{epstopdf}
\usepackage[top=1in, bottom=1in, left=1in, right=1in]{geometry}
\renewcommand{\theequation}{\thesection.\arabic{equation}}

\newtheorem{thm}{Theorem}[section]
\newtheorem{cor}[thm]{Corollary}
\newtheorem{lem}[thm]{Lemma}
\newtheorem*{est}{Estimator}
\newtheorem{prop}[thm]{Proposition}

\newtheorem{rmk}[thm]{Remark}

\renewcommand\appendix{\par
  \setcounter{section}{0}
  \setcounter{subsection}{0}
  \setcounter{figure}{0}
  \setcounter{table}{0}
  \renewcommand\thesection{Appendix \Alph{section}}
  \renewcommand\theequation{\Alph{section}.\arabic{equation}}
  \renewcommand\thefigure{\Alph{section}.\arabic{figure}}
  \renewcommand\thetable{\Alph{section}.\arabic{table}}
  \renewcommand\thethm{\Alph{section}.\arabic{thm}}
}

\numberwithin{equation}{section}

\DeclareMathAlphabet{\itbf}{OML}{cmm}{b}{it}
\DeclareMathAlphabet\mathbfcal{OMS}{cmsy}{b}{n}

\renewcommand{\hat}{\widehat}

\def\RR{\mathbb{R}}

\def\EE{\mathbb{E}}

\def\bx{{{\itbf x}}}

\def\by{{{\itbf y}}}
\def\bz{{{\itbf z}}}

\def\bp{{\itbf p}}

\def\bg{{\itbf g}}
\def\be{{\itbf e}}
\def\br{{\itbf r}}

\def\bf{{\itbf f}}

\def\bet{{\boldsymbol{\eta}}}

\def\caP{\mathcal{P}}

\def\cI{{\mathcal I}}
\def\cA{{\mathcal A}}
\def\cB{{\mathcal B}}
\def\cD{{\mathcal D}}

\def\cO{{\mathcal O}}

\def\cS{{\mathcal S}}

\def\cG{{\mathcal G}}

\def\bet{{\boldsymbol{\eta}}}

\def\lb{\left <}

\def\rb{\right >}

\def\cT{\mathcal{T}}
\def\cK{{\mathcal K}}

\def\12{{\frac{1}{2}}}

\newcommand{\bra}[1]{\left(#1\right)}
\usepackage{authblk}

\newcommand{\Rn}[1]{\textup{\uppercase\expandafter{\romannumeral #1}}}

\newcommand{\blue}[1]{#1}
\newcommand{\red}[1]{#1}

\title{Passive Imaging with Ambient Noise Under Wave Speed Mismatch: Mathematical Analysis and Wave Speed Estimation}
\author{
Zetao Fei\thanks{\footnotesize CMAP, Ecole Polytechnique, Institut Polytechnique de Paris, 91120 Palaiseau, France (zetao.fei@polytechnique.edu).}\;
	and Josselin Garnier\thanks{\footnotesize 
	CMAP, Ecole Polytechnique, Institut Polytechnique de Paris, 91120 Palaiseau, France (josselin.garnier@polytechnique.edu).}}

\begin{document}

\maketitle
\begin{abstract}
    It is known that waves generated by ambient noise sources and recorded by passive receivers can be used to image the reflectivities of an unknown medium. However, reconstructing the reflectivity of the medium from partial boundary measurements remains a challenging problem, particularly when the background wave speed is unknown. In this paper, we investigate passive correlation-based imaging in the daylight configuration, where uncontrolled noise sources illuminate the medium and only ambient fields are recorded by a sensor array. We first analyze daylight migration for a point reflector embedded in a homogeneous background. By introducing a searching wave speed into the migration functional, we derive an explicit characterization of the deterministic shift and defocusing effects induced by wave-speed mismatch. We show that the maximum of the envelope of the resulting functional provides a reliable estimator of the true wave speed. We then extend the analysis to a random medium with correlation length smaller than the wavelength. Leveraging the shift formula obtained in the homogeneous case, we introduce a virtual guide star that remains fixed under migration with different searching speeds. This property enables an effective wave-speed estimation strategy based on spatial averaging around the virtual guide star. For both homogeneous and random media, we establish resolution analyses for the proposed wave-speed estimators. Numerical experiments are conducted to validate the theoretical result.
\end{abstract}
\section{Introduction}
Passive correlation--based imaging reconstructs information about a medium or subsurface using only wavefields transmitted by opportunistic or ambient noise sources. This technique was first demonstrated experimentally in \cite{weaver2001ultrasonics} and has since gained widespread use in various fields, including helioseismology \cite{duvall1993time}, volcano monitoring \cite{sabra2006extracting}, and reservoir monitoring \cite{curtis2006seismic}. For a broad interdisciplinary review of correlation-based methods, we refer the reader to \cite{larose2006correlation}; an extensive overview of the mathematical foundations can be found in the monograph \cite{Garnier_Papanicolaou_2016}.

It is now well understood that key information---such as travel times and Green functions---is encoded in the cross-correlation of recorded signals \cite{lobkis2001emergence,PhysRevE.69.046610,de2009semiclassical, Bardos_2008}. This observation underlies the success of imaging reflectors using migration of the empirical cross-correlation. In \cite{doi:10.1137/080723454}, the authors developed a unified mathematical framework for analyzing passive imaging across diverse configurations. The uniqueness of passive imaging inverse problems has also been established in both the time and frequency domains; see, for instance, \cite{HELIN2018132, agaltsov2020global}.

In this paper, we focus on passive imaging in the \emph{daylight configuration}, where a sensor array is located between randomly distributed noise sources and the medium of interest. This setting is common in the literature on virtual-source imaging, where the receiver acts as a virtual source. The scattered waves recorded at the array are numerically back-propagated as if emitted from the receiver, allowing wavefield reconstruction and refocusing at reflector locations. This method is widely used in seismic exploration \cite{bakulin2006virtual, wapenaar2017virtual, https://doi.org/10.1111/1365-2478.12495}, and its mathematical foundations are discussed in \cite{garnier2014role}. We first investigate the case of a point-like reflector embedded in a homogeneous background, a geometry closely related to classical reflector-imaging problems. Cross-correlation based method in this setting has been validated in physical experiments \cite{https://doi.org/10.1111/j.1365-2478.2007.00684.x} and widely applied in seismology, e.g., \cite{harmankaya2013locating, kaslilar2013estimating, konstantaki2013imaging}. A detailed resolution analysis of this approach was provided in \cite{garnier2010resolution,Garnier_2012}. We then extend the proposed method to a second scenario in which the medium is random, with a correlation length much smaller than the wavelength.

A major focus of this work concerns the estimation of the background wave speed from partial passive boundary measurements. Accurate knowledge of the background wave speed is crucial in passive imaging, both for properly focusing the image and for interpreting medium properties. Eikonal-based methods are widely used for wave-speed estimation in ambient noise tomography \cite{shapiro2005high, lin2009eikonal}; see also the review \cite{ritzwoller2011ambient}. However, their implementation in a daylight configuration is challenging. Optimization-based approaches such as Full Waveform Inversion (FWI) \cite{sager2018towards, fichtner_tsai_2019} offer an alternative strategy by minimizing the discrepancy between recorded and simulated data, and recent efforts have extended such methods to quantitative passive imaging \cite{muller2024quantitative}. Nonetheless, these techniques remain computationally demanding and suffer from intrinsic non-convexity. In the literature, cross-correlation–based methods have been applied to image media with three-dimensional rapid fluctuations and a one-dimensional slowly varying background in the parabolic regime \cite{Garnier_2009}. However, to our knowledge, few studies have considered estimating the effective wave speed in a random medium under passive, daylight-type illumination.

In the context of active imaging, recent developments in aberration correction and wave-speed estimation using reflection-matrix analysis have introduced techniques that do not require a strong guide star (i.e. a point-like strong reflector). These methods exploits the recorded backscattered field by generating a synthetic focus that acts as a virtual guide star \cite{lambert2022ultrasound, 9858891, bureau2024reflection}. This idea motivates the introduction of a virtual guide star in our passive imaging setting. A mathematical framework for interpreting such experiments was recently proposed in \cite{Josselin_Garnier_2025_Inverse_Problems_and_Imaging}, and a related concept has been explored in passive imaging in \cite{giraudat2024matrix}.

The contributions of this paper are summarized below. First, in the daylight configuration with a point-like reflector in a homogeneous background, we introduce a searching wave speed $c_s$ into the traditional daylight imaging functional and derive a quantitative characterization of the effects of wave-speed mismatch. Our analysis shows how an incorrect wave speed leads to shifts and blurring in the reconstructed reflector location. We further demonstrate that the maximum of the envelope of the generalized imaging functional yields a robust estimator for the background wave speed. We also identify a special degenerate case within the narrowband regime: under this configuration, accurate reflector localization and background wave-speed estimation cannot be achieved simultaneously, see Section \ref{subsec: imaging func} for a precise statement. Second, we extend this methodology to the random-medium setting, where no strong guide star is present. By migrating the empirical cross-correlation using different searching wave speeds, we construct a virtual guide star whose physical location is stabilized by the analytical shifting formula established in the homogeneous case (see~\eqref{eq: z_f}). The background wave speed is then estimated from spatial averages of the imaging functional around this virtual guide star. For both settings, we provide a comprehensive theoretical analysis, including resolution properties, and validate the proposed estimators through numerical experiments.

The remainder of the paper is organized as follows. Section~\ref{sec: preliminary} introduces the mathematical framework and reviews key properties of cross-correlation in passive imaging. Section~\ref{sec: guide star} analyzes the generalized daylight migration imaging functional in the presence of a point-like reflector and presents a wave-speed estimator. Section~\ref{sec: random wave speed} extends the method to random media and introduces the virtual guide-star--based estimator. Numerical experiments are given in Section~\ref{sec: numerical}, and concluding remarks appear in Section~\ref{sec: conclusion}.

\section{Preliminaries} \label{sec: preliminary}
In this section, we present the mathematical models and assumptions underlying the problem studied in this paper. Section~\ref{subsec: setup} provides a detailed description of the problem setup. In Section~\ref{subsec: cross-cor}, we briefly review key properties of cross-correlation, a fundamental tool in passive imaging. For a more comprehensive discussion, we refer the reader to \cite{Garnier_Papanicolaou_2016}.

\subsection{Problem Outline} \label{subsec: setup}

We consider the scalar wavefield \(u\) governed by the inhomogeneous acoustic wave equation
\begin{align} \label{eq: wave_model}
    \frac{1}{c^2(\bx)} \frac{\partial^2 u}{\partial t^2} - \Delta_{\bx} u = n(t,\bx),
\end{align}
where \(c(\bx)\) denotes the spatially varying wave speed and \(n(t,\bx)\) is a random noise field.  
The noise is assumed to be zero-mean and stationary in time, with frequency band \(\mathcal{B}\) centered at \(\omega_c\) and bandwidth \(B\). Its statistics are characterized by
\begin{align}
    \mathbb{E}\!\left[n(t,\bx)n(t',\bx')\right]
    =
    F_\varepsilon(t-t')\,\Gamma(\bx,\bx'),
\end{align}
where \(F_\varepsilon\) denotes the normalized temporal autocorrelation function. It is even and attains its maximum at the origin, and is normalized so that $F_\varepsilon(0)=1$.

Throughout the paper, we use the Fourier transform convention
\[
\hat f(\omega)
=
\int dt~
f(t)e^{i\omega t}.
\]
We denote by \(\hat F_\varepsilon\) the Fourier transform of \(F_\varepsilon\). By Bochner's theorem, \(\hat F_\varepsilon(\omega)\) is real-valued, even, and nonnegative.

We assume that the decoherence time of the sources is short relative to the typical travel times. Introducing a small scale parameter \(\varepsilon\ll 1\), we write
\begin{align}\label{eq: scaling_F_polished}
    F_\varepsilon(t) = F\!\left(\frac{t}{\varepsilon}\right),
\end{align}
so that the central frequency scales as \(\omega_c = \omega_0/\varepsilon\), and the bandwidth scales as \(B = B_H/\varepsilon\), and the typical wavelength scales as $\lambda = \varepsilon \frac{2\pi c_0}{\omega_0}$.

Spatially, the sources are assumed to be delta-correlated:
\[
    \Gamma(\by_1,\by_2) = K(\by_1)\delta(\by_1-\by_2),
\]
where \(K\) describes their spatial distribution. Extensions to correlated sources are possible, see \cite{Bardos_2008}.\\

\noindent\textbf{Medium model.}
The wave speed is described as a small perturbation of a homogeneous background:
\[
    \frac{1}{c^2(\bx)} = \frac{1}{c_0^2}\left(1+\rho(\bx)\right).
\]
We consider two types of perturbations:

\begin{itemize}

\item \textbf{Point-like reflector.}  
A compact reflector located near \(\bz_r\) is modeled by
\[
    \rho(\bx) = \sigma_r V_r(\bx),
    \qquad 
    V_r(\bx)=\mathbbm{1}_{\Omega_r}(\bx-\bz_r),
\]
where \(\Omega_r\) has characteristic size \(\ell_r\).  
The point-reflector regime corresponds to \(\ell_r\ll \lambda\), the reflector is assumed weak, \(\sigma_r\ll 1\).  
Neither the reflector position \(\bz_r\) nor the background velocity \(c_0\) is assumed to be known.

\item \textbf{Random medium.}  
In this regime, the perturbations are supported in a compact region \(\mathcal{D}\subset \RR^3\) and modeled by a rapidly oscillatory random field:
\[
    \rho(\bx) = \rho_\mu(\bx)
    = \mu\!\left(\frac{\bx}{\ell_c}\right),
    \qquad \bx\in\mathcal{D},
\]
where \(\ell_c\ll\lambda\) is the correlation length.  
The random field \(\mu\) is assumed to be stationary, mean-zero, and has covariance function
\[
    \Sigma(\bx) = \operatorname{Cov}(\mu(\bx),\mu(0)),
\]
decaying at the rate \(|\Sigma(\bx)|\lesssim (1+|\bx|)^{-4-\eta}\) for some \(\eta>0\).  
The fluctuations are assumed weak, \(|\Sigma(0)|\ll 1\).  
While the mean-zero assumption is restrictive theoretically, it is not an obstacle in practice because the boundary reflections due to a nonzero mean can be mitigated or are negligible in the imaging geometry presented below.

\end{itemize}

\noindent\textbf{Data acquisition geometry.}
We work in a passive \emph{daylight} configuration in which an array \(\mathcal{A}\) records the signals \(u(t,\bx_j)\), \(j=1,\dots,N\), and is positioned between the random noise sources (i.e. the support of $K$) and the medium of interest (i.e. $\bz_r$ or $\mathcal{D}$).  
The short decoherence time implies a high central frequency.  
The array radius \(a\) is assumed larger than the typical wavelength \(\lambda\), but smaller than the distance to the medium of interest.

In summary, we consider a multiscale regime governed by a single small parameter $\varepsilon$, with the characteristic propagation distance between the array and the source or imaging region chosen as the reference $\cO(1)$ length scale. Relative to this scale, the wavelength is assumed to be small and scales as $\lambda \sim \cO(\varepsilon)$. The array aperture occupies an intermediate scale and satisfies $ a \sim \cO(\varepsilon^{1/2})$:
\[
    \omega_c = \frac{\omega_0}{\varepsilon}, \quad \lambda = \varepsilon\frac{2\pi c_0}{\omega_0}, \quad a = \varepsilon^{\frac{1}{2}}a_0.
\]
\red{Here, $a_0\sim\mathcal{O}(1)$ is the normalized array radius. We denote the corresponding normalized array by $\mathcal{A}_0$.}
Consequently,
\[
\lambda \ll a \ll 1,
\]
which expresses a clear separation of length scales: the array is large compared with the wavelength while remaining small compared with the propagation distance. The source location $\bz_r$ and the imaging region $\cD$ are both assumed to lie at distances of order one from the array,
\[
\operatorname{dist}(\bz_r,\mathcal A) \sim \cO(1), \qquad \operatorname{dist}(\cD,\mathcal A) \sim \cO(1).
\]
This should be understood as a scale-separation assumption defining the asymptotic regime rather than as a physical restriction on the array geometry. The resulting configuration corresponds to the standard daylight imaging setting encountered in many applications of passive wave imaging, including seismology and related inverse problems.

The primary goal of this paper is to propose estimators for the background wave speed $c_0$ for both configurations and to localize the point-like reflector in the first configuration. Figure~\ref{fig: daylight_config} illustrates the daylight configuration for the point-reflector and random-medium setups.

\begin{figure}[htbp]
    \centering

    \begin{subfigure}{0.45\textwidth}
        \centering
        \begin{tikzpicture}

        \foreach \i in {0,...,24} {
            \pgfmathsetmacro\y{4.0 - 0.1 * \i}
            \pgfmathsetmacro\x{rnd * 0.6 - 0.2}
            \draw (\x, \y) circle (2pt);
        }

        \draw[thick] (3,2.2) rectangle (3.3,3.3);
        \node at (3.5,2) {$\cA$};

        \foreach \i in {1,...,5} {
            \pgfmathsetmacro\y{2.2 + \i * (3.3 - 2.2)/6}
            \draw (3, \y) -- (3.3, \y);
        }

        \fill (5.7,3.4) circle (2pt);
        \node at (6,3.6) {$\bz_r$};
        \end{tikzpicture}
        \caption{Daylight configuration for point-like reflector. \red{The characteristic size of the reflector is much smaller than the wavelength: $\ell_r \ll \lambda$.}}
    \end{subfigure}
    \hspace{0.02\textwidth}
    \begin{subfigure}{0.45\textwidth}
        \centering
        \begin{tikzpicture}

        \foreach \i in {0,...,24} {
            \pgfmathsetmacro\y{4.0 - 0.1 * \i}
            \pgfmathsetmacro\x{rnd * 0.6 - 0.2}
            \draw (\x, \y) circle (2pt);
        }
        
        \draw[thick] (3,2.2) rectangle (3.3,3.3);
        \node at (3.5,2) {$\cA$};
        
        \foreach \i in {1,...,5} {
            \pgfmathsetmacro\y{2.2 + \i * (3.3 - 2.2)/6}
            \draw (3, \y) -- (3.3, \y);
        }
        
        \draw[thick, dashed, rounded corners=5pt] (4.0,1.6) rectangle (6.6,4.0);
        \node at (6.2,3.6) {$\cD$};
        \draw[thick]
            (4.8,2.9) --  
            (4.95,3.2) --  
            (5.2,2.8) --  
            (5.35,3.5) --  
            (5.5,2.7) --  
            (5.65,3.3) --  
            (5.8,3.0);    
        \draw[<->] (5.2,2.4) -- (5.5,2.4) node[midway, below] {$\ell_c$};
        
        \end{tikzpicture}

        \caption{Daylight configuration for random medium. \red{The correlation length of the random medium is much smaller than the wavelength: $\ell_c \ll \lambda$.}}
    \end{subfigure}

    \caption{Daylight configuration for passive imaging of two different types of medium. \red{The circles stand for the noise sources whose support is described by the function $K(\bx)$.}}
    \label{fig: daylight_config}
\end{figure}

\subsection{Cross-correlation and imaging functional}\label{subsec: cross-cor}

We start the introduction to the cross-correlation with the representation of the wave function (\ref{eq: wave_model}). By the formulation provided previously, the solution to the wave equation has the following integral representation:
\begin{align}
    u(t,\bx) = \iint ds d\by\ n(t-s,\by)G(s,\bx,\by).
\end{align}
The function $G$ is the time-dependent causal Green's function, i.e., the fundamental solution to the equation 
\begin{align}
    \left\{
    \begin{array}{l}
        \frac{1}{c^2(\bx)} \frac{\partial^2 G}{\partial t^2} - \Delta_{\bx} G=\delta(t)\delta(\bx-\by)\\
        G(t,\bx,\by) =0, \quad \forall t<0.
    \end{array}
    \right.
\end{align}
We also introduce the time-harmonic Green's function $\hat{G}_0$ for the background homogeneous medium, i.e., the solution to the Helmholtz equation 
\begin{align}
    \left\{
    \begin{array}{l}
        \Delta_\bx \hat{G}_0 + \frac{\omega^2}{c^2_0} \hat{G}_0 = -\delta(\bx - \by),\\
        \lim_{|\bx| \to \infty} |\bx| \left( \frac{\bx}{|\bx|} \cdot \nabla_{\bx} - i \frac{\omega}{c_0} \right) \hat{G}_0(\omega, \bx, \by) = 0.
    \end{array}
    \right.
\end{align}

Explicitly, we write the time-harmonic Green's function in three dimensions as
\begin{align}
    \hat{G}_0(\omega, \bx, \by) = \frac{1}{4 \pi |\bx - \by|} \exp\bra{i \frac{\omega}{c_0} |\bx - \by|}.
\end{align}

We consider the empirical cross-correlation over a time interval $(0,T)$ of the signal recorded at $\bx_1, \bx_2 \in \cA$ defined by 
\begin{align}
    C_T(\tau,\bx_1,\bx_2) = \frac{1}{T}\int_0^T dt~u(t,\bx_1)u(t+\tau,\bx_2).
\end{align} 
The statistical stability of the above quantity is well established, as shown in the following proposition \cite{Garnier_Papanicolaou_2016}:
\begin{prop}
    \begin{itemize}
    \item[(1)] The expectation of the empirical cross-correlation \( C_T \) (with respect to the statistical law of the sources) is independent of \( T \):
    \begin{align}
        \EE C_T(\tau, \bx_1, \bx_2) = C^{(1)}(\tau, \bx_1, \bx_2).
    \end{align}

    The theoretical (or statistical) cross-correlation \( C^{(1)} \) is given by
    \begin{align}\label{eq: theo cross-cor}
        C^{(1)}(\tau, \bx_1, \bx_2) 
        = \frac{1}{2\pi} \iint d\by \, d\omega \, \hat{F}(\omega) K(\by)\overline{\hat{G}\left(\omega, \bx_1, \by \right)} 
        \hat{G}\left(\omega, \bx_2, \by\right) e^{-i\omega \tau},
    \end{align}
    where \( \hat{G}(\omega, \bx, \by) \) is the time-harmonic Green's function, that is, the Fourier transform of \( G(t, \bx, \by) \), and the overline stands for complex conjugation.

    \item[(2)] The empirical cross-correlation \( C_T \) is a self-averaging quantity:
    \begin{equation}
        C_T(\tau, \bx_1, \bx_2) \stackrel{T\rightarrow{\infty}}{\longrightarrow} C^{(1)}(\tau, \bx_1, \bx_2),
    \end{equation}
    in probability with respect to the statistical law of the sources.
\end{itemize}
\end{prop}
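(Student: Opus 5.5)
Both parts rest on the representation $u(t,\bx)=\iint ds\,d\by\; n(t-s,\by)G(s,\bx,\by)$ and on the Gaussian-type second and fourth order statistics of the stationary noise.

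\emph{Part (1).} I would substitute this representation into $C_T$ and use Fubini to exchange expectation with the integrals. The covariance structure $F_\varepsilon(t-t')K(\by_1)\delta(\by_1-\by_2)$ then gives
\[
\EE[u(t,\bx_1)u(t+\tau,\bx_2)]=\iiint ds_1\,ds_2\,d\by\; F(\tau+s_1-s_2)K(\by)G(s_1,\bx_1,\by)G(s_2,\bx_2,\by).
\]
This expression is independent of $t$ by stationarity, so averaging over $(0,T)$ leaves it unchanged and $\EE C_T$ does not depend on $T$. Writing $F(\tau+s_1-s_2)=\frac{1}{2\pi}\int \hat F(\omega)e^{-i\omega(\tau+s_1-s_2)}d\omega$ and carrying out the $s_1$ and $s_2$ integrals produces $\overline{\hat G(\omega,\bx_1,\by)}\hat G(\omega,\bx_2,\by)$, because $G$ is real. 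This is exactly \eqref{eq: theo cross-cor}. (I absorb the $\varepsilon$-scaling into $F$, as in the statement.) These manipulations need $G$ to be integrable against the band-limited $\hat F$ and $K$ to be compactly supported away from the receivers, and both hold in the setup.

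\emph{Part (2).} Since $\EE C_T=C^{(1)}$, Chebyshev's inequality reduces convergence in probability to showing $\mathrm{Var}\,C_T\to0$. I would assume that $n$ is Gaussian, which is the standard assumption in \cite{Garnier_Papanicolaou_2016}; then $u$ is a zero-mean stationary Gaussian process. Isserlis' theorem gives
\[
\mathrm{Var}\,C_T=\frac{1}{T^2}\int_0^T\!\!\int_0^T dt\,dt'\,\big[R_{11}(t'-t)R_{22}(t'-t)+R_{12}(t'-t+\tau)R_{21}(t'-t-\tau)\big],
\]
where $R_{jk}(h)=\EE[u(t,\bx_j)u(t+h,\bx_k)]$. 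After the change of variables $h=t'-t$, this is bounded by $\frac{1}{T}\int_{\RR}|R_{11}R_{22}|+|R_{12}R_{21}|\,dh$.

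The main obstacle is showing that these covariance products are integrable in $h$, so that the variance is $\cO(1/T)$. In the frequency domain, $R_{jk}$ is the inverse Fourier transform of $\hat F(\omega)\int K\,\overline{\hat G_j}\hat G_k\,d\by$. If $\hat F$ is smooth and compactly supported (band $\mathcal B$) and $\hat G$ is smooth in $\omega$ for $\by$ in the support of $K$, then $R_{jk}$ decays rapidly and the bound follows. Smoothness of $\hat G$ is clear in the homogeneous-plus-weak-perturbation setting, since $\hat G$ is an explicit Born expansion. Failing that, I would instead note that $R_{jk}\in L^2$, since $\hat R_{jk}\in L^2$ by Plancherel; Cauchy--Schwarz then bounds $\int|R_{jk}R_{lm}|$ and again yields $\mathrm{Var}\,C_T\le C/T\to0$.
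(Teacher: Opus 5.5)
Your proposal is correct and follows essentially the standard argument of \cite{Garnier_Papanicolaou_2016}, which the paper cites in place of its own proof. As there, the mean comes from inserting the integral representation and the source covariance and passing to the Fourier domain, and self-averaging comes from Chebyshev together with the Gaussian fourth-moment (Isserlis) formula, which gives $\mathrm{Var}\,C_T=\cO(1/T)$. You are right to state the Gaussian hypothesis explicitly: with only the prescribed second-order statistics, part (2) can fail (take $n=A\,m$ with $m$ Gaussian and $A$ an independent random amplitude with $\EE A^2=1$). Your Plancherel/Cauchy--Schwarz integrability bound is also the cleaner of the two routes you offer.
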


With the recorded signals $\{u(t,\bx_j)\}_{t\in\RR,\, j=1,\dots,N}$, we compute the corresponding cross-correlations. The inverse problem considered in this paper is then reformulated as extracting information about the medium from the data set $\{C_T\}$. To this end, we introduce the \textit{generalized daylight migration imaging functional}
\begin{align}\label{eq: imaging functional}
    \cI(\bz_s,c_s) = \sum_{j,l=1}^N  
    C_T\!\bra{\cT_s(\bz_s,\bx_l)+\cT_s(\bz_s,\bx_j),\bx_j,\bx_l},
\end{align}
where $\bz_s$ denotes the \red{\red{search point}}, $c_s$ is the sound speed used in migration, and $\cT_s(\bz_s,\bx)$ is the generalized travel time defined by
\begin{align}\label{eq: def_time}
\cT_s(\bz_s,\bx):= \frac{|\bz_s-\bx|}{c_s}.
\end{align}
Furthermore, we denote the ratio of $c_s$ and $c_0$ by $v = c_s/c_0$. When $c_s=c_0$ (i.e. $v=1$), $\cT_s$ coincides with the travel time in the homogeneous medium,
\[
\cT_0(\bz_s,\bx):= \frac{|\bz_s-\bx|}{c_0},
\]
and the imaging functional reduces to the classical daylight migration functional.

In the dense-array limit, the imaging functional can be approximated by
\begin{align}
    \cI(\bz_s,c_s)
    = \frac{N^2}{|\cA|^2}
    \iint_{\cA^2} d\bx_1 d\bx_2\,
    C_T\!\bra{\cT_s(\bz_s,\bx_2)+\cT_s(\bz_s,\bx_1),\bx_1,\bx_2}.
\end{align}

The construction of the imaging functional motivates the analysis of the theoretical cross-correlation $C^{(1)}$. Under the assumption that the perturbation of the background medium is weak, we may invoke the Born approximation for the Green’s function,
\begin{align}\label{eq: born's approx}
    \hat{G}(\omega, \bx, \by)
    \simeq \hat{G}_0(\omega, \bx, \by)
    + \frac{\omega^2}{c_0^2}
    \int d\bz\, \widetilde{\rho}(\bz)
    \hat{G}_0(\omega, \bx, \bz)
    \hat{G}_0(\omega, \bz, \by),
\end{align}
where
\[
    \widetilde{\rho}(\bz)=
    \begin{cases}
        \sigma_r \ell_r^3 \delta(\bz-\bz_r), & \text{point-like reflector},\\
        \rho_\mu(\bz), & \text{random medium}.
    \end{cases}
\]

Combining~\eqref{eq: theo cross-cor}, \eqref{eq: born's approx}, and the scaling regime introduced in the previous section, we obtain the decomposition
\begin{align}
     C^{(1)}(\tau,\bx_1,\bx_2)
     \simeq C^{(1)}_{0}(\tau,\bx_1,\bx_2)
     + C^{(1)}_{\text{\Rn{1}}}(\tau,\bx_1,\bx_2)
     + C^{(1)}_{\text{\Rn{2}}}(\tau,\bx_1,\bx_2),
\end{align}
where
\begin{align*}
    & C^{(1)}_{0}(\tau,\bx_1,\bx_2)
    = \frac{1}{2\pi}
    \iint d\by d\omega\,
    K(\by)\hat{F}(\omega)
    \overline{\hat{G}_0}\!\bra{\tfrac{\omega}{\varepsilon},\bx_1,\by}
    \hat{G}_0\!\bra{\tfrac{\omega}{\varepsilon},\bx_2,\by}
    e^{-i\frac{\omega\tau}{\varepsilon}},\\[0.3em]
    & C^{(1)}_{\text{\Rn{1}}}(\tau,\bx_1,\bx_2)
    = \frac{1}{2\pi c_0^2\varepsilon^2}
    \int d\bz\, \widetilde{\rho}(\bz)
    \iint d\by d\omega\,
    \omega^2 K(\by)\hat{F}(\omega)
    \hat{G}_0\!\bra{\tfrac{\omega}{\varepsilon},\bx_2,\bz}
    \hat{G}_0\!\bra{\tfrac{\omega}{\varepsilon},\bz,\by}
    \overline{\hat{G}_0}\!\bra{\tfrac{\omega}{\varepsilon},\bx_1,\by}
    e^{-i\frac{\omega\tau}{\varepsilon}},\\[0.3em]
    & C^{(1)}_{\text{\Rn{2}}}(\tau,\bx_1,\bx_2)
    = \frac{1}{2\pi c_0^2\varepsilon^2}
    \int d\bz\, \widetilde{\rho}(\bz)
    \iint d\by d\omega\,
    \omega^2 K(\by)\hat{F}(\omega)
    \overline{\hat{G}_0}\!\bra{\tfrac{\omega}{\varepsilon},\bx_1,\bz}
    \overline{\hat{G}_0}\!\bra{\tfrac{\omega}{\varepsilon},\bz,\by}
    \hat{G}_0\!\bra{\tfrac{\omega}{\varepsilon},\bx_2,\by}
    e^{-i\frac{\omega\tau}{\varepsilon}}.
\end{align*}

The term \(C^{(1)}_{0}\) corresponds to the contribution of the direct waves, and it involves the rapid phase
\[
\frac{\omega}{\varepsilon} \Bigl( \frac{|\bx_2-\by|}{c_0} - \frac{|\bx_1-\by|}{c_0} - \tau \Bigr).
\]
For fixed \(\bx_1,\bx_2 \in \cA\), this contribution is supported on the time interval \([-\Delta \tau_0, \Delta \tau_0]\), where
\[
\Delta \tau_0
:= \frac{\sup_{\by \in \operatorname{supp}(K)} \bigl|\,|\by-\bx_1| - |\by-\bx_2|\,\bigr|}{c_0}
\le \frac{|\bx_1-\bx_2|}{c_0}
\le \frac{\operatorname{diam}(\cA)}{c_0},
\]
which is of order \(\mathcal{O}(\varepsilon^{\frac{1}{2}})\) in the regime under consideration.

In contrast, the scattered-wave contributions \(C^{(1)}_{\text{\Rn{1}}}\) and
\(C^{(1)}_{\text{\Rn{2}}}\) involve the rapid phases
\[
\frac{\omega}{\varepsilon}
\Bigl(
\frac{|\bx_2-\bz|}{c_0}
+ \frac{|\bz-\by|}{c_0}
- \frac{|\bx_1-\by|}{c_0}
- \tau
\Bigr)
\quad\text{and}\quad
\frac{\omega}{\varepsilon}
\Bigl(
\frac{|\bx_2-\by|}{c_0}
- \frac{|\bz-\by|}{c_0}
- \frac{|\bx_1-\bz|}{c_0}
- \tau
\Bigr),
\]
respectively. Recall that we consider the daylight configuration in this paper, these terms are supported on time intervals satisfying
\[
|\tau| \asymp \frac{2\,\operatorname{dist}(\bz_r,\cA)}{c_0}
\quad \text{or} \quad
|\tau| \asymp \frac{2\,\operatorname{dist}(\cD,\cA)}{c_0},
\]
both of which are quantities of order one. Consequently, within the considered regime, the direct and scattered contributions to the cross-correlation are well separated in time. Therefore, the component $C^{(1)}_{0}$ can be effectively removed from the recorded data by time-windowing.

In the remainder of the analysis, we focus exclusively on the scattered-wave contributions. With a slight abuse of notation, we therefore redefine
\begin{align}
    C^{(1)}(\tau,\bx_1,\bx_2)
    = C^{(1)}_{\text{\Rn{1}}}(\tau,\bx_1,\bx_2)
    + C^{(1)}_{\text{\Rn{2}}}(\tau,\bx_1,\bx_2).
\end{align}

\section{Guide star assisted wave speed estimation}\label{sec: guide star}
In this section, we analyze the properties of the imaging functional $\cI(\bz_s,c_s)$ defined in \eqref{eq: imaging functional} when a point-like reflector is present in the medium. The analysis quantitatively characterizes the influence of wave speed mismatch on the point spread function, including the phenomenon of focusing point shift. Based on these insights, we propose methods for estimating the background wave speed $c_0$ and localizing the point-like reflector. The imaging functional considered here generalizes that of \cite{Garnier_Papanicolaou_2016}. Accordingly, the present analysis may be viewed as an extension of the framework developed therein to the problem of background wave-speed estimation.

\subsection{Analysis of the imaging functional}\label{sec: imaging functional}\label{subsec: imaging func}
To present the properties of the imaging functional, we first introduce a coordinate frame $\bra{\hat{\be}_1,\hat{\be}_2,\hat{\be}_3}$ satisfying 
\begin{itemize}
    \item the sensor array $\cA\subset\operatorname{span}\{\hat{\be}_1,\hat{\be}_2\}$,
    \item the reflector $\bz_r\in\operatorname{span}\{\hat{\be}_1,\hat{\be}_3\}$ with coordinates $\bz_r = (z_{r,1},0,z_{r,3})$.
\end{itemize}

\blue{Throughout the analysis, we adopt the vector decomposition that $\boldsymbol{\zeta} = (\boldsymbol{\zeta}^\perp, \zeta^{\parallel})$, where $\boldsymbol{\zeta}^\perp = (\zeta_1,\zeta_2)$ and $\zeta^{\parallel} = \zeta_3$.} Recall the regime we adopt for the sensor array, for $\bx_1\in\cA$, we write $\bx_1 = \varepsilon^{\frac{1}{2}}\widetilde{\bx_1} = \varepsilon^{\frac{1}{2}}\bra{\widetilde{x_{1,1}},\widetilde{x_{1,2}},0}$.

We also introduce the following orthonormal frame $\{\hat{\bf}_1,\hat{\bf}_2,\hat{\bf}_3\}$:
\begin{align}
    \hat{\bf}_1 = \frac{1}{|\bz_r|}\bra{z_{r,3}\hat{\be}_1-z_{r,1}\hat{\be}_3}, \quad \hat{\bf}_2 = \hat{\be}_2, \quad \hat{\bf}_3 = \frac{1}{|\bz_r|}\bra{z_{r,1}\hat{\be}_1+z_{r,3}\hat{\be}_3}.
\end{align}

Different from the traditional daylight migration imaging functional, in $\cI(\bz_s,c_s)$, the speed of sound used for backpropagation $c_s$ can have a mismatch with $c_0$. This mismatch can lead to a shift in the focusing point of the image derived from $\cI(\bz_s,c_s)$. A similar phenomenon is observed in the active imaging configuration, see \cite{Josselin_Garnier_2025_Inverse_Problems_and_Imaging}. We denote the focusing point in the image as $\bz_f(v)$, and $\bz_f$ when there is no ambiguity. We define the map $\phi_v:\RR^3 \to \RR^3$ as 
\begin{align}\label{eq: phi_mapping}
    \phi_v(\bz_r) = \bra{v^2\bz_r^\perp,v\sqrt{(z_r^{\parallel})^2+(1-v^2)|\bz_r^\perp|^2}}.
\end{align}
\blue{
In Theorem \ref{thm: functional}, we show that when the point-like reflector is positioned at $\bz_r$, under certain conditions, the focusing point is located at 
\begin{align}
    \bz_f = \phi_v(\bz_r), \quad v = \frac{c_s}{c_0}.
\end{align}
}
Before diving into the mathematical calculation, we first introduce an orthonormal frame $\{\hat{\bg}_1,\hat{\bg}_2,\hat{\bg}_3\}$ based on $\bz_f$ as follows:
\begin{align}
    \hat{\bg}_1 = \frac{1}{|\bz_f|}\bra{z_{f,3}\hat{\be}_1-z_{f,1}\hat{\be}_3}, \quad \hat{\bg}_2 = \hat{\be}_2, \quad \hat{\bg}_3 = \frac{1}{|\bz_f|}\bra{z_{f,1}\hat{\be}_1+z_{f,3}\hat{\be}_3}.
\end{align}
We illustrate the three coordinate systems in Figure \ref{fig: coordinates}.
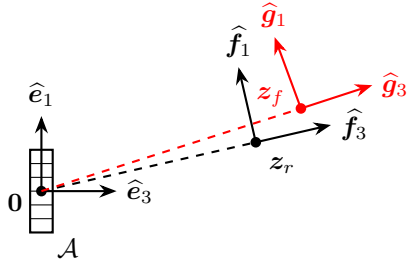
\begin{figure}[!ht]
    \centering
    \begin{tikzpicture}

    \draw[thick] (3,2.2) rectangle (3.3,3.3); 
    \node at (3.5,2) {$\cA$};               
 
    \foreach \i in {1,...,5} {
        \pgfmathsetmacro\y{2.2 + \i * (3.3 - 2.2)/6}
        \draw (3, \y) -- (3.3, \y);
    }

    \fill (6,3.4) circle (2pt);
    \node at (6.35,3.1) {$\bz_r$};

    \fill (3.15,2.75) circle (2pt);
    \draw[->,thick] (3.15,2.75) -- (3.15,3.75) node[above] {$\hat{\be}_1$};
    \draw[->,thick] (3.15,2.75) -- (4.15,2.75) node[right] {$\hat{\be}_3$}; 
    \node at (2.8,2.6) {$\mathbf{0}$};

    \draw[dashed,thick] (3.15,2.75) -- (6,3.4);

    \draw[->,thick] (6,3.4) -- (5.77,4.4) node[above] {$\hat{\bf}_1$};
    \draw[->,thick] (6,3.4) -- (7,3.63) node[right] {$\hat{\bf}_3$};
    
    \fill[red] (6.6,3.85) circle (2pt);
    \node[red] at (6.2,4) {$\bz_f$};
    \draw[dashed,thick,red] (3.15,2.75) -- (6.6,3.85);

    \draw[->,thick,red] (6.6,3.85) -- (6.25,4.8) node[above] {\textcolor{red}{$\hat{\bg}_1$}};
    \draw[->,thick,red] (6.6,3.85) -- (7.55,4.16) node[right] {\textcolor{red}{$\hat{\bg}_3$}};

    \end{tikzpicture}
    \caption{The sensor array and three reference frames $(\hat{\be}_1,\hat{\be}_3)$, $(\hat{\bf}_1,\hat{\bf}_3)$, and $(\hat{\bg}_1,\hat{\bg}_3)$. The unit vector $\hat{\be}_2=\hat{\bf}_2=\hat{\bg}_2$ are orthogonal to the plane of the figure.}
    \label{fig: coordinates}
\end{figure}

In the case that $c_s=c_0$, it is known that the resolution of the image depends on the tilt of the array relative to $\hat{\bf}_3$, see \cite{Garnier_Papanicolaou_2016}. In the analysis, we observe that the imaging function also depends on the tilt of the array relative to $\hat{\bg}_3$. Therefore, for the convenience of presentation, we introduce the cosine of the tilt angles:
\begin{align}\label{eq: alpha_def}
    \alpha_r = \lb\hat{\bf}_3,\hat{\be}_3\rb = \frac{z_{r,3}}{|\bz_r|}, \quad \alpha_f = \lb\hat{\bg}_3,\hat{\be}_3\rb = \frac{z_{f,3}}{|\bz_f|}.
\end{align}
We define the normalized point spread function as 
\begin{align}
    \cG_{\alpha_r,\alpha_f}(\xi_1,\xi_2,\xi_3,\xi_4) = \frac{1}{|\cA_0|} \int_{\cA_0} d u_1 du_2 ~\exp\bra{-i(\alpha_fu_1\xi_1+u_2\xi_2)-i\frac{u_1^2\alpha_r^2+u_2^2}{2}\xi_3\red{+}i\frac{u_1^2+u_2^2}{2}\xi_4 }.
\end{align}
In the theoretical results of the imaging functional, $\{\xi_j\}$ correspond to different physical meanings. $\xi_1$ and $\xi_2$ is related to the transverse direction, $\xi_3$ corresponds to the range direction and the influence of the wave speed mismatch is reflected through $\xi_4$. We observe that the function $\cG_{\alpha_r,\alpha_f}$ is a generalization of the point spread function $\cG_{\alpha_r}$ defined in Section 6, \cite{Garnier_Papanicolaou_2016}. The two functions have similar qualitative behaviors when $\xi_3 =0$ (or $\xi_4=0$). When $\xi_3,\xi_4\ne0$, the behavior of the function $\cG_{\alpha_r,\alpha_f}$ is complicated, which leads to the complicated coupling phenomenon of the range component and the mismatch of the wave speed in the point spread function, which is discussed in \red{Section \ref{prop: functional}}.

In the following, we show the asymptotic expression for the generalized imaging function under the regime we consider. Through stationary phase analysis, we quantify the dependence of the imaging functional on the following factors:
\begin{itemize}
    \item the normalized array radius $a_0$,
    \item the distance of the reflector to the center of the sensor array $|\bz_r|$,
    \item the central frequency $\omega_c = \frac{\omega_0}{\varepsilon}$,
    \item the effective bandwidth \blue{$B = \frac{B_H}{\varepsilon}$},
    \item the chosen speed of sound for migration $c_s$.
\end{itemize}
\blue{To properly describe the behavior of the imaging functional near its focusing region, it is useful to introduce a resolution-scale parameterization of the \red{search point}. This parameterization is motivated by the expected anisotropic resolution of the imaging problem. In particular, the cross-range resolution is governed by the Rayleigh criterion and scales as $\lambda L/a$, whereas the range resolution is governed by the signal bandwidth and scales as $c_0/B$. With the parameterization, we state the main results.}
\begin{thm}\label{thm: functional}
    Let us assume that $\hat{F}(\omega)$ has the form 
    \begin{align}\label{eq: profile F_1}
        \hat{F}(\omega) = \frac{1}{B_H}\bra{\hat{F}_{0}\bra{\frac{\omega_0-\omega}{B_H}}+\hat{F}_{0}\bra{\frac{\omega_0+\omega}{B_H}} },
    \end{align}
where $\omega_0, B_H>0$. The generalized daylight imaging functional is non-negligible (i.e. it presents a well-defined peak) when the searching wave speed $c_s$ satisfies \blue{
\begin{align}\label{eq: cond_v}
    v = \frac{c_s}{c_0} \in \left\{
    \begin{array}{ll}
        \left(0, \sqrt{1+\bra{\frac{z_{r,3}}{|\bz_{r}^\perp|}}^2}\right], & \text{if } \bz_{r}^\perp\ne0,\\
        (0,\infty), & \text{if } \bz_{r}^\perp  =0.
    \end{array}
    \right.
\end{align}}
For a given $c_s$ that satisfying (\ref{eq: cond_v}), as a function of $\bz_s$, the imaging functional $\cI(\bz_s,c_s)$ defined by (\ref{eq: imaging functional}) has the focusing point
\begin{align}\label{eq: z_f}
    \bz_f = \phi_v(\bz_r) = \bra{v^2\bz_r^\perp,v\sqrt{\bra{z_r^{\parallel}}^2+(1-v^2)|\bz_r^\perp|^2}}.
\end{align}
At the \red{search point} $\bz_s$:
\begin{align}
    \bz_s = \bz_f+\br_s = \bz_f + \varepsilon^{\frac{1}{2}}v^2 \eta_1 \hat{\bg}_1+\varepsilon^{\frac{1}{2}}v^2 \eta_2 \hat{\bg}_2 +\varepsilon v\eta_3\hat{\bg}_3,
\end{align}
the imaging functional has the asymptotic form (as $\varepsilon\rightarrow 0$):
\begin{align}\label{eq: functional result}
\cI(\bz_s,c_s) \approx & \frac{\sigma_r\ell_r^3N^2}{2^6\pi^3c_0\varepsilon}
\frac{\cK(\mathbf{0},\bz_r)}{|\bz_r|^2} \cdot \caP(\br_s,c_s),
\end{align}
where \blue{
\begin{align}
    \cK(\bx,\bz) = \int_{0}^{\infty} dl~K\bra{\bx+\frac{\bx-\bz}{|\bx-\bz|}  l} ,
\end{align} }
and $\caP(\br_s,c_s)$ is the point spread function defined by 
\begin{align}
   \caP(\br_s,c_s) =  \operatorname{Re} \left\{
\int_\cB d\omega ~ \bra{-i\omega\hat{F}(\omega)}\exp\bra{-i\frac{\omega}{c_0}\bra{2\eta_3+\frac{c_s^2(\eta_1^2+\eta_2^2)}{c_0^2|\bz_r|}}}\right.\notag \\
\times \left.\cG^2_{\alpha_r,\alpha_f}\bra{\frac{-a_0\omega}{c_0|\bz_r|}\eta_1,
\frac{-a_0\omega}{c_0|\bz_r|}\eta_2,0,
\frac{-a_0^2\omega}{c_0|\bz_r|}\bra{\bra{\frac{c_0}{c_s}}^2-1}} \right\}.
\end{align}
If additionally $B_H \ll \omega_0$, then the point spread function has the form 
\begin{align}
    \caP(\br_s,c_s) \simeq&  4\pi\omega_0 \operatorname{Re} \left\{
 -i\exp\bra{-i\frac{\omega_0}{c_0}\bra{2\eta_3+\frac{c_s^2(\eta_1^2+\eta_2^2)}{c_0^2|\bz_r|}}}F_{0}\bra{-\frac{B_H}{c_0}\bra{2\eta_3+\frac{c_s^2(\eta_1^2+\eta_2^2)}{c_0^2|\bz_r|}}}\right.\notag \\
&\times \left.\cG^2_{\alpha_r,\alpha_f}\bra{\frac{-a_0\omega_0}{c_0|\bz_r|}\eta_1,
\frac{-a_0\omega_0}{c_0|\bz_r|}\eta_2,0,
\frac{-a_0^2\omega_0}{c_0|\bz_r|}\bra{\bra{\frac{c_0}{c_s}}^2-1}} \right\},
\end{align}
whose slowly varying envelope is given by
\begin{align}\label{eq: envelope wide}
    \caP_E(\br_s,c_s) \simeq& 2\pi \omega_0\left| F_{0}\bra{-\frac{B_H}{c_0}\bra{2\eta_3+\frac{c_s^2(\eta_1^2+\eta_2^2)}{c_0^2|\bz_r|}}}\right|\notag\\
    &\times\left|\cG_{\alpha_r,\alpha_f}\bra{\frac{-a_0\omega_0}{c_0|\bz_r|}\eta_1,
\frac{-a_0\omega_0}{c_0|\bz_r|}\eta_2,0,
\frac{-a_0^2\omega_0}{c_0|\bz_r|}\bra{\bra{\frac{c_0}{c_s}}^2-1}}\right|^2.
\end{align}
\end{thm}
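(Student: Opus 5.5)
Substitute the Born-approximated cross-correlation $C^{(1)}=C^{(1)}_{\text{\Rn{1}}}+C^{(1)}_{\text{\Rn{2}}}$ with $\widetilde\rho=\sigma_r\ell_r^3\delta(\cdot-\bz_r)$ into the dense-array form of $\cI(\bz_s,c_s)$, replacing $C_T$ by $C^{(1)}$ via the self-averaging property. This gives an integral over $\bx_1,\bx_2\in\cA$, $\by\in\operatorname{supp}K$ and $\omega$ with rapid phase (for term \Rn{1})
\[
\frac{\omega}{\varepsilon}\Bigl(\frac{|\bx_2-\bz_r|+|\bz_r-\by|-|\bx_1-\by|}{c_0}-\frac{|\bz_s-\bx_1|+|\bz_s-\bx_2|}{c_s}\Bigr).
\]
Term \Rn{2} carries the conjugate structure with the opposite sign of $\tau$. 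Since $\tau>0$ in the migration, it has no stationary point and is negligible, or else it contributes the complex conjugate that produces the $\operatorname{Re}$.

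Carry out the $\by$ integration first by stationary phase in the variables transverse to the ray. The phase $|\bz_r-\by|-|\bx_1-\by|$ is stationary when $\by$ lies on the ray from $\bz_r$ through $\bx_1$ beyond the array. Since $\bx_1=\cO(\varepsilon^{1/2})$, this ray is the one from $\bz_r$ through $\mathbf 0$ to leading order. The transverse Gaussian integral produces a factor $\propto\varepsilon/\omega$ times $K$ integrated along the ray, namely $\cK(\mathbf 0,\bz_r)$, and the phase collapses to $|\bx_2-\bz_r|-|\bx_1-\bz_r|$ up to a constant. Combining this with the amplitudes $1/(4\pi)^3$ and $1/|\bz_r|^2$ and with $\omega^2/(c_0^2\varepsilon^2)$ yields the prefactor $\sigma_r\ell_r^3N^2\cK(\mathbf0,\bz_r)/(2^6\pi^3c_0\varepsilon|\bz_r|^2)$ together with the factor $-i\omega\hat F(\omega)$.

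Next expand the phases in $\bx_j=\varepsilon^{1/2}\widetilde{\bx}_j$ to second order, using
\[
|\bz-\bx|=|\bz|-\frac{\bz^\perp\cdot\bx^\perp}{|\bz|}+\frac{|\bx|^2-(\bz^\perp\cdot\bx^\perp/|\bz|)^2}{2|\bz|}+\cdots.
\]
The phase separates into a sum of a $\bx_1$ term and a $\bx_2$ term, which is why $\cG$ appears squared. The $\bx_1$ term has the opposite sign, but after the $\by$-stationarity it combines with $+|\bz_s-\bx_1|/c_s$ symmetrically to the $\bx_2$ term. The focusing conditions are that the $\cO(\varepsilon^{-1/2})$ linear terms and the $\cO(1)$ quadratic terms cancel for all $\widetilde\bx$ in a form compatible with $\cG$. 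Cancelling the linear terms requires $\bz_r^\perp/(c_0|\bz_r|)=\bz_f^\perp/(c_s|\bz_f|)$, and matching the orders-zero travel times requires $|\bz_r|/c_0=|\bz_f|/c_s$. Solving gives $|\bz_f|=v|\bz_r|$ and $\bz_f^\perp=v^2\bz_r^\perp$, hence $z_f^\parallel=v\sqrt{(z_r^\parallel)^2+(1-v^2)|\bz_r^\perp|^2}$, which is $\phi_v(\bz_r)$. This $z_f^\parallel$ is real exactly under condition \eqref{eq: cond_v}; outside it no stationary focusing exists and the functional is negligible.

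The residual quadratic terms are
\[
\frac{|\widetilde\bx|^2-(\hat{\bf}_3^\perp\cdot\widetilde\bx)^2}{2c_0|\bz_r|}-\frac{|\widetilde\bx|^2-(\hat{\bg}_3^\perp\cdot\widetilde\bx)^2}{2c_s|\bz_f|}.
\]
In the frame where $\widetilde x_1$ is along $\hat\be_1$, using $1-(\hat{\bf}_3\cdot\hat\be_1)^2=\alpha_r^2$, these should organize into the $\xi_3$ ($\alpha_r$-weighted, here $0$ at $\eta_3$ scale) and $\xi_4$ pieces, with $\xi_4\propto(c_0^2/c_s^2-1)$. This algebra, using $c_s|\bz_f|=v^2c_0|\bz_r|$ and the identity relating $\alpha_f$ to $\alpha_r$ and $v$, is the step I expect to be the main obstacle: one must verify that the $\alpha_f$-dependent quadratic piece cancels exactly against the $\alpha_r$ piece so that only the isotropic $(u_1^2+u_2^2)\xi_4$ term remains.

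Finally, insert $\bz_s=\bz_f+\br_s$ and expand $|\bz_s-\bx|/c_s$. The $\varepsilon^{1/2}v^2\eta_{1,2}\hat\bg_{1,2}$ components give linear terms in $\widetilde\bx$, which are the $\xi_1,\xi_2$ arguments with $\alpha_f$ from projecting $\hat\bg_1$ on the array. The $\varepsilon v\eta_3\hat\bg_3$ component and the square of the transverse shift give the common phase $2\eta_3+c_s^2(\eta_1^2+\eta_2^2)/(c_0^2|\bz_r|)$. Rescaling $\widetilde\bx=a_0u$ yields $\caP$. For $B_H\ll\omega_0$, freeze $\omega=\omega_0$ in $\cG$ and in the amplitude, and integrate $\hat F$ against the linear phase to obtain $F_0$ by inverse Fourier transform, which gives the stated narrowband formula. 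Taking the modulus of the slowly varying factors, the carrier averaging out, gives the envelope $\caP_E$.
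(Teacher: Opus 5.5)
Your proposal follows essentially the same route as the paper's proof: Born expansion, transverse stationary phase in $\by$ about the ray from $\bz_r$ through the array giving $\cK(\mathbf{0},\bz_r)$, focusing conditions from matching the $\cO(1)$ and $\cO(\varepsilon^{1/2})$ travel times, a second-order expansion in $\widetilde{\bx}_j$ producing $\cG^2_{\alpha_r,\alpha_f}$, and freezing $\omega=\omega_0$ in the narrowband limit. It is correct up to two small slips: after the $\by$-stationarity the phase contains the sum $|\bx_2-\bz_r|+|\bx_1-\bz_r|$, not a difference, and with $\tau>0$ the term $C^{(1)}_{\text{\Rn{2}}}$ is simply negligible, with the $\operatorname{Re}$ coming from the negative-frequency half of the $\omega$-integral (the paper instead symmetrizes in $\pm\tau$ and picks up the conjugate from $\cI_{\text{\Rn{2}}-}$).

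The step you flag as the main obstacle is immediate. The linear focusing condition gives $\langle\hat{\bg}_3,\widetilde{\bx}\rangle=v\langle\hat{\bf}_3,\widetilde{\bx}\rangle$, so with $c_s|\bz_f|=v^2c_0|\bz_r|$ the anisotropic quadratic pieces cancel exactly. What remains is $\frac{|\widetilde{\bx}|^2}{2c_0|\bz_r|}(1-v^{-2})$, which is precisely the $\xi_4$ term; on the $\bx_1$ side the paper gets the same result from the Gaussian $s_1,s_2$ integrals, where the $s_3$-dependence drops out.
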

\begin{proof}
    See \ref{append_3.1}.
\end{proof}
\blue{
In the theorem above, we use the notion of the \emph{slowly varying envelope}. Namely, if a real-valued function \(p(t)\) admits the representation
\[
p(t)=e^{-i\omega_0 t}p_B(t)+\mathrm{c.c.},
\]
where the bandwidth of \(p_B\) is much smaller than the carrier frequency \(\omega_0\), then the slowly varying envelope of \(p(t)\) is defined as \(|p_B(t)|\). We refer the reader to Section~13.4 of \cite{Garnier_Papanicolaou_2016} for a more detailed discussion.}

\blue{
We note that the assumption (\ref{eq: profile F_1}) indicates the noise sources have a central frequency $\omega_0/\varepsilon$ and an effective bandwidth $B_H/\varepsilon$. In this case, the point spread function has a simple form as shown in (\ref{eq: functional result}).
Here, we emphasize that the expression above is valid when the bandwidth of the noise sources is not very small. When the bandwidth is of order $\cO(1)$ \red{(which means it is much smaller than the carrier frequency)}, the expression of the imaging functional is slightly different from the previous one.}
\begin{prop}\label{prop: functional}
    Let us assume that $\hat{F}(\omega)$ has the form 
    \begin{align}\label{eq: profile F_2}
        \hat{F}(\omega) = \frac{1}{\varepsilon B_0}\bra{\hat{F}_{0}\bra{\frac{\omega_0-\omega}{\varepsilon B_0}}+\hat{F}_{0}\bra{\frac{\omega_0+\omega}{\varepsilon B_0}} },
    \end{align}
where $\omega_0, B_0>0$. Under the condition (\ref{eq: cond_v}), 
at the \red{search point} $\bz_s$:
\begin{align}
    \bz_s = \bz_f+\br_s = \bz_f + \varepsilon^{\frac{1}{2}}v^2 \eta_1 \hat{\bg}_1+\varepsilon^{\frac{1}{2}}v^2 \eta_2 \hat{\bg}_2 + v\eta_3\hat{\bg}_3,
\end{align}
such that $|\br_s|\ll|\bz_r|$, the imaging functional has the asymptotic form (as $\varepsilon\rightarrow 0$):
\begin{align}\label{eq: functional result narrow}
\cI(\bz_s,c_s) \approx & \frac{\sigma_r\ell_r^3N^2}{2^6\pi^3c_0\varepsilon}
\frac{\cK(\mathbf{0},\bz_r)}{|\bz_r|^2} \cdot \widetilde \caP(\br_s,c_s),
\end{align}
where $\widetilde \caP(\br_s,c_s)$ is the point spread function defined by 
\begin{align}\label{eq: functional narrow}
    \widetilde \caP(\br_s,c_s)\blue{\simeq}&4\pi\omega_0\operatorname{Re}\left\{ -i\exp\bra{-i\frac{\omega_0}{c_0}\bra{2\frac{\eta_3}{\varepsilon}+\frac{c_s^2(\eta_1^2+\eta_2^2)}{c_0^2|\bz_r|}}} F_{0}\bra{-\frac{2B_0\eta_3}{c_0}}\right. \notag\\
    &\left. \times \cG^2_{\alpha_r,\alpha_f}\bra{\frac{-a_0\omega_0}{c_0|\bz_r|}\eta_1,
\frac{-a_0\omega_0}{c_0|\bz_r|}\eta_2,\frac{-a_0^2\omega_0}{c_0|\bz_r|^2}\eta_3,
\frac{-a_0^2\omega_0}{c_0|\bz_r|}\bra{\bra{\frac{c_0}{c_s}}^2-1}}\right\}.
\end{align}
If $B_0$ is \red{much} larger than the critical value \blue{
\begin{align}
    B_c := \frac{\omega_0}{2}\frac{a_0^2}{|\bz_r|^2} =  \frac{\omega_c}{2}\frac{a^2}{|\bz_r|^2} ,
\end{align}}
then the point spread function has the form 
\begin{align}\label{eq: functional narrow larger}
    \widetilde \caP(\br_s,c_s)\blue{\simeq}&4\pi\omega_0\operatorname{Re}\left\{ -i\exp\bra{-i\frac{\omega_0}{c_0}\bra{2\frac{\eta_3}{\varepsilon}+\frac{c_s^2(\eta_1^2+\eta_2^2)}{c_0^2|\bz_r|}}} F_{0}\bra{-\frac{2B_0\eta_3}{c_0}}\right. \notag\\
    &\left. \times \cG^2_{\alpha_r,\alpha_f}\bra{\frac{-a_0\omega_0}{c_0|\bz_r|}\eta_1,
\frac{-a_0\omega_0}{c_0|\bz_r|}\eta_2,0,
\frac{-a_0^2\omega_0}{c_0|\bz_r|}\bra{\bra{\frac{c_0}{c_s}}^2-1}}\right\},
\end{align}
whose slowly varying envelope is given by 
\begin{align}{\label{eq: envelope wide narrow}}
    \blue{\widetilde{\caP}_E(\br_s,c_s) \simeq} 2\pi\omega_0\left|F_{0}\bra{-\frac{2B_0\eta_3}{c_0}} \right| \cdot\left| \cG_{\alpha_r,\alpha_f}\bra{\frac{-a_0\omega_0}{c_0|\bz_r|}\eta_1,
\frac{-a_0\omega_0}{c_0|\bz_r|}\eta_2,0,
\frac{-a_0^2\omega_0}{c_0|\bz_r|}\bra{\bra{\frac{c_0}{c_s}}^2-1}}\right|^2.
\end{align}
If $B_0$ is \red{much} smaller than the critical value $B_c$, then the point spread function has the form 
\begin{align}\label{eq: functional narrow smaller}
        \widetilde \caP(\br_s,c_s)\blue{\simeq}&4\pi\omega_0\operatorname{Re}\left\{ -i\exp\bra{-i\frac{\omega_0}{c_0}\bra{2\frac{\eta_3}{\varepsilon}+\frac{c_s^2(\eta_1^2+\eta_2^2)}{c_0^2|\bz_r|}}} F_{0}\bra{0}\right. \notag\\
    &\left. \times \cG^2_{\alpha_r,\alpha_f}\bra{\frac{-a_0\omega_0}{c_0|\bz_r|}\eta_1,
\frac{-a_0\omega_0}{c_0|\bz_r|}\eta_2,\frac{-a_0^2\omega_0}{c_0|\bz_r|^2}\eta_3,
\frac{-a_0^2\omega_0}{c_0|\bz_r|}\bra{\bra{\frac{c_0}{c_s}}^2-1}}\right\},
\end{align}
whose slowly varying envelope is given by 
\begin{align}{\label{eq: envelope narrow}}
    \blue{\widetilde{\caP}_E(\br_s,c_s)\simeq} 2\pi\omega_0\left|F_{0}\bra{0} \right| \cdot\left| \cG_{\alpha_r,\alpha_f}\bra{\frac{-a_0\omega_0}{c_0|\bz_r|}\eta_1,
\frac{-a_0\omega_0}{c_0|\bz_r|}\eta_2,\frac{-a_0^2\omega_0}{c_0|\bz_r|^2}\eta_3,
\frac{-a_0^2\omega_0}{c_0|\bz_r|}\bra{\bra{\frac{c_0}{c_s}}^2-1}}\right|^2.
\end{align}
\begin{proof}
    See \ref{append_3.2}.
\end{proof}


\end{prop}
\blue{We note that the assumption (\ref{eq: profile F_2}) indicates the noise sources have a central frequency $\omega_0/\varepsilon$ and an effective bandwidth $B_0$. \red{The critical value $B_c$ characterizes the transition between bandwidth-dominated and aperture-dominated range resolution.} 

Here, we discuss and interpret the results derived in the theorem and proposition above. In the special case $c_s=c_0$, the obtained expressions recover the classical resolution results for daylight passive imaging. Specifically, we analyze both the cross-range and range resolutions, defined as the widths of the point spread function (PSF) in the transverse and longitudinal directions, respectively. We refer to a more detailed introduction to \cite{Garnier_Papanicolaou_2016}. For completness, we summarize the results \blue{in physical parameters} as follows: }
\begin{itemize}
    \item the frames $\{\hat{\bf}_1,\hat{\bf}_2,\hat{\bf}_3\}$ and $\{\hat{\bg}_1,\hat{\bg}_2,\hat{\bg}_3\}$ coincide, and the image is centered at $\bz_r$,
    \item the cross-range resolution is: $\frac{\lambda|\bz_r|}{a\alpha_r}$ in the $\hat{\bf}_1$ direction and $\frac{\lambda|\bz_r|}{a}$ in the $\hat{\bf}_2$ direction,
    \item \red{when the bandwidth is much smaller, respectively much larger, than
$B_c$, the range resolution is determined by the finite aperture,
respectively by the frequency bandwidth.} More precisely, the range resolution is $\frac{\varepsilon c_0}{2B_H}$ in the broadband case ($B\gg B_c$) and $\frac{\lambda|\bz_r|^2}{a^2}$ in the narrowband case ($B\ll B_c$). \blue{This means that range resolution is determined by the time-harmonic longitudinal Rayleigh formula (see \cite[p.~415]{BornWolf1993}) in the narrowband case and by the bandwidth in the broadband case, when $c_0/B$ becomes smaller than the Rayleigh formula.}
\end{itemize}

The effect caused by the mismatch of $c_s$ and $c_0$ can be summarized in the following.
\begin{itemize}
    \item When the point-like reflector has moderate perpendicular distance to the $\hat{\be}_3$ axis, i.e, the condition \ref{eq: cond_v} is satisfied, the image is not negligible.
    
    \item In the broadband case, the image focuses at the point $\bz_f=\phi_v(\bz_r)$. The vectors $\hat{\bg}_1$ and $\hat{\bg}_2$ define the transverse directions of the image, and $\hat{\bg}_3$ defines the range direction. Furthermore, the resolution is the same as described in the case $c_s=c_0$.
    
    \item In the narrowband case, the mismatch couples with the range direction behavior, which causes further shift of the exact focusing point of the image in the range direction.
    We illustrate this phenomenon by plotting the function $|\cG_{\alpha_r,\alpha_f}|^2$ for three different choices of $c_s$ in Figure \ref{fig: illu_G_34}. 

    \item The mismatch causes the reduction of the peak value at the focusing point for all cases except for the narrowband case with $\alpha_r=1$. This fact paves the way for the estimation of $c_0$ in the general case as discussed in the next section.
    \item The narrowband case with $\alpha_r=1$ is a special case where the range component couples with the mismatch in the following way:
    \begin{align*}
        \blue{\widetilde{\caP}_E(\br_s,c_s)} \sim \left| \cG_{\alpha_r,\alpha_f}\bra{\frac{-a_0\omega_0}{c_0|\bz_r|}\eta_1,
        \frac{-a_0\omega_0}{c_0|\bz_r|}\eta_2,\frac{-a_0^2\omega_0}{c_0|\bz_r|^2}\bra{\eta_3\red{-}|\bz_r|\bra{\bra{\frac{c_0}{c_s}}^2-1}},0}\right|^2.
    \end{align*}
    This means that along the line defined by 
    \(\eta_1 = \eta_2 = 0\) and 
    \(\eta_3 \red{-} |\bz_r| \bigl( (c_0/c_s)^2 - 1 \bigr) = 0\), 
    the point spread function remains at its maximum. Consequently, 
    in this particular case, the peak of the point spread function envelope 
    cannot be used to estimate \(c_0\). Therefore, this scenario is 
    unfavorable for simultaneously estimating the wave speed and imaging 
    the unknown reflector. However, if either the background speed or 
    the reflector position is known, it is still possible 
    to estimate the remaining unknown parameter.

\end{itemize}

\begin{figure}
    \centering
    \includegraphics[width=0.95\linewidth]{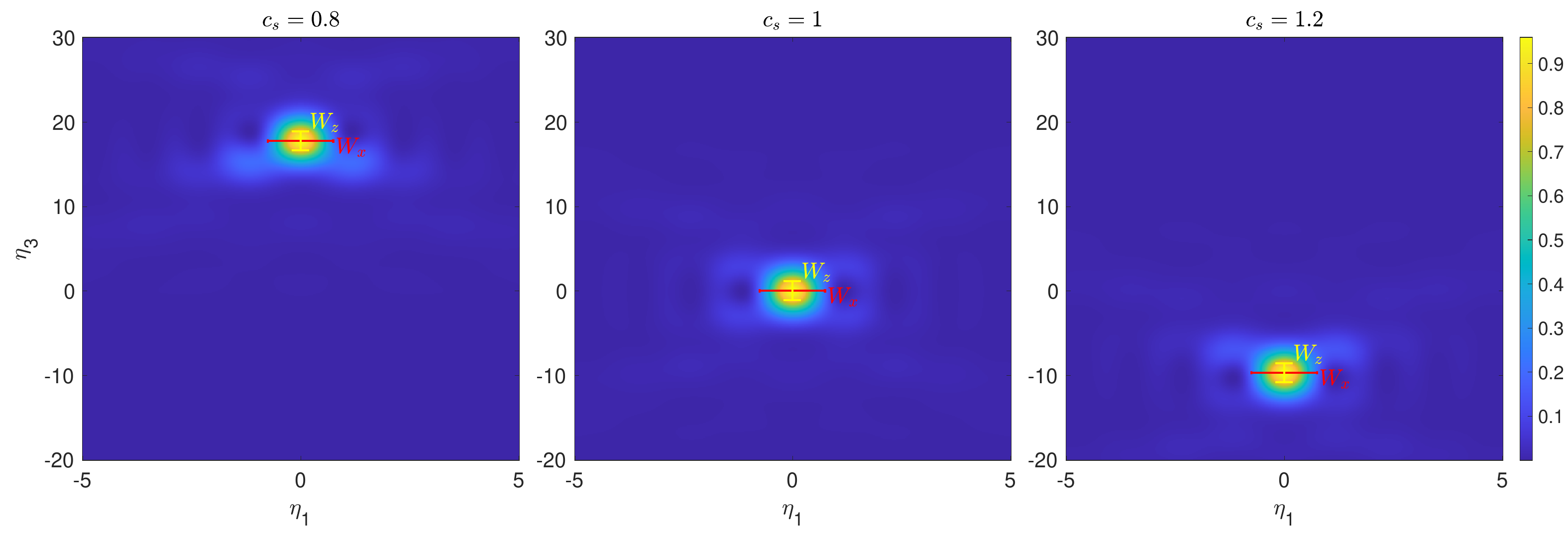}
    \caption{\red{Profiles of the function
$
(\eta_1,\eta_3)\mapsto
\left|
\cG_{\alpha_r,\alpha_f}\left(
\frac{-a_0\omega_0}{c_0|\bz_r|}\eta_1,
0,
\frac{-a_0^2\omega_0}{c_0|\bz_r|^2}\eta_3,
\frac{-a_0^2\omega_0}{c_0|\bz_r|}
\left(
\left(\frac{c_0}{c_s}\right)^2-1
\right)
\right)
\right|^2
$
for three different choices of $c_s$: $c_s=0.8$ (left), $c_s=1$
(middle), and $c_s=1.2$ (right). The peaks of the three profiles are
$0.87$, $1$, and $0.96$, respectively. In all three plots, the marked
widths satisfy
$W_x=4c_0|\bz_r|/(a_0\omega_0)$ and
$W_z=4c_0|\bz_r|^2/(a_0^2\omega_0)$, consistently with the predicted
resolution scaling. Here, the normalized array radius is $a_0=20$,
the normalized central frequency is $\omega_0=4$, the background wave
speed is $c_0=1$, and the reflector position is
$\bz_r=(10,0,20\sqrt{2})$.}}
    \label{fig: illu_G_34}
\end{figure}

\subsection{Estimator for the wave speed}
Previously, we have derived the asymptotic expression of the generalized daylight migration imaging functional when there is a mismatch between the chosen wave speed for backpropagation and the true wave speed in the medium. Based on that, we propose the estimators for the position of the reflector and the wave speed $c_0$ except for the narrowband case with $\alpha_r=1$. We start with the following corollary.

\begin{cor}
    \red{Assume that the function $F_0(t)$ given in \eqref{eq: profile F_1} and \eqref{eq: profile F_2} satisfying the condition that $|F_0(t)|$ achieves its maximum at $\red{t=0}$.} Then the function $\caP_E$ defined in (\ref{eq: envelope wide}), (\ref{eq: envelope wide narrow}), and(\ref{eq: envelope narrow})  achieves its global maximum at $\br_s =0$, $c_s=c_0$.
\end{cor}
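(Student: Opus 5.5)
The envelope factors as $|F_0(\cdot)|$ times $|\cG_{\alpha_r,\alpha_f}(\cdot)|^2$, so I would bound each factor separately by its value at $\br_s=0$, $c_s=c_0$ and then check that both bounds are attained there simultaneously.

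\textbf{The $F_0$ factor.} By hypothesis, $|F_0(t)|\le |F_0(0)|$ for all $t$. In (\ref{eq: envelope wide}) the argument of $F_0$ is
\[
-\frac{B_H}{c_0}\Bigl(2\eta_3+\frac{c_s^2(\eta_1^2+\eta_2^2)}{c_0^2|\bz_r|}\Bigr).
\]
In (\ref{eq: envelope wide narrow}) it is $-2B_0\eta_3/c_0$, and in (\ref{eq: envelope narrow}) it is $0$. In every case the argument vanishes when $\eta_1=\eta_2=\eta_3=0$, that is, when $\br_s=0$. So this factor is bounded by $|F_0(0)|$, with equality at $\br_s=0$ for any $c_s$.

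\textbf{The $\cG$ factor.} Since $\cG_{\alpha_r,\alpha_f}$ is the average over $\cA_0$ of a unimodular exponential, the triangle inequality gives
\[
|\cG_{\alpha_r,\alpha_f}(\xi_1,\xi_2,\xi_3,\xi_4)|\le \frac{1}{|\cA_0|}\int_{\cA_0}du_1du_2=1 ,
\]
with equality at $\xi=0$ because the integrand is then identically $1$. At $\br_s=0$ and $c_s=c_0$ we have $\eta_j=0$ and $(c_0/c_s)^2-1=0$, so all four arguments $\xi_1,\dots,\xi_4$ vanish. This holds also in (\ref{eq: envelope narrow}), where $\xi_3\propto\eta_3$. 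Moreover $v=1$ lies in the admissible set (\ref{eq: cond_v}), and $\phi_1(\bz_r)=\bz_r$, so $\alpha_f=\alpha_r$. The $\cG$ factor therefore equals $1$ there.

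\textbf{Conclusion.} Multiplying the two bounds, each of the three envelopes is at most $2\pi\omega_0|F_0(0)|$, and this value is attained at $\br_s=0$, $c_s=c_0$. That point is therefore a global maximum.

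The only point needing care is that $\alpha_f$ depends on $c_s$ through $\bz_f=\phi_v(\bz_r)$, and so do the coordinates $\eta_j$, which are taken relative to $\bz_f$. This does not affect the argument: the bound $|\cG|\le1$ holds uniformly in $\alpha_f$, so the global maximum is valid over all admissible $(\br_s,c_s)$.

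I would add a remark that the maximum need not be unique. The narrowband case with $\alpha_r=1$, where the envelope is constant along a line, shows this; strict uniqueness would require an analysis of when the phase in $\cG$ is constant on $\cA_0$. The corollary only claims a global maximum, and that is what the bounds above establish.
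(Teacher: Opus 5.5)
Your proposal is correct and is essentially the paper's argument. The paper's proof only notes that $|\cG_{\alpha_r,\alpha_f}|$ is maximal at $\xi_1=\xi_2=\xi_3=\xi_4=0$ and $|F_0|$ at $t=0$; you fill in the details (the bound $|\cG_{\alpha_r,\alpha_f}|\le 1$ from averaging a unimodular integrand, the simultaneous vanishing of all arguments at $\br_s=0$, $c_s=c_0$, and the uniformity of the bound in $\alpha_f$), and your remark on non-uniqueness in the narrowband case with $\alpha_r=1$ matches the paper's discussion.
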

\begin{proof}
    We only need to notice that the function $|\cG_{\alpha_r,\alpha_f}(\xi_1,\xi_2,\xi_3,\xi_4)|$ achieves its global maximum at $\xi_j=0$, $j=1,2,3,4$, and the function \red{$|F_{0}(t)|$} achieves its maximum at $\red{t=0}$.
\end{proof}

Based on the corollary above, we propose the following estimator:
\begin{est}
    \begin{align} \label{eq: two estimators}
    (\hat{\bz}_r,\hat{c}) = \operatorname{argmax}_{\bz_s,c_s} \operatorname{Envelope}\{ \cI(\bz_s,c_s)\}.
\end{align}
\end{est}

To implement the proposed estimator, we first note that the generalized daylight migration imaging functional does not directly provide its envelope. To extract the envelope, we apply the Hilbert transform to $\cI(\bz_s,c_s)$. Specifically, we discretize the search domain in spherical coordinates. For each fixed pair of angular variables, the Hilbert transform is performed along the radial direction. This process is then repeated to encompass the entire search domain. The corresponding pseudo-code is provided below.

\begin{algorithm}[!ht]
\caption{Reflector location and wave speed estimation}
\KwIn{Search domain $\Omega \subset \mathbb{R}^3$, \red{a grid of search wave speeds $c_s$}, measurement data $u(t,\bx_r)$.}
\KwOut{The estimators $(\hat{\bz}_r,\hat{c})$}
Calculate the cross-correlation\;
Discretize the search domain $\Omega$ in spherical coordinates $\{(r_j,\theta_k,\varphi_l)\}$\;
\ForEach{fixed searching wave speed $c_s$}{
\ForEach{fixed angular pair $(\theta_k,\varphi_l)$}{
    Perform the Hilbert transform of $\cI((r_j,\theta_k,\varphi_l),c_s)$ along the radial variable $r$\;
    Store the resulting envelope along the radial direction\;
}
}
Assemble the envelopes obtained for all $(\theta_k,\varphi_l)$ to cover the full search domain\;
Calculate the estimators using (\ref{eq: two estimators})\;
\Return{The estimators $(\hat{\bz}_r,\hat{c})$}\;
\end{algorithm}

The resolution of the estimator for the reflector is discussed in Section \ref{sec: imaging functional}. Similarly, based on the theoretical results, we observe that the relative resolution for the wave speed estimation is given by $\frac{\lambda|\bz_r|}{a^2}$ for both narrowband and broadband cases. We discuss this result in the following.
\begin{itemize}
\item Compared with the relative cross-range resolution of reflector localization, given by $\frac{\lambda |\bz_r|}{a}$, the relative resolution for wave speed estimation differs by an additional factor of $|\bz_r|/a$ and the cosine of the tilt angle.

\item In the narrowband case, the relative resolution for wave speed estimation coincides with the relative range resolution for reflector localization. In contrast, in the broadband case, the relative resolution for wave speed estimation is poorer than the relative range resolution for reflector localization.

\item The relative resolution scales inversely with the Fresnel number, defined as
\[
F_{\mathrm{num}} \sim \frac{a^2}{\lambda |\bz_r|}.
\]
When \(F_{\mathrm{num}} \gg 1\), the resolution is high, consistent with the geometric optics approximation adopted in the analysis. When \(F_{\mathrm{num}} \lesssim 1\), diffraction effects become significant, the geometric optics approximation breaks down, and the resolution deteriorates accordingly.
\end{itemize}

\section{Effective wave speed estimation through virtual guide star}\label{sec: random wave speed}
In this section, we develop a method for estimating the effective wave speed in random media, extending the guide-star strategy introduced earlier. 
Since a random medium does not contain a physical guide star, we instead introduce a \emph{virtual guide star} at a location of our own choice within the medium.  
This virtual guide star is not a physical reflector; rather, it is realized indirectly through a family of \red{search points} in the imaging functional.  
The theoretical principles underlying this construction are presented in Section~\ref{subsec: random method}, and the corresponding practical estimator is described in Section~\ref{subsec: random est}.

\subsection{Probing the speckle}\label{subsec: random method}

In practice, our goal is to probe a fixed physical location in the medium, denoted by $\bz_g$ and referred to as the \emph{virtual guide star}. The key observation is that a mismatch between the true propagation speed and the backpropagation speed $c_s$ causes a reflector located at $\bz_r \in \cD$ to focus at a shifted location
\[
\bz_f = \phi_v(\bz_r)
\]
in the imaging domain. \blue{For any fixed $v$ satisfying the condition (\ref{eq: cond_v}), the mapping $\phi_v$, defined in~\eqref{eq: phi_mapping}, is a bijection from the half space $\{\bz:z^{\parallel} \ge 0\}$ onto its image.}

To probe the medium appropriately, we proceed as follows. For a given backpropagation speed $c_s$, we prescribe a travel time $t_g$ and define the corresponding \red{search point} in the imaging functional as
\[
\bz_s(c_s,t_g) = (0,0,c_s\cdot t_g).
\]
With this choice, the associated physical location of the virtual guide star is given by \blue{
\begin{align}\label{eq: def_z_g}
    \bz_g = \phi_v^{-1}\!\bigl(\bz_s(c_s,t_g)\bigr) = (0,0,c_0\cdot t_g),
\end{align}
}
which is independent of $c_s$. Therefore, although the \red{search point} $\bz_s$ varies with the assumed backpropagation speed, it always corresponds to the same physical location $\bz_g$ in the medium.

In principle, the virtual guide star can be placed anywhere within the domain $\cD$ by a similar procedure described above. For simplicity, we restrict our attention to locations along the $\hat{\be}_3$-axis. This is perfectly legitimate as random scatterers are distributed throughout the medium. The corresponding geometric configuration is illustrated in Figure~\ref{fig: random_setup}.

\begin{figure}[!ht]
    \centering
    \begin{tikzpicture} 

    \draw[thick] (3,2.2) rectangle (3.3,3.3); 
    \node at (3.5,2) {$\cA$};               
 
    \foreach \i in {1,...,5} {
        \pgfmathsetmacro\y{2.2 + \i * (3.3 - 2.2)/6}
        \draw (3, \y) -- (3.3, \y);
    }

    \fill (3.15,2.75) circle (2pt);
    \draw[->,thick] (3.15,2.75) -- (3.15,3.75) node[above] {$\hat{\be}_1$};
    \draw[->,thick] (3.15,2.75) -- (4.15,2.75);
    \node[black] at (4.3,3) {$\hat{\be}_3$};
    \node at (2.8,2.6) {$\mathbf{0}$};

    \fill[red] (6,2.75) circle (2pt);
    \node[red] at (6.2,2.4) {$\bz_s$};
    \draw[dashed,thick,red] (3.15,2.75) -- (6,2.75);

    \fill[black] (6.7,2.75) circle (2pt);
    \node[black] at (6.9,3) {$\bz_g$};
    \draw[dashed,thick,black] (6.2,2.75) -- (6.7,2.75);

        \draw[thick, dashed, rounded corners=5pt] (4.0,1) rectangle (8.6,4.6);
        \node at (8.2,4.2) {$\cD$};
        \draw[thick]
            (7.2,1.9) --  
            (7.35,2.2) --  
            (7.55,1.8) --  
            (7.75,2.5) --  
            (7.9,1.7) --  
            (8.1,2.3) --  
            (8.25,2.0);    
        \draw[<->] (7.55,1.6) -- (7.9,1.6) node[midway, below] {$\ell_c$};

    \end{tikzpicture}
    \caption{Illustration for the effective wave speed estimation. For any fixed $c_s$, we pick the \red{search point} at $\bz_s$, and the virtual guide star is positioned at $\bz_g= \phi_v^{-1}(\bz_s)$ with $v = c_s/c_0$. \blue{ Here, $\bz_s$ depends on $c_s$ but $\bz_g$ does not depend on $c_s$.}}
    \label{fig: random_setup}
\end{figure}
The following theorem characterizes the behavior of the imaging functional at the \red{search points} associated with the virtual guide star and the \red{backpropagation} wave speed. \blue{The choice of $\br(\bet)$ follows the similar idea as the parameterization introduced in Theorem~\ref{thm: functional}.}
\begin{thm}\label{thm: random_psf}
    Let us assume that $\hat{F}(\omega)$ has the form defined in (\ref{eq: profile F_1}). For any fixed $c_s$ and a prescribed travel time $t_g$, choosing the \red{\red{search point}}
\[
\bz_s(c_s,t_g) = (0,0,c_s\cdot t_g)
\]
induces a virtual guide star located at \blue{ $\bz_g$ which is defined in (\ref{eq: def_z_g}) and is independent of $c_s$}.
Denote the vector $\bet = (\eta_1,\eta_2,\eta_3)$, and $\br(\bet) = (\varepsilon^{\frac{1}{2}}\eta_1,\varepsilon^{\frac{1}{2}}\eta_2,\varepsilon\eta_3)$, then the imaging functional has the asymptotic form (as $\varepsilon\rightarrow 0$): \blue{
\begin{align}\label{eq: random functional result}
\cI(\bz_s(c_s,t_g),c_s) \approx & \frac{N^2\varepsilon}{2^6\pi^3c_0}
\frac{\cK(\mathbf{0},\bz_g)}{|\bz_g|^2} \cdot \int_{\RR^3} d\bet~  \rho_\mu(\bz_g(t_g)+\br(\bet))\red{\caP_{\rm v}}(t_g,\bet,c_s),
\end{align}}
where $\red{\caP_{\rm v}}(t_g,\bet,c_s)$ is the point spread function defined by 
\begin{align}\label{eq: random psf non_asym}
   \red{\caP_{\rm v}}(t_g,\bet,c_s) =  &\operatorname{Re}\left\{\int_\cB d\omega ~ \bra{-i\omega\hat{F}(\omega)}\exp\bra{i\frac{\omega}{c_0}\bra{2\eta_3+\frac{(\eta_1^2+\eta_2^2)}{|\bz_g|}}}\right. \notag\\
    &\left. \times \cG^2_{1,1}\bra{\frac{a_0\omega}{c_0|\bz_g|}\eta_1,
    \frac{a_0\omega}{c_0|\bz_g|}\eta_2,0,
    \frac{-a_0^2\omega}{c_0|\bz_g|}\bra{\bra{\frac{c_0}{c_s}}^2-1}}\right\}.
\end{align}
If additionally $B_H \ll \omega_0$, then the point spread function has the form 
\begin{align}
    \red{\caP_{\rm v}}(t_g,\bet,c_s) \simeq&  4\pi\omega_0 \operatorname{Re} \left\{
 -i\exp\bra{i\frac{\omega_0}{c_0}\bra{2\eta_3+\frac{(\eta_1^2+\eta_2^2)}{|\bz_g|}}}F_{0}\bra{\frac{B_H}{c_0}\bra{2\eta_3+\frac{(\eta_1^2+\eta_2^2)}{|\bz_g|}}}\right.\notag \\
&\times \left.\cG^2_{1,1}\bra{\frac{a_0\omega_0}{c_0|\bz_g|}\eta_1,
\frac{a_0\omega_0}{c_0|\bz_g|}\eta_2,0,
\frac{-a_0^2\omega_0}{c_0|\bz_g|}\bra{\bra{\frac{c_0}{c_s}}^2-1}} \right\}.
\end{align}
\begin{proof}
    See \ref{append_4.1}.
\end{proof}
\end{thm}
\begin{rmk}\label{rmk: random narrow band}
    In this section, we only consider the case where the bandwidth is not very small compared to the \red{carrier frequency}. We expect that the similar methodology can be extended to the narrowband case as well.
\end{rmk}
We observe that the mismatch between $c_s$ and $c_0$ plays a role in the profile of the \red{imaging functional}, which is a random quantity that depends on the realization of the random medium $\rho_\mu$. The following proposition offers insight that we can extract the information of $c_0$ from the second-order moment of the imaging functional. 
\begin{prop}(second order moment)\label{prop: second order moment}
Following the assumptions in Theorem \ref{thm: random_psf}, we have 
    \begin{align}\label{eq: random_not_narrowband}
        \EE \left[\left| \cI(\bz_s(c_s,t_g),c_s) \right|^2\right] \sim \red{\frac{\ell_c^3}{\varepsilon^2}}\cdot \|\Sigma\|_{L^1(\RR^3)}\cdot\int_{\RR^3}d\bet~ |\red{\caP_{\rm v}(t_g,\bet,c_s)}|^2.
    \end{align}
    If $B_H \ll \omega_0$, we have 
    \begin{align}\label{eq: random narrowband}
        \EE \left[\left| \cI(\bz_s(c_s,t_g),c_s) \right|^2\right] \sim &\int_{\RR^3} d\eta_1 d\eta_2 d\eta_3~\left| F_{0}\bra{\frac{B_H}{c_0}\bra{2\eta_3+\frac{\eta_1^2+\eta_2^2}{|\bz_g|}}} \right|^2 \notag\\
        &\times\left| \cG_{1,1}\bra{\frac{a_0\omega_0}{c_0|\bz_g|}\eta_1,
    \frac{a_0\omega_0}{c_0|\bz_g|}\eta_2,0,
    \frac{-a_0^2\omega_0}{c_0|\bz_g|}\bra{\bra{\frac{c_0}{c_s}}^2-1}} \right|^4 .
    \end{align}
\end{prop}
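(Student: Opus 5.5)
We start from the asymptotic representation of Theorem~\ref{thm: random_psf}: up to the deterministic prefactor $\frac{N^2\varepsilon}{2^6\pi^3c_0}\frac{\cK(\mathbf{0},\bz_g)}{|\bz_g|^2}$, the imaging functional is a linear functional of the random field,
\[
\cI(\bz_s(c_s,t_g),c_s)\approx A_\varepsilon \int_{\RR^3} d\bet~\rho_\mu\bigl(\bz_g+\br(\bet)\bigr)\caP_{\rm v}(t_g,\bet,c_s).
\]
Since $\caP_{\rm v}$ is real (it is a real part), squaring and taking expectation with respect to the law of $\mu$ gives
\[
\EE\bigl[|\cI|^2\bigr]\approx A_\varepsilon^2\iint d\bet\,d\bet'~\Sigma\Bigl(\frac{\br(\bet)-\br(\bet')}{\ell_c}\Bigr)\caP_{\rm v}(t_g,\bet,c_s)\caP_{\rm v}(t_g,\bet',c_s),
\]
using stationarity of $\mu$ and the definition of $\Sigma$. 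This step uses only the mean-zero and stationarity assumptions, and no Gaussianity.

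The core step is to exploit $\ell_c\ll\lambda$, which is also much smaller than the resolution scales $\varepsilon^{1/2}$ (cross-range) and $\varepsilon$ (range). We change variables $\bet'=\bet+\boldsymbol{\zeta}$ and then rescale $\boldsymbol{\zeta}$ so that the argument of $\Sigma$ becomes an $O(1)$ variable $\by$. Concretely, we set $\zeta_{1,2}=\ell_c\varepsilon^{-1/2}y_{1,2}$ and $\zeta_3=\ell_c\varepsilon^{-1}y_3$, with Jacobian $\ell_c^3\varepsilon^{-2}$. Because $\caP_{\rm v}$ varies on $O(1)$ scales in $\bet$ and $\ell_c\varepsilon^{-1}\to0$, we can replace $\caP_{\rm v}(\bet+\boldsymbol{\zeta})$ by $\caP_{\rm v}(\bet)$ by dominated convergence. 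The decay assumption $|\Sigma(\bx)|\lesssim(1+|\bx|)^{-4-\eta}$ guarantees $\Sigma\in L^1$ and, more strongly, that $\int|\by||\Sigma(\by)|d\by<\infty$, which controls the first-order error from the Taylor expansion of $\caP_{\rm v}$. We then obtain
\[
\EE\bigl[|\cI|^2\bigr]\approx A_\varepsilon^2\,\frac{\ell_c^3}{\varepsilon^2}\Bigl(\int_{\RR^3}\Sigma\Bigr)\int_{\RR^3}d\bet~|\caP_{\rm v}(t_g,\bet,c_s)|^2.
\]
In the statement, $\|\Sigma\|_{L^1}$ is written with $\sim$; the natural quantity is $\int\Sigma$, which is nonnegative by Bochner's theorem. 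Collecting the $\varepsilon$-dependent prefactors that do not depend on $c_s$ gives \eqref{eq: random_not_narrowband}.

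For \eqref{eq: random narrowband}, we insert the narrowband form of $\caP_{\rm v}$ from Theorem~\ref{thm: random_psf}. This form is $4\pi\omega_0\operatorname{Re}\{-i e^{i\Phi(\bet)}F_0(\cdot)\cG^2_{1,1}\}$ with fast phase $\Phi=\frac{\omega_0}{c_0}(2\eta_3+\cdot)$. Writing $(\operatorname{Re}Z)^2=\frac12|Z|^2+\frac12\operatorname{Re}Z^2$, the $Z^2$ term carries $e^{2i\Phi}$, which oscillates in $\eta_3$ at frequency $2\omega_0/c_0$. The envelope $F_0$ varies on the slower scale $c_0/B_H$, so $B_H\ll\omega_0$ makes this integral negligible by a Riemann--Lebesgue / non-stationary phase argument in $\eta_3$. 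What remains is $\frac12|Z|^2$, proportional to $|F_0|^2|\cG_{1,1}|^4$, which is the claimed formula.

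The main obstacle is justifying the localization step rigorously. The $\bet$-integral of $\caP_{\rm v}$ is over all of $\RR^3$, and $\caP_{\rm v}$ must be square integrable with enough decay for the dominated convergence to work. I would first check integrability in $\eta_3$, which comes from the decay of $F_0$, and in $\eta_{1,2}$, which comes from the decay of $\cG_{1,1}$ through the oscillatory integral over the smooth aperture $\cA_0$. If this decay is too weak, I would truncate to $|\br|\ll|\bz_g|$, consistent with the validity region of Theorem~\ref{thm: random_psf}.
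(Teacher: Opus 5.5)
Your proposal is correct and follows the paper's proof essentially step by step. The anisotropic rescaling of $\bet'-\bet$ with Jacobian $\ell_c^3\varepsilon^{-2}$, together with first-moment control of $\Sigma$, is exactly Lemma~\ref{lem: cov} with $a_1=a_2=\ell_c\varepsilon^{-1/2}$ and $a_3=\ell_c\varepsilon^{-1}$. Your split $(\operatorname{Re}Z)^2=\frac12|Z|^2+\frac12\operatorname{Re}Z^2$, discarding the $Z^2$ part because of its fast phase, is the paper's $I_1+I_2$ argument; the paper makes it explicit through the substitution $\zeta=B_H\phi$, which reduces $I_2$ to the Fourier transform of $F_0^2$ at $2\omega_0/B_H$. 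Your remark that the constant produced is really $\int\Sigma$ rather than $\|\Sigma\|_{L^1(\RR^3)}$ is accurate. The only slip is cosmetic: $e^{2i\Phi}$ oscillates in $\eta_3$ at frequency $4\omega_0/c_0$, not $2\omega_0/c_0$, which does not affect the argument.
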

\begin{proof}
    See \ref{append_4.3}.
\end{proof}
Although the second-order moment provides the information of $c_0$, we should be aware that we have access to only one realization of the random medium in practice. The following two propositions show that by changing the \red{search point} with a distance of the order of the wavelength, the imaging functional is stationary, which links the spatial average to the ensemble average.
\begin{prop}(Locally stationary)\label{prop:stationary}
    Let \red{$\Delta\bz = (\Delta\bz^\perp,\Delta z^{\parallel})$}, and \red{$\bp(\Delta\bz,c_s) = \bra{\frac{\Delta\bz^\perp}{v^2},\frac{\Delta z^{\parallel}}{v}}$}, we have \blue{
    \begin{align}
        \cI(\bz_s(c_s,t_g)+\varepsilon\Delta\bz,c_s) \simeq & \frac{N^2\varepsilon}{2^6\pi^3c_0}
        \frac{\cK(\mathbf{0},\bz_g)}{|\bz_g|^2} \cdot \int_{\RR^3}d\bet~ \rho_\mu(\bz_g+\varepsilon\bp(\Delta\bz,c_s)+\br(\bet))\red{\caP_{\rm v}}(t_g,\bet,c_s).
    \end{align}}
    \red{In particular, at leading order, the imaging functional is locally stationary with respect to the rescaled \red{search point} displacement $\Delta\bz$.}
\end{prop}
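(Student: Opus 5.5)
The plan is to redo the stationary-phase computation behind Theorem~\ref{thm: random_psf}, now with the search point $\bz_s(c_s,t_g)+\varepsilon\Delta\bz$ in place of $\bz_s(c_s,t_g)$. Only the migration travel times $\cT_s(\bz_s+\varepsilon\Delta\bz,\bx_j)$ change. We then identify the resulting perturbation of the phase as a translation of the physical scattering point. Concretely, we start from the Born representation of $C^{(1)}_{\text{\Rn{1}}}+C^{(1)}_{\text{\Rn{2}}}$ with $\widetilde\rho=\rho_\mu$, insert it into the dense-array form of $\cI$, and write each scatterer as $\bz=\bz_g+\br(\bet)$, exactly as in the proof of Theorem~\ref{thm: random_psf}. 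The $\by$-integration (stationary phase along rays through the array, which produces $\cK(\mathbf 0,\bz_g)$) and the amplitude factors $1/(4\pi|\cdot|)$ are unaffected at leading order by an $\cO(\varepsilon)$ move of the search point. So the whole task reduces to tracking the rapid phase
\[
\frac{\omega}{\varepsilon}\Big(\cT_s(\bz_s+\varepsilon\Delta\bz,\bx_1)+\cT_s(\bz_s+\varepsilon\Delta\bz,\bx_2)-\cT_0(\bz,\bx_1)-\cT_0(\bz,\bx_2)\Big).
\]

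Next we Taylor-expand with $\bx_j=\varepsilon^{1/2}\widetilde{\bx_j}$ and $\bz_s=(0,0,c_st_g)$. For each $j$,
\[
\frac{|\bz_s+\varepsilon\Delta\bz-\bx_j|}{c_s}=\frac{|\bz_s-\bx_j|}{c_s}+\frac{\varepsilon\,\Delta z^{\parallel}}{c_s}-\frac{\varepsilon\,\Delta\bz^\perp\cdot\bx_j^\perp}{c_s|\bz_s|}+\cO(\varepsilon^{2}).
\]
The second term is $\cO(\varepsilon)$, and the third is $\cO(\varepsilon^{3/2})$, i.e.\ of order $\varepsilon^{1/2}$ after multiplication by $\omega/\varepsilon$. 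We compare this with the expansion of $\cT_0(\bz_g+\br(\bet)+\varepsilon\bp,\bx_j)$, where $|\bz_g|=c_0t_g$. The range terms match when $\varepsilon\Delta z^\parallel/c_s=\varepsilon p^\parallel/c_0$, i.e.\ $p^\parallel=\Delta z^\parallel/v$. The cross-range terms match when $\Delta\bz^\perp/(c_s\cdot c_st_g)=\bp^\perp/(c_0\cdot c_0t_g)$, i.e.\ $\bp^\perp=\Delta\bz^\perp/v^2$. This is precisely $\bp(\Delta\bz,c_s)$, the linearization of $\phi_v^{-1}$ at $\bz_s$, consistent with \eqref{eq: z_f}. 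After the change of variables $\bz\mapsto\bz-\varepsilon\bp$ in the scatterer integral, the phase and hence the kernel $\caP_{\rm v}(t_g,\bet,c_s)$ become identical to those of Theorem~\ref{thm: random_psf}. The medium then appears evaluated at $\bz_g+\varepsilon\bp+\br(\bet)$, which gives the stated formula. Local stationarity follows because $\rho_\mu$ is stationary and the kernel does not depend on $\Delta\bz$.

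The main obstacle is controlling the remainders in the comparison. First, the $\varepsilon\bp^\perp$ shift interacts with the $\varepsilon^{1/2}$-scale variables $\eta_1,\eta_2$ through the quadratic cross-range term $(\eta_1^2+\eta_2^2)/|\bz_g|$ and through $\cG_{1,1}$. We must check that the cross terms of order $\varepsilon^{1/2}\cdot\varepsilon\cdot\omega/\varepsilon=\varepsilon^{1/2}$ vanish as $\varepsilon\to0$. Second, the mismatch-dependent quadratic term in $\bx_j$ (the source of the $\xi_4$ argument) must be unaltered by the $\cO(\varepsilon)$ shift. I would handle both by expanding to second order in $\bx_j$ around $\bz_s$ and around $\bz_g$ simultaneously, and then showing that every term involving $\Delta\bz$ other than the linear ones above is $o(1)$ in phase, uniformly on the support of $\caP_{\rm v}$ in $\bet$. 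That support is bounded because $F_0$ and $\cG_{1,1}$ decay.
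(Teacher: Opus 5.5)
Your proposal is correct and follows the paper's own route: the paper's proof consists of redoing the stationary-phase computation of Theorem~\ref{thm: random_psf} with the search point $\bz_s(c_s,t_g)+\varepsilon\Delta\bz$, and absorbing the resulting phase shift into a translation $\varepsilon\bp$ of the scatterer variable, which is exactly what you do. One small remark: your cross-range matching works at phase order $\varepsilon^{1/2}$, the same order as the $\bet^\perp\cdot\bp^\perp$ remainders you discard, so at leading order only $p^{\parallel}=\Delta z^{\parallel}/v$ is actually forced, and $\bp^\perp=\Delta\bz^\perp/v^2$ is the natural choice (the linearization of $\phi_v^{-1}$, which cancels the $\widetilde{\bx_j}$-linear terms at the next order) rather than a leading-order necessity.
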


The proof of the proposition is almost the same calculation as the one in Theorem \ref{thm: random_psf}. Now, we are able to link the spatial and ensemble averages by showing the ergodicity.
\begin{prop}(spatial average to ensemble average)
\label{prop: ergodic}
\red{Let $\cS(l) = [-\frac{l}{2},\frac{l}{2}]^3$. 
Let $l=l(\varepsilon)$ satisfy
\begin{align}
    l(\varepsilon)\to\infty,
    \qquad
    \varepsilon l(\varepsilon)\to0,
    \qquad \text{as } \varepsilon\to0.
\end{align}
Then
\begin{align}
    \frac{1}{|\cS(l(\varepsilon))|}
    \int_{\cS(l(\varepsilon))}
    d\Delta\bz~
    \left|
    \cI\left(
    \bz_s(c_s,t_g)+\varepsilon\Delta\bz,c_s
    \right)
    \right|^2
    \stackrel{\varepsilon\to0}{\longrightarrow}
    \EE\left[
    \left|
    \cI(\bz_s(c_s,t_g),c_s)
    \right|^2\right]
    \quad \text{in probability.}
\end{align}}
\end{prop}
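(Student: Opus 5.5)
The proof is a standard $L^2$ ergodicity argument. We show that the spatial average has the correct mean and a variance that tends to zero, and then conclude with Chebyshev's inequality.

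\textbf{Representation.} By Proposition~\ref{prop:stationary} we write $\cI(\bz_s(c_s,t_g)+\varepsilon\Delta\bz,c_s)\simeq X_\varepsilon(\Delta\bz)$, where
\[
X_\varepsilon(\Delta\bz)=A_\varepsilon\int_{\RR^3}d\bet\,\rho_\mu\bigl(\bz_g+\varepsilon\bp(\Delta\bz,c_s)+\br(\bet)\bigr)\caP_{\rm v}(t_g,\bet,c_s),
\qquad A_\varepsilon=\frac{N^2\varepsilon}{2^6\pi^3c_0}\frac{\cK(\mathbf{0},\bz_g)}{|\bz_g|^2}.
\]
The condition $\varepsilon l\to0$ ensures that every shifted search point stays within $o(1)$ of $\bz_s$, so the approximation holds uniformly on $\cS(l)$. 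Because $\mu$ is stationary, the law of $X_\varepsilon(\Delta\bz)$ does not depend on $\Delta\bz$. Hence the expectation of the spatial average equals $\EE|X_\varepsilon(0)|^2$, which by Proposition~\ref{prop: second order moment} is the target quantity. It therefore suffices to show that
\[
\operatorname{Var}\Bigl(\frac{1}{|\cS(l)|}\int_{\cS(l)}|X_\varepsilon|^2\Bigr)\Big/\bigl(\EE|X_\varepsilon(0)|^2\bigr)^2\to0 .
\]

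\textbf{Covariance of the squared field.} We write the variance of the average as
\[
\frac{1}{|\cS(l)|^2}\iint_{\cS(l)^2} d\Delta\bz\, d\Delta\bz'\;\operatorname{Cov}\bigl(|X_\varepsilon(\Delta\bz)|^2,|X_\varepsilon(\Delta\bz')|^2\bigr).
\]
Since the fluctuations are weak ($|\Sigma(0)|\ll1$) and $X_\varepsilon$ is a linear functional of $\mu$ smoothed over many correlation cells ($\ell_c\ll\lambda$), we treat $X_\varepsilon$ as asymptotically Gaussian. This is justified by the fourth-cumulant term being of lower order thanks to the decay $|\Sigma(\bx)|\lesssim(1+|\bx|)^{-4-\eta}$. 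The Isserlis formula then gives
\[
\operatorname{Cov}\bigl(|X_\varepsilon(\Delta\bz)|^2,|X_\varepsilon(\Delta\bz')|^2\bigr)\approx 2\,\bigl(\EE[X_\varepsilon(\Delta\bz)X_\varepsilon(\Delta\bz')]\bigr)^2,
\]
plus a cumulant term, since $X_\varepsilon$ is real. Using $\ell_c\ll\varepsilon$ together with the rescaling $\rho_\mu(\bz_g+\br(\bet))$, we get
\[
\EE[X_\varepsilon(\Delta\bz)X_\varepsilon(\Delta\bz')]\approx A_\varepsilon^2\,\frac{\ell_c^3}{\varepsilon^{4}}\,\|\Sigma\|_{L^1}\,
\int d\bet\;\caP_{\rm v}(\bet)\,\caP_{\rm v}\bigl(\bet+D(\bp(\Delta\bz,c_s)-\bp(\Delta\bz',c_s))\bigr),
\]
where $D=\mathrm{diag}(\varepsilon^{1/2},\varepsilon^{1/2},1)$ undoes the anisotropic scaling of $\br$. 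This is the same computation as in Proposition~\ref{prop: second order moment}.

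\textbf{Decay of correlations.} The correlation above is an autocorrelation of $\caP_{\rm v}$. Since $\caP_{\rm v}$ is integrable and square integrable in $\bet$ (the factor $F_0$ decays in $\eta_3$ and $\cG_{1,1}$ decays in $(\eta_1,\eta_2)$), the normalized correlation
\[
R_\varepsilon(\bp)=\frac{\EE[X_\varepsilon(\Delta\bz)X_\varepsilon(\Delta\bz')]}{\EE[X_\varepsilon(0)^2]}
\]
tends to $0$ as the separation $|\bp(\Delta\bz,c_s)-\bp(\Delta\bz',c_s)|\to\infty$, uniformly in small $\varepsilon$ after the rescaling. The normalized variance is then bounded by
\[
\frac{C}{|\cS(l)|^2}\iint_{\cS(l)^2}R_\varepsilon^2\;\lesssim\;\frac{1}{|\cS(l)|}\int_{\cS(2l)}R_\varepsilon^2(\bp)\,d\bp .
\]
Because $R_\varepsilon$ is bounded and tends to zero at infinity, this Ces\`aro mean vanishes as $l\to\infty$, which completes the argument.

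\textbf{Main obstacle.} The hardest step is establishing decay of $R_\varepsilon$ uniformly in $\varepsilon$. Along the $\eta_1$ and $\eta_2$ directions the shift is $\varepsilon^{1/2}\Delta\bz^\perp$, which is small, so decorrelation in the transverse directions occurs only over distances of order $\varepsilon^{-1/2}$ in $\Delta\bz$. I would handle this by decorrelating along the range direction $\Delta z^\parallel$, where the oscillatory factor $\exp(i2\omega_0\eta_3/c_0)$ together with $F_0$ gives decay on an $O(1)$ scale. Averaging the range variable over length $l\to\infty$ already forces the variance to zero, because the transverse directions only contribute a bounded factor to the normalized correlation. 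Controlling the non-Gaussian fourth cumulant is the second technical point; it follows from the integrable covariance decay assumed on $\Sigma$.
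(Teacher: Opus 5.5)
Your route is genuinely different from the paper's. The paper restricts to a stationary Gaussian $\rho_\mu$ and notes that an $L^1$ covariance makes it ergodic. It then reads Proposition~\ref{prop:stationary} as saying that $\Delta\bz\mapsto|\cI(\bz_s(c_s,t_g)+\varepsilon\Delta\bz,c_s)|^2$ is a translation-equivariant functional of $\rho_\mu$, and applies Birkhoff's ergodic theorem to obtain almost sure convergence. You instead compute the variance of the spatial average via Isserlis and the autocorrelation of $\caP_{\rm v}$, and conclude by Chebyshev.

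Birkhoff is shorter and gives a stronger mode of convergence. However, it concerns one fixed stationary field as the window grows. Here the field depends on $\varepsilon$, the window is $l(\varepsilon)$, and both sides of the claimed limit tend to zero, so the informative statement is the relative one, $\operatorname{Var}(\text{average})/(\EE|\cI|^2)^2\to0$. Your $L^2$ bound handles exactly this triangular-array setting, uniformly in $\varepsilon$, and gives a rate. It also identifies the actual decorrelation scales of $\Delta\bz\mapsto\cI$: $O(1)$ in range and $O(\varepsilon^{-1/2})$ transversally. These are set by the point spread function, not by $\ell_c/\varepsilon$ as suggested in Remark~\ref{rmk:choice_l}.

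The one step that fails as written is your treatment of the non-Gaussian part. Decay of $\Sigma$ controls only second moments; it says nothing about the fourth-order cumulants of $\mu$. Averaging over many correlation cells does not make $X_\varepsilon$ asymptotically Gaussian without a mixing hypothesis.

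For a counterexample, let $\mu=\xi\nu$, where $\nu$ is a centered stationary Gaussian field with covariance $\Sigma$, and $\xi$ is independent of $\nu$ with $\EE\xi^2=1$ and $\xi^2$ not a.s.\ constant. Then $\mu$ is stationary, mean-zero, with covariance $\Sigma$. But $X_\varepsilon=\xi X_\nu$, where $X_\nu$ is the same functional built from $\nu$. So the ratio of the spatial average to $\EE|X_\varepsilon|^2$ tends to $\xi^2$, not to $1$.

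You therefore need an extra hypothesis, with two options:
\begin{itemize}
\item Assume, as the paper does, that $\rho_\mu$ is Gaussian. Then $(X_\varepsilon(\Delta\bz))_{\Delta\bz}$ is jointly Gaussian, since it is linear in $\rho_\mu$. Isserlis is exact and no cumulant term appears.
\item Assume that the fourth cumulant of $\mu$ is integrable on $\RR^9$. Then its contribution to the normalized variance is $O(\ell_c^3/\varepsilon^2)\to0$.
\end{itemize}

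There are two smaller points. First, the prefactor in your covariance should be $\ell_c^3/\varepsilon^2$, not $\ell_c^3/\varepsilon^4$, because the Jacobian of $\bet\mapsto\br(\bet)$ is $\varepsilon^2$. This is harmless, since it cancels in $R_\varepsilon$.

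Second, your claim in the decay paragraph that $R_\varepsilon\to0$ uniformly in $\varepsilon$ is false in the transverse directions, as you note yourself. Your range-direction remedy is correct and can be closed cleanly:
\begin{itemize}
\item Work in the variable $\mathbf{q}=D\bigl(\bp(\Delta\bz,c_s)-\bp(\Delta\bz',c_s)\bigr)=(q_\perp,q_3)$. At leading order, $R$ is the $\varepsilon$-independent normalized autocorrelation of $\caP_{\rm v}\in L^2(\RR^3)$. Hence $|R|\le1$ and $R\in C_0(\RR^3)$.
\item It follows that $h(q_3):=\sup_{q_\perp}R^2(q_\perp,q_3)\to0$ as $|q_3|\to\infty$.
\item Bounding the transverse integrals trivially, the normalized variance is at most $\frac{2}{l}\int_{-l}^{l}h(t/v)\,dt$, which tends to $0$.
\end{itemize}
This uses only $l\to\infty$. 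The condition $\varepsilon l\to0$ enters solely through the validity of the representation of Proposition~\ref{prop:stationary} on $\cS(l)$.
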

\begin{proof}
    See \ref{append_4.5}.
\end{proof}

\begin{rmk} \label{rmk:choice_l}
The random field
\[
\Delta\bz\longmapsto
\left|\cI\bigl(\bz_s(c_s,t_g)+\varepsilon\Delta\bz,c_s\bigr)\right|^2
\]
inherits its correlation length from the medium fluctuations.
Since the correlation length of $\rho_\mu$ is $\ell_c$, while the spatial shift
in the medium is of order $\varepsilon\Delta\bz$, the effective correlation
length of the imaging functional in the variable $\Delta\bz$ is
\[
L_{\rm corr}\sim \frac{\ell_c}{\varepsilon}.
\]
Therefore, averaging over a cube $\mathcal S(l)$ amounts to averaging over
approximately
\[
N_{\rm eff}\sim
\left(\frac{\varepsilon l}{\ell_c}\right)^3
\]
effectively independent samples. By the central limit theorem heuristic, the root mean square fluctuation of the
spatial average is expected to scale as
\[
\left(\frac{\ell_c}{\varepsilon l}\right)^{3/2}.
\]Hence, the averaging window should satisfy
\[
l\gg\frac{\ell_c}{\varepsilon},
\]
so that the spatial average provides a reliable approximation of the ensemble
average.

\red{
Moreover, since $\ell_c\ll\varepsilon\ll1$, this requirement is compatible with the condition $\varepsilon l(\varepsilon)\to0$. Indeed, one can choose $l(\varepsilon)$ in the intermediate regime
\[
\frac{\ell_c}{\varepsilon}\ll l(\varepsilon)\ll\frac{1}{\varepsilon}.
\]}
\end{rmk}
In summary, by computing the spatial average of
$|\mathcal{I}(\bz_s(c_s,t_g)+\varepsilon \Delta \bz,c_s)|^2$
over $\Delta \bz \in \mathcal{S}(l)$ with sufficiently large $l$, we obtain an approximation of the ensemble average $\mathbb{E}\left[|\mathcal{I}(\bz_s(c_s,t_g),c_s)|^2\right]$, which contains information about $c_0$. Hence, the spatial average can be used to estimate the effective wave speed.

\subsection{Effective wave speed estimation}\label{subsec: random est}
In this section, we discuss the procedure of the effective wave speed estimation. We start with the following characterization of the second order moment for the regime $B_H\ll \omega_0$.
\begin{cor}\label{cor: random estimator}
    For $\EE\left[|\cI(\bz_s(c_s,t_g),c_s)|^2\right]$ given in (\ref{eq: random narrowband}), we have 
    \[
        c_0 = \arg\max_{c_s} ~  \EE\left[|\cI(\bz_s(c_s,t_g),c_s)|^2\right]
    \]
\end{cor}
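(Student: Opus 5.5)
The plan is to read off the dependence on $c_s$ in \eqref{eq: random narrowband}. There $c_s$ enters only through the fourth argument of $\cG_{1,1}$. The factor involving $F_0$ depends only on $\bet$, $B_H$, $c_0$ and $|\bz_g|$. Since $\bz_g=(0,0,c_0t_g)$ by \eqref{eq: def_z_g}, it is independent of $c_s$, so $|\bz_g|$ is too. Write
\[
\xi_4(c_s)=\frac{-a_0^2\omega_0}{c_0|\bz_g|}\bra{\bra{\frac{c_0}{c_s}}^2-1},
\]
which vanishes if and only if $c_s=c_0$, for $c_s>0$. It then suffices to show the following: for every fixed $(\xi_1,\xi_2)$, the map $\xi_4\mapsto|\cG_{1,1}(\xi_1,\xi_2,0,\xi_4)|$ is maximized at $\xi_4=0$, strictly on a set of $(\eta_1,\eta_2)$ of positive measure. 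Integrating the nonnegative weight $|F_0(\cdot)|^2$ against $|\cG_{1,1}|^4$ then gives the claim.

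The core step is a pointwise bound on $\cG_{1,1}$. With $\alpha_r=\alpha_f=1$ and $\xi_3=0$,
\[
\cG_{1,1}(\xi_1,\xi_2,0,\xi_4)=\frac{1}{|\cA_0|}\int_{\cA_0}du_1du_2\,e^{-i(u_1\xi_1+u_2\xi_2)}e^{i\frac{|u|^2}{2}\xi_4}.
\]
I would exploit rotational symmetry by taking $\cA_0$ to be a disk of radius one, which is the natural normalization of the array radius $a_0$. In polar coordinates, with $\xi^\perp=(\xi_1,\xi_2)$,
\[
\cG_{1,1}=\frac{1}{\pi}\int_0^1 2\pi r\,J_0(r|\xi^\perp|)\,e^{i\xi_4r^2/2}\,dr .
\]
At $\xi^\perp=0$ the triangle inequality gives $|\cG|\le 1=\cG(0,0,0,0)$, with equality only when $\xi_4=0$. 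For $\xi^\perp\neq0$, however, the triangle inequality does not compare $|\cG(\xi^\perp,0,\xi_4)|$ with $|\cG(\xi^\perp,0,0)|$, since $J_0$ changes sign. So pointwise domination in $\eta^\perp$ may fail, and I should not rely on it.

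The more robust route, and the one I would commit to, is to integrate in $(\eta_1,\eta_2)$ first and use Parseval. Change variables $\xi^\perp=\frac{a_0\omega_0}{c_0|\bz_g|}\eta^\perp$, and change $\eta_3$ to absorb the $\eta^\perp$-dependence of the $F_0$ argument, i.e. $s=2\eta_3+|\eta^\perp|^2/|\bz_g|$. Then the $F_0$ factor decouples and contributes a $c_s$-independent constant $\frac12\int|F_0(B_Hs/c_0)|^2ds$. What remains is proportional to $\int_{\RR^2}|\cG_{1,1}(\xi^\perp,0,\xi_4)|^4d\xi^\perp$. This equals $\|\hat h_{\xi_4}\|_{L^4}^4=\|h_{\xi_4}*h_{\xi_4}\|_{L^2}^2$ up to constants, where $h_{\xi_4}(u)=\mathbbm{1}_{\cA_0}(u)e^{i\xi_4|u|^2/2}/|\cA_0|$. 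Since $|h_{\xi_4}|=h_0$, we have pointwise $|h_{\xi_4}*h_{\xi_4}|\le h_0*h_0$. Hence the integral is maximized at $\xi_4=0$.

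The main obstacle is strictness, which is needed for $c_0$ to be the unique argmax. Equality in $|h*h|(w)\le h_0*h_0(w)$ would require the phase $\xi_4(|u|^2+|w-u|^2)/2$ to be constant in $u$ over $\cA_0\cap(w-\cA_0)$. But this phase equals $\xi_4(2|u-w/2|^2+|w|^2/2)/2$, which is nonconstant on sets of positive area when $\xi_4\neq0$. So the inequality is strict on a set of $w$ of positive measure, the integral strictly decreases, and $c_0$ is the unique maximizer among $c_s$ satisfying \eqref{eq: cond_v}.
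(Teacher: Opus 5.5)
Your proof is correct, and it takes a genuinely different and more rigorous route than the paper. The paper gives only a qualitative argument. It asserts that the integrand in \eqref{eq: random narrowband} is concentrated near $\bet=0$, replaces the whole integral by $|\cG_{1,1}(0,0,0,\xi_4)|^4$ up to a constant, and concludes from $|\cG_{1,1}|\le 1$, with equality at $\xi_4=0$. This is quick and makes the width of the peak in $c_s$ (hence the resolution $\lambda|\bz_g|/a^2$) visible at a glance. It is not a proof, however. The effective $\eta^\perp$-support of $|\cG_{1,1}(\cdot,0,\xi_4)|^4$ broadens as $|\xi_4|$ grows, so the drop in peak value has to be weighed against this broadening, and the heuristic never does that. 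As you correctly observe, the integrands also cannot be compared pointwise, since defocusing generically fills in the zeros of the in-focus pattern $\cG_{1,1}(\cdot,0,0)$.

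Your argument handles both issues. The shift $s=2\eta_3+|\eta^\perp|^2/|\bz_g|$ factors out the $F_0$ term exactly (by Tonelli), which shows that the bandwidth plays no role in locating the maximum. Plancherel together with $|h_{\xi_4}*h_{\xi_4}|\le h_0*h_0$ then gives the inequality, and the equality case gives uniqueness. As a by-product, you identify the exact $c_s$-profile as proportional to $\|h_{\xi_4}*h_{\xi_4}\|_{L^2}^2$, rather than the paper's $|\cG_{1,1}(0,0,0,\xi_4)|^4$.

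One small remark: the disk assumption on $\cA_0$ is not needed for the route you commit to. Strictness holds for any $\cA_0$ of positive measure. For $\xi_4\neq0$, each level set of $u\mapsto e^{i\xi_4|u-w/2|^2}$ is a countable union of circles and hence null. So the inequality is strict whenever $|\cA_0\cap(w-\cA_0)|>0$. This happens on a set of $w$ of positive measure, because $\int|\cA_0\cap(w-\cA_0)|\,dw=|\cA_0|^2$.
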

\begin{proof}
    See \ref{append_4.6}.
\end{proof}
Then, it is straightforward to propose the following estimator for the effective wave speed based on the spatial average of $\cI(\bz_s(c_s,t_g)+\Delta\bz,c_s)$ as follows:

\begin{est}
For an appropriate choice of $l>0$ \blue{ (see remark \ref{rmk:choice_l}),
    \begin{align}\label{est: effective wave speed}
        \hat{c}_{\textit{eff}} = \operatorname{argmax}_{c_s} ~ \frac{1}{|\cS(l)|}\int_{\cS(l)} \red{d\Delta\bz}~|\cI(\bz_s(c_s,t_g)+\varepsilon\Delta\bz,c_s)|^2 .
    \end{align}}
\end{est}
We summarize the whole procedure of estimating the effective wave speed in the following pseudo-code. 
\begin{algorithm}[!ht]
\caption{Effective wave speed estimation}
\KwIn{\red{A grid of search wave speeds $c_s$}, a grid for spatial average $\{\Delta\bz_j\}_{j=1}^N$, an appropriate choice of $t_g$, typical wave length in the medium $\varepsilon$, measurement data $u(t,\bx_r)$.}
\KwOut{The estimator $\hat{c}_{\textit{eff}}$}
Calculate the cross-correlation\;

\ForEach{fixed searching wave speed $c_s$}{

    Calculate the \red{\red{search point}} $\bz_s(c_s,t_g) = (0,0,c_st_g)$\;
    Calculate the quantity \red{$I = \frac{1}{N}\sum_j|\cI(\bz_s(c_s,t_g)+\varepsilon\Delta\bz_j,c_s)|^2$}\;
}

Calculate the estimators using (\ref{est: effective wave speed})\;
\Return{The estimator $\hat{c}_{\textit{eff}}$}\;
\end{algorithm}

Finally, we notice that the formula (\ref{eq: random narrowband}) indicates that the resolution of the effective wave speed estimator is given by $\frac{\lambda|\bz_g|}{a^2}$.

\section{Numerical experiments}\label{sec: numerical}

In this section, we present the numerical experiments carried out for the estimators introduced in sections \ref{sec: guide star}-\ref{sec: random wave speed}. All simulations are performed in MATLAB on a machine equipped with an M3 8-core CPU. For reference, the computation time for each numerical experiment is reported. We note that all algorithms can be further accelerated using high-performance GPUs, as the code is primarily based on matrix and tensor operations.

\blue{
We assume that the available data is the theoretical cross-correlation \(C^{(1)}\) given in~(\ref{eq: theo cross-cor}), which corresponds to the ensemble-averaged version of the empirical cross-correlation computed from raw recordings. It is known that the empirical cross-correlation converges to its theoretical counterpart as the recording time increases (see, e.g.,~\cite{Garnier_Papanicolaou_2016}), and the two differ only by finite-time statistical fluctuations.
}
\subsection{Homogeneous medium with a point reflector}\label{subsec: numerical homogeneous}
We consider a homogeneous background medium with wave speed $c_0=1$. The distribution of the noise sources is described by
\begin{align}\label{eq: source_region}
K(\bx) = \exp\!\left[-\left(\frac{(x_1+50)^2}{500} + \frac{(x_2+50)^2}{500} + \frac{(x_3-42.5)^2}{10}\right)\right].
\end{align}
The power spectral density is assumed to be $\hat{F}(\omega)=\omega^2 e^{-\omega^2}$. The sensor array lies in the plane \blue{$\operatorname{span}\{\widehat{\be}_1,\widehat{\be}_2\}$} with coordinates $(5j,5k,0)$ for $j,k=-2,-1,\dots,2$. The point reflector is positioned at $(-5,0,50)$ with \red{$\sigma_r \ell_r^3=0.01$}. The numerical configuration is illustrated in Figure~\ref{fig: numerical setup}.

For generating the synthetic wave data, we employ the domain \blue{$[-50,50]\times[-50,50]\times[-50,-35]$} in the $\{\hat{\be}_1,\hat{\be}_2,\hat{\be}_3\}$ coordinates, discretized into a $101\times 101\times 16$ grid for $K(\bx)$. The frequency range is discretized as $0 : 5\times 10^{-3} : 4$, and the time-lag variable $\tau$ is sampled on the grid $40 : 10^{-2} : 300$. The data-generation stage takes approximately 6.4 minutes.

\begin{figure}[!ht]
    \centering
    \begin{tikzpicture} 

  \foreach \i in {0,...,24} {
    \pgfmathsetmacro\y{4.0 - 0.1 * \i}  
    \pgfmathsetmacro\x{rnd * 0.6 + 0.4} 
    \draw (\x, \y) circle (2pt);
  }
 
\draw[thick] (3,2.5) rectangle (3.2,3.1);  
\node at (3.3,2.4) {$\cA$};               

\foreach \i in {1,...,5} {
    \pgfmathsetmacro\y{2.5 + \i * (3.1 - 2.5)/6}
    \draw (3, \y) -- (3.2, \y);
}
  
  \fill (6,2.6) circle (2pt);
  \node at (6.4,2.7) {$\bz_r$};
  
  \draw[-] (0.3,1.5) -- (6.7,1.5);
   \node at (3.4,0.8) {$\be_3$};
  \draw[-] (0.3,1.5) -- (0.3,4.2) ;
   \node at (-0.2,2.85) {$\be_1$};
  \draw (0.3,1.5) -- (0.4,1.5) node[left] {\small $-50$};
  \draw (0.3,2.85) -- (0.4,2.85) node[left] {\small $0$};
  \draw (0.3,4.2) -- (0.4,4.2) node[left] {\small $50$};

  \draw (0.3,1.5) -- (0.3,1.6) node[below=5pt] {\small $-50$};
  \draw (3.15,1.5) -- (3.15,1.6) node[below=5pt] {\small $0$};
  \draw (6,1.5) -- (6,1.6) node[below=5pt] {\small $50$};

    \end{tikzpicture}
    \caption{\red{The configuration of the numerical simulation. The sensor array lies
in the plane $\operatorname{span}\{\widehat{\be}_1,\widehat{\be}_2\}$,
with its center at the origin. The reflector is located at $(-5,0,50)$,
and the circles on the left represent the source region described by
$K(\bx)$, as defined in \eqref{eq: source_region}.}}
    \label{fig: numerical setup}
\end{figure}

\subsubsection{Localization and wave speed estimation}
 We choose the imaging window to be a square of side length $30$ in the plane \blue{$\operatorname{span}\{\widehat{\be}_1,\widehat{\be}_3\}$} centered at the reflector, with a grid step size of $0.5$. Figure~\ref{fig: simu_c_s} displays the numerical results of the generalized daylight migration imaging functional, \red{defined by
\begin{align}\label{eq: simu_imaging_func}
\mathcal{I}_{\mathrm{simu}}(\bz_s,c_s) = \sum_{j,l=1}^N  
    C^{(1)}\!\bra{\cT_s(\bz_s,\bx_l)+\cT_s(\bz_s,\bx_j),\bx_j,\bx_l},
\end{align}
where $\{\bx_j\}_{j=1}^N \subset \mathcal{A}$ denotes the set of sensor locations, $C^{(1)}$ is given in~\eqref{eq: theo cross-cor}, and $\cT_s$ is defined in \eqref{eq: def_time}.} We present the cases $c_s = 0.8$, $c_s = 1$, and $c_s=1.2$. \red{The plots of the imaging functionals are normalized by the maximum value of the imaging functional for $c_s=1$ over the entire search region, and the same normalization factor is used for all three values of $c_s$.} The black diamond indicates the theoretical focusing location given by~(\ref{eq: z_f}). The total computation time for producing the three images is about $3$ seconds. The simulation agrees well with the theoretical prediction, and the shift in the focusing point as $c_s$ varies is clearly observable.

\begin{figure}[!ht]
    \centering
    \includegraphics[width=1\textwidth]{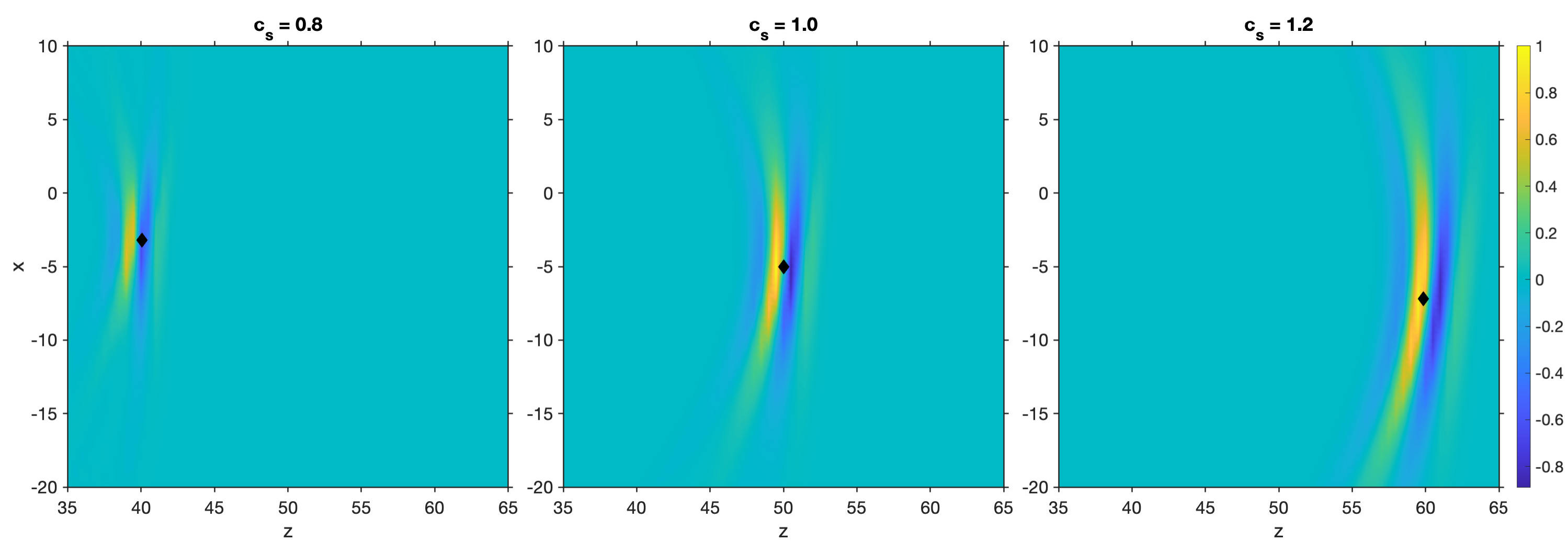} 
    \caption{\blue{Numerical simulations for different values of $c_s$. \red{We plot the imaging functionals defined in \eqref{eq: simu_imaging_func} after normalization.} From left to right, $c_s=0.8$, $c_s=1$, and $c_s=1.2$. The black diamond indicates the theoretically predicted focusing point $\bz_f$. In the middle panel ($c_s=1$), the diamond coincides with the true location of the reflector.}}
    \label{fig: simu_c_s}
\end{figure}

Next, we conduct numerical experiments to estimate the wave speed. 
To this end, we consider the profile of the function
\begin{equation}\label{eq: simu speed est}
\phi_1(c_s) := \max_{\bz_s} \operatorname{Envelope}\{\blue{\cI_{\mathrm{simu}}(\bz_s,c_s)}\},
\end{equation}
obtained from numerical simulations. 

The search domain for $\bz_s$ is discretized in spherical coordinates, with radius ranging from $30$ to $70$, polar angle from $0$ to $12^\circ$, and azimuthal angle from $-8^\circ$ to $8^\circ$, using a $300\times 80\times 80$ grid. The wave speed is sampled from $0.7$ to $1.3$ with step size $0.02$. The entire computation takes approximately $3$ minutes.

For comparison, we also compute the theoretical prediction
\begin{align}\label{eq: theoretical speed estimation}
    \phi_2(c_s) := \operatorname{Envelope}\{\caP(0,c_s)\}
\simeq
\left|\int_{0}^\infty d\omega~
\bra{-i\omega\hat{F}(\omega)}
\cG^2_{\alpha_r,\alpha_f}
\bra{0,0,0,
\frac{-a_0^2\omega}{c_0|\bz_r|}
\left(\left(\frac{c_0}{c_s}\right)^2 - 1\right)
}\right|.
\end{align}

Figure~\ref{fig: wave speed estiamtion} displays both $\phi_1(c_s)$ and $\phi_2(c_s)$, each appropriately normalized. The theoretical and numerical curves agree well, and the wave speed can be accurately recovered by maximizing $\phi_1(c_s)$.
\begin{figure}[!ht]
    \centering
    \includegraphics[width=0.5\textwidth]{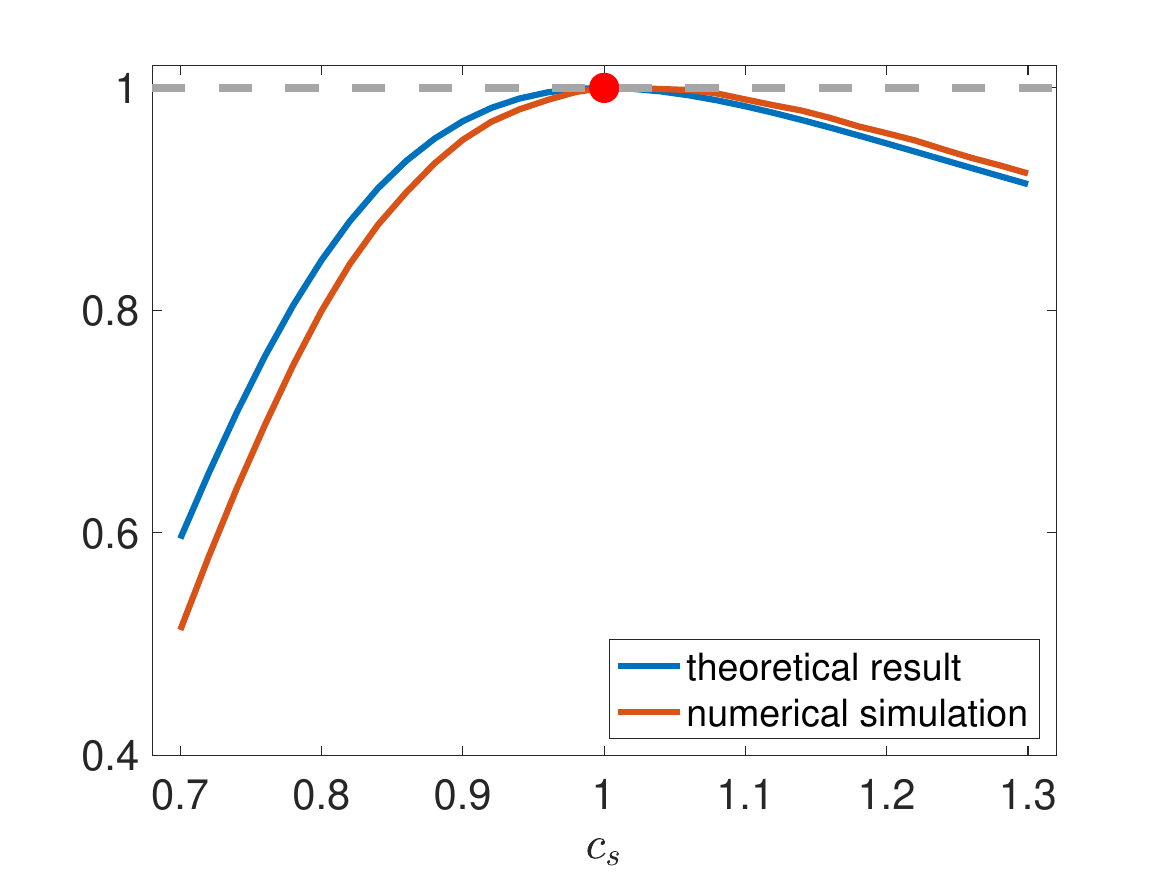}
    \caption{\red{Simulation and theoretical results for the wave speed estimation.
The red line represents $\phi_1(c_s)$, while the blue line represents
$\phi_2(c_s)$. Here, $\phi_1(c_s)$ is defined in
\eqref{eq: simu speed est}, and $\phi_2(c_s)$ is defined in
\eqref{eq: theoretical speed estimation}. For the plot of
$\phi_2(c_s)$, we use $a_0=10$, $\bz_r=(-5,0,50)$, and $c_0=1$,
consistently with the simulation setup. The peak is attained at
$c_s=c_0=1$.}}
    \label{fig: wave speed estiamtion}
\end{figure}

\subsubsection{Monte Carlo experiment}
We conduct Monte Carlo simulations to assess the statistical stability of the proposed wave-speed estimator in a homogeneous background medium. We consider $100$ independent realizations of the noisy sources. For each realization, $500$ sampling points are drawn according to the density function $K(\bx)$, and the corresponding cross-correlations are computed. The discretization grid for the wave data generalization is the same as that used in the previous experiments.

For wave-speed estimation, \red{the grid of search wave speeds $c_s$} is chosen as $0.9{:}0.005{:}1.1$, and the search domain is the same as that used in the previous experiments except for the refinement of the spatial grid in the region of interest to $500 \times 100 \times 100$. The statistical results are described in the following. The estimated mean wave speed is $1.0379$, the sample variance is $1.4151 \times 10^{-4}$, and the root mean square error (RMSE) is $0.0397$. The total computational time for the Monte Carlo simulations is approximately $20$ hours.

We also evaluate the relative localization errors for the reflector using the estimated wave speed and the true wave speed from each Monte Carlo trial. We denote the estimator using estimated wave speed as $\hat{\bz}_{r,est}$, and the one using the true wave speed as $\hat{\bz}_{r,true}$. The relative range and cross-range errors are defined as

\begin{align*}
    &E_{r,est} = \frac{\lb \hat{\bz}_{r,est} , \bz_{r}\rb}{|\bz_r|^2}, \quad E_{c,est} = \frac{\sqrt{|\hat{\bz}_{r,est}-\bz_{r}|^2-E_{r,est}}}{|\bz_r|},\\
    &E_{r,true} = \frac{\lb \hat{\bz}_{r,true} , \bz_{r}\rb}{|\bz_r|^2}, \quad E_{c,true} = \frac{\sqrt{|\hat{\bz}_{r,true}-\bz_{r}|^2-E_{r,true}}}{|\bz_r|}.
\end{align*}
We display the mean, sample variance, and RMSE of these errors in Table~\ref{tab:relative_error_stats}. We observe that the localization error in the range direction is impacted by the uncertainty in the estimated wave speed. The localization error in the cross-range direction, however, is not sensitive to the estimated wave speed. \blue{Although the estimated mean slightly exceeds the theoretical value $(c_0=1)$, the estimator exhibits low Monte Carlo variance, indicating good statistical stability. The remaining discrepancy is primarily due to finite sampling and discretization errors.}

\begin{table}[t]
\centering
\begin{tabular}{cccc}
\toprule
\textbf{Error} & \textbf{Mean} & \textbf{Variance} & \textbf{RMSE} \\
\midrule
$E_{r,est}$

& $3.795\times10^{-2}$ 
& $1.517\times10^{-4}$ 
& $3.988\times10^{-2}$ \\

$E_{r,true}$
& $-9.776\times10^{-4}$ 
& $3.262\times10^{-12}$ 
& $9.776\times10^{-4}$ \\

\midrule
$E_{c,est}$
& $2.366\times10^{-3}$ 
& $1.050\times10^{-6}$ 
& $2.576\times10^{-3}$ \\

$E_{c,true}$
& $2.241\times10^{-3}$ 
& $7.607\times10^{-7}$ 
& $2.403\times10^{-3}$ \\
\bottomrule
\end{tabular}
\caption{Mean, variance, and root mean square error (RMSE) of relative range and cross-range localization errors over 100 Monte Carlo trials. 
All errors are normalized by the true target range.}
\label{tab:relative_error_stats}
\end{table}

\subsection{Random medium with $\ell_c \ll \lambda$}
In this section,we conduct numerical experiments to validate the methodology for estimating the effective wave speed in a random medium, we plot the profile of $\EE \left[\left| \cI(\bz_s(c_s,t_g),c_s) \right|^2\right]$, where the point spread function without the narrowband approximation is given by~(\ref{eq: random narrowband}).  The parameters are set as $a_0=10$, $c_0=1$, and $t_g=50$. We choose
\begin{align}\label{eq: theo_example}
    \red{F(t)} = e^{-\frac{t^2}{2\sigma^2}}\cos(2\pi\nu_0 t),
\qquad \nu_0 = 10, \quad \sigma = 0.3,
\end{align}
for which the corresponding \red{carrier frequency $\omega_0=2\pi\nu_0$ and bandwidth $B_H=\frac{1}{\sigma}$}. The resulting theoretical profile is shown in Figure~\ref{fig: polt_theo_random}, whose peak is attained at $c_s=c_0=1$.
\begin{figure}[!ht]
    \centering
    \includegraphics[width=0.5\linewidth]{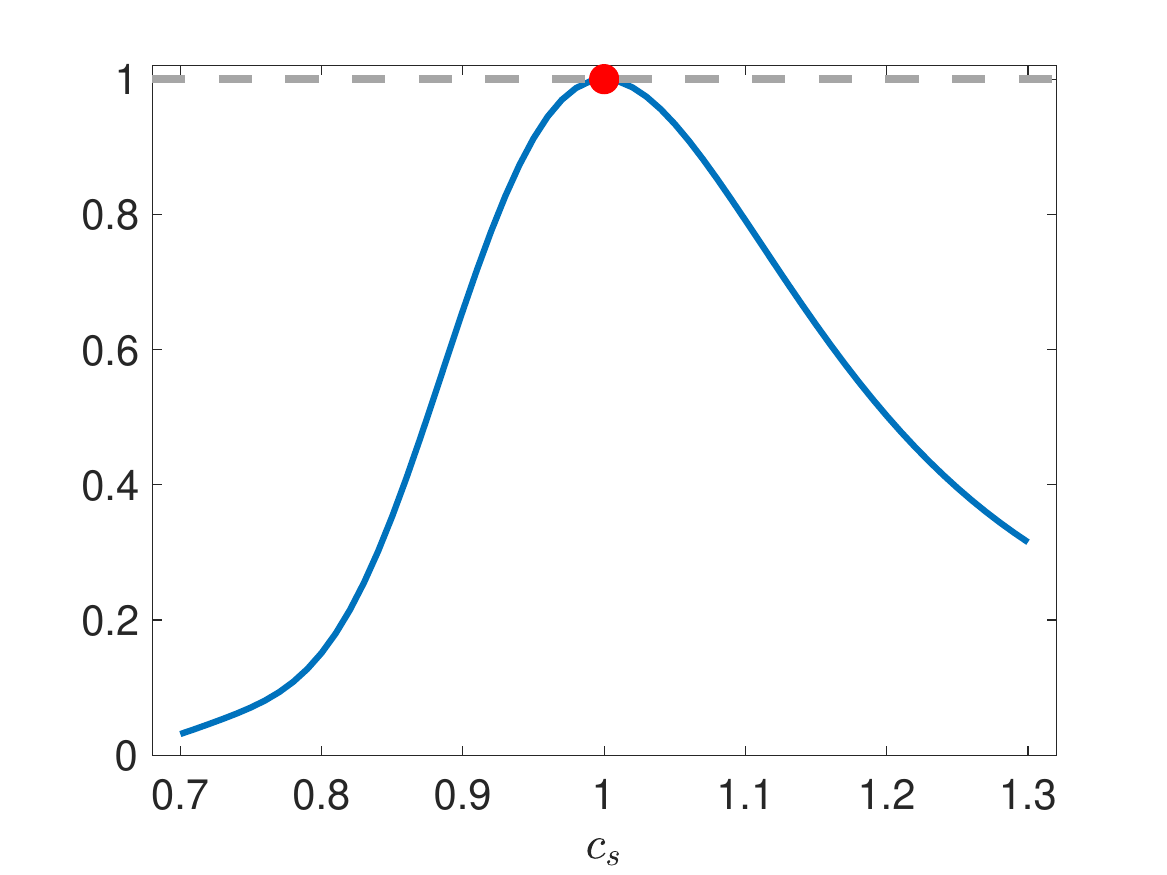}
    \caption{\red{Normalized theoretical profile of
$\EE\left[\left|\cI(\bz_s(c_s,t_g),c_s)\right|^2\right]$
given by \eqref{eq: random_not_narrowband}. We use
$a_0=10$, $c_0=1$, $t_g=50$, and the pulse profile
$F_0(t)$ given by \eqref{eq: theo_example}. The peak is attained at
$c_s=c_0=1$.}}
    \label{fig: polt_theo_random}
\end{figure}

\section{Conclusion}\label{sec: conclusion}

In this paper, we investigated passive imaging in the daylight configuration and developed a mathematical framework to quantitatively analyze the impact of wave-speed mismatch in backpropagation-based reflector imaging. When the searching speed matches the true background speed, our results recover the classical resolution analysis for point-reflector imaging. In the presence of mismatch, we derived explicit formulas characterizing both the spatial shift of the reflector and the defocusing of the associated point spread function. This analysis naturally leads to a novel estimator for the background wave speed.

We further extended the framework to random media and introduced the concept of a virtual guide star, which enables an effective and robust wave-speed estimator under daylight illumination. For clarity of exposition and without loss of generality, our analysis focused on the case where the virtual guide star lies on the axis perpendicular to the sensor array.

\blue{These results demonstrate that the focusing offset and defocusing effects caused by wave speed mismatch in passive imaging should no longer be viewed as a source of imaging error but rather as a useful source of inversion information.}

We expect that the theoretical framework developed here provides a foundation for further advances in wave-speed estimation within the broader field of passive imaging.

\section*{Acknowledgments}

The authors would like to thank the anonymous reviewers for their
helpful comments. This work was partially supported by Agence Nationale
de la Recherche (Grant No. ANR-23-CE30-0021).

\newpage
\appendix
\section{Proofs of the results in Section \ref{sec: guide star}}
\subsection{Proof of Theorem \ref{thm: functional}}\label{append_3.1}
To begin with, we observe that 
\begin{align*}
     C^{(1)}(\tau,\bx_1,\bx_2) =  C^{(1)}(-\tau,\bx_2,\bx_1),
\end{align*}
we write 
\begin{align*}
    \cI(\bz_s,c_s) \approx \cI_{\text{\Rn{1}}+}(\bz_s,c_s)+\cI_{\text{\Rn{2}}+}(\bz_s,c_s)+\cI_{\text{\Rn{1}}-}(\bz_s,c_s)+\cI_{\text{\Rn{2}}-}(\bz_s,c_s),
\end{align*}
where 
\begin{align}\label{eq: I_1_plus_neg}
    \cI_{\text{\Rn{1}}\pm}(\bz_s,c_s) = \frac{N^2}{2|\cA|^2} \iint_{\cA^2} d\bx_1 d\bx_2 \ C^{(1)}_{\text{\Rn{1}}}\bra{\pm(\cT_s(\bz_s,\bx_2)+\cT_s(\bz_s,\bx_1)),\bx_1,\bx_2},
\end{align}
and
\begin{align}\label{eq: I_2_plus_neg}
    \cI_{\text{\Rn{2}}\pm}(\bz_s,c_s) = \frac{N^2}{2|\cA|^2} \iint_{\cA^2} d\bx_1 d\bx_2 \ C^{(1)}_{\text{\Rn{2}}}\bra{\pm(\cT_s(\bz_s,\bx_2)+\cT_s(\bz_s,\bx_1)),\bx_1,\bx_2}.
\end{align}
We first focus on the calculation of $\cI_{\text{\Rn{1}}+}$. We have \blue{
\begin{align*}
     C^{(1)}_{\text{\Rn{1}}}\bra{\tau,\bx_1,\bx_2}=\frac{\sigma_r \ell_r^3}{2^7\pi^4c_0^2\varepsilon^2} \iint d\by d\omega ~\frac{\omega^2\hat{F}(\omega)K(\by)}{|\bx_2-\bz_r||\bz_r-\by||\bx_1-\by|}e^{i\frac{\Phi_{\text{\Rn{1}}}(\omega,\tau,\bx_1,\bx_2,\by)}{\varepsilon}},
\end{align*}}
where \blue{
\begin{align*}
    \Phi_{\text{\Rn{1}}}(\omega,\tau,\bx_1,\bx_2,\by) = \omega(\red{\cT(\bx_2,\bz_r)}+\cT(\bz_r,\by)-\cT(\bx_1,\by)-\tau).
\end{align*}}
Therefore, we derive that
\begin{align}\label{eq: imaging function 1+}
    \cI_{\text{\Rn{1}}+}(\bz_s,c_s) = \frac{\sigma_r \ell_r^3}{2^8\pi^4c_0^2\varepsilon^2}\frac{N^2}{|\cA|^2} \iint_{\cA^2}d\bx_1 d \bx_2\iint d\by d\omega~\frac{\omega^2\hat{F}(\omega)K(\by)}{|\bx_2-\bz_r||\bz_r-\by||\bx_1-\by|}e^{i\frac{\Phi_{\text{\Rn{1}}+}(\omega,\by,\bx_1,\bx_2,\bz_s)}{\varepsilon}},
\end{align}
where 
\begin{align*}
    \Phi_{\text{\Rn{1}}+}(\omega,\by,\bx_1,\bx_2,\bz_s) = \omega(\cT(\bx_2,\bz_r)+\cT(\bz_r,\by)-\cT(\bx_1,\by)-\cT_s(\bx_1,\bz_s)-\cT_s(\bx_2,\bz_s))
\end{align*}
The analysis of the imaging function highly relies on the analysis of the term $\Phi_{\text{\Rn{1}}+}$, in the following, we divide the calculation of the (generalized) travel times into two parts.

First, we consider $\cT(\bx_2,\bz_r)-\cT_s(\bx_2,\bz_s)$.

Let the \red{search point} be
\[
\bz_s = \bz_f + \varepsilon^{\frac{1}{2}}\br,
\]
where $\bz_f$ is the focusing point to be determined. By Taylor expansion, we derive  
\begin{align*}
    &|\bz_r-\bx_2| = |\bz_r| - \varepsilon^{\frac{1}{2}} \frac{\lb \bz_r,\widetilde{\bx_2}\rb}{|\bz_r|}+\cO(\varepsilon),\\
    &\blue{|\bz_s-\bx_2| = |\bz_s| - \varepsilon^{\frac{1}{2}} \frac{\lb \bz_s,\widetilde{\bx_2}\rb}{|\bz_s|}+\cO(\varepsilon) = |\bz_s|  -\varepsilon^{\frac{1}{2}} \frac{\lb \bz_f,\widetilde{\bx_2}\rb}{|\bz_f|}+\cO(\varepsilon).}
\end{align*}
To avoid rapid phase terms in the integral (\ref{eq: imaging function 1+}), we derive
\begin{itemize}
    \item For $\cO(1)$ terms: $\frac{|\bz_r|}{c_0} = \frac{|\bz_f|}{c_s}$.
    \item \blue{For $\cO(\varepsilon^{\frac{1}{2}})$ terms: $\frac{\lb \bz_r,\widetilde{\bx_2}\rb}{c_0|\bz_r|}-\frac{\lb \bz_f,\widetilde{\bx_2}\rb}{ c_s|\bz_f|}=0$ holds for all $\br\in\RR^3$, and $\widetilde{\bx_2}\in \RR^2$.}
\end{itemize}
\blue{The second condition implies that $\lb \bz_f-v^2\bz_r,\hat{\be_2} \rb =0$. By similar arguments, $\lb \bz_f-v^2\bz_r,\hat{\be_1} \rb =0$. The two conditions indicate that $\bz_f^\perp-v^2\bz_r^\perp =0 $.
The calculation above gives the following equations for $\bz_f$:}
\begin{align*}
    |\bz_f| = v|\bz_r|, \quad \bz_f^\perp = v^2\bz_r^\perp,
\end{align*}
whose solution provides the position of $\bz_f$:
\begin{align*}
    \bz_f = \bra{v^2\bz_r^\perp,v\sqrt{(z_r^{\parallel})^2+(1-v^2)|\bz_r^\perp|^2}}, 
\end{align*}
\blue{where $v := \frac{c_s}{c_0}$ satisfies 
\begin{align}
    v  \in \left\{
    \begin{array}{ll}
        \left(0, \sqrt{1+\bra{\frac{z_{r,3}}{|\bz_{r}^\perp|}}^2}\right], & \text{if } \bz_{r}^\perp\ne0,\\
        (0,\infty), & \text{if } \bz_{r}^\perp  =0.
    \end{array}
    \right.
\end{align}}
In the following calculation, we denote $\bz_f=(z_{f,1},0,z_{f,3})$ for simplicity.

Meanwhile, the two conditions above lead us to the following correct scale to proceed the analysis to the profile of imaging function. We consider
\begin{align*}
    \bz_s = \bz_f+\br_s,
\end{align*}
where 
\begin{align*}
    \br_s = \varepsilon^{\frac{1}{2}}v^2\eta_1 \hat{\bg}_1+\varepsilon^{\frac{1}{2}}v^2\eta_2 \hat{\bg}_2+\varepsilon v\eta_3 \hat{\bg}_3.
\end{align*}
We now proceed to the detailed calculation of $\cT(\bx_2,\bz_r)-\cT_s(\bx_2,\bz_s)$. By Taylor's expansion, we get
\begin{align}
   \blue{ |\bz_r-\bx_2| \simeq |\bz_r| - \varepsilon^{\frac{1}{2}} \frac{\lb \bz_r,\widetilde{\bx_2}\rb}{|\bz_r|}+\varepsilon\frac{|\bz_r|^2|\widetilde{\bx_2}|^2-\lb\bz_r,\widetilde{\bx_2}\rb^2}{2|\bz_r|^3},}
\end{align}
and
\begin{align}\label{eq: z_s-x_2}
    |\bz_s-\bx_2| 
    &\blue{\simeq |\bz_s| - \varepsilon^{\frac{1}{2}} \frac{\lb \bz_s,\widetilde{\bx_2}\rb}{|\bz_s|}+\varepsilon\frac{|\bz_s|^2|\widetilde{\bx_2}|^2-\lb\bz_s,\widetilde{\bx_2}\rb^2}{2|\bz_s|^3}}\notag\\
    & = v|\bz_r|-\varepsilon^{\frac{1}{2}}\frac{v\lb\bz_r,\widetilde{\bx_2}\rb}{|\bz_r|}\notag\\
    &~+\frac{\varepsilon}{2v|\bz_r|}\bra{-2v^2(\eta_1\lb\widetilde{\bx_2},\hat{\bg}_1\rb+\eta_2\lb\widetilde{\bx_2},\hat{\bg}_2\rb)+|\widetilde{\bx_2}|^2-\red{v^2\lb\widetilde{\bx_2},\hat{\bf}_3\rb^2}+2v^2|\bz_r|\eta_3+v^4(\eta_1^2+\eta_2^2)}
\end{align}
Hence, we derive that 
\begin{align}
    &\cT(\bx_2,\bz_r)-\cT_s(\bx_2,\bz_s) 
    \simeq -\varepsilon\frac{\eta_3}{c_0}+\varepsilon\frac{|\widetilde{\bx_2}|^2}{2|\bz_r|c_0}\bra{1-\frac{1}{v^2}} +\frac{\varepsilon}{|\bz_r|c_0}\bra{\eta_1\lb\widetilde{\bx_2},\hat{\bg}_1\rb+\eta_2\lb\widetilde{\bx_2},\hat{\bg}_2\rb-\frac{v^2}{2}(\eta_1^2+\eta_2^2)}\notag\\
    &~= -\varepsilon\frac{\eta_3}{c_0}+\varepsilon\frac{|\widetilde{\bx_2}|^2}{2|\bz_r|c_0}\bra{1-\frac{1}{v^2}} +\frac{\varepsilon}{|\bz_r|c_0}\bra{\frac{\widetilde{\bx_{2,1}}z_{f,3}}{|\bz_f|}\eta_1+\widetilde{\bx_{2,2}}\eta_2-\frac{v^2}{2}(\eta_1^2+\eta_2^2)}
\end{align}

Now, we focus on the term $\cT(\bz_r,\by)-\cT(\bx_1,\by)-\cT_s(\bx_1,\bz_s)$.

For the leading order, we have 
\begin{align*}
    \cT(\bz_r,\by)-\cT(\bx_1,\by)-\cT_s(\bx_1,\bz_s) \simeq \frac{|\bz_r-\by|}{c_0}-\frac{|\by|}{c_0}-\frac{|\bz_f|}{c_s} = \frac{|\bz_r-\by|-|\by|-|\bz_r|}{c_0}.
\end{align*}
To avoid the \red{rapid phase} term in the integral (\ref{eq: imaging function 1+}), we parametrize $\by$ as follows:
\begin{align*}
    \by = |\bz_r|\bra{-s_3 \hat{\bf}_3+\varepsilon^{\frac{1}{2}}s_1\hat{\bf}_1+\varepsilon^{\frac{1}{2}}s_2\hat{\bf}_2}.
\end{align*}
The calculation of $|\bz_s-\bx_1|$ is similar to (\ref{eq: z_s-x_2}), and we calculate that
\begin{align}
    |\bz_r-\by| \simeq |\bz_r|\bra{(1+s_3)+\varepsilon\frac{s_1^2+s_2^2}{2(1+s_3)}},
\end{align}
and
\begin{align}
    &|\bx_1-\by| 
    \simeq |\bz_r|s_3+\varepsilon^{\frac{1}{2}}\lb\widetilde{\bx_1},\hat{\bf}_3\rb\notag\\
    &~+\varepsilon\bra{-\frac{\lb\widetilde{\bx_1},\hat{\bf}_3\rb^2}{2|\bz_r|s_3}+\frac{1}{2|\bz_r|s_3}\bra{|\widetilde{\bx_1}|^2+|\bz_r|^2(s_1^2+s_2^2)-2|\bz_r|\left(s_1\lb\widetilde{\bx_1},\hat{\bf}_1\rb)+s_2\lb\widetilde{\bx_1},\hat{\bf}_2\rb\right)  } }
\end{align}
Then, we derive that 
\begin{align}
    &\cT(\bz_r,\by)-\cT(\bx_1,\by)-\cT_s(\bx_1,\bz_s) 
    \simeq -\varepsilon\frac{\eta_3}{c_0}-\varepsilon\frac{v^2(\eta_1^2+\eta_2^2)}{2|\bz_r|c_0}-\varepsilon\frac{|\bz_r|}{c_0}\frac{s_1^2+s_2^2}{2s_3(1+s_3)}+\varepsilon\frac{\lb\widetilde{\bx_1},\bz_r\rb^2}{2|\bz_r|^3c_0}\bra{1+\frac{1}{s_3}} \notag\\
    &~~-\varepsilon\frac{|\widetilde{\bx_1}|^2}{2|\bz_r|c_0}\bra{\frac{1}{s_3}+\frac{1}{v^2}}+\varepsilon\blue{\frac{s_1\lb\widetilde{\bx_1},\hat{\bf}_1\rb+s_2\lb\widetilde{\bx_1},\hat{\bf}_2\rb}{c_0s_3}}+\varepsilon\frac{\eta_1\lb\widetilde{\bx_1},\hat{\bg}_1\rb+\eta_2\lb\widetilde{\bx_1},\hat{\bg}_2\rb}{|\bz_r|c_0}\notag\\
    &~=-\varepsilon\frac{\eta_3}{c_0}-\varepsilon\frac{v^2(\eta_1^2+\eta_2^2)}{2|\bz_r|c_0}-\varepsilon\frac{|\bz_r|}{c_0}\frac{s_1^2+s_2^2}{2s_3(1+s_3)}+\varepsilon\frac{s_1}{c_0s_3}\frac{z_{r,3}}{|\bz_r|}\widetilde{x_{1,1}}+\varepsilon\frac{s_2}{c_0s_3}\widetilde{x_{1,2}}+\varepsilon\frac{\widetilde{x_{1,1}}z_{f,3}}{c_0|\bz_r||\bz_f|}\eta_1+\varepsilon\frac{\widetilde{x_{1,2}}}{|\bz_r|c_0}\eta_2\notag\\
    &~~+\varepsilon\frac{\widetilde{x_{1,1}}^2z^2_{r,1}}{2c_0|\bz_r|^3}\bra{1+\frac{1}{s_3}}-\varepsilon\frac{|\widetilde{\bx_{1}}|^2}{2c_0|\bz_r|}\bra{\frac{1}{s_3}+\frac{1}{v^2}}
\end{align}
We compute the integrals in $s_1$ and $s_2$ \blue{using Lemma \ref{lem: integral}:}
\begin{align*}
    &\iint ds_1 ds_2 ~ \exp{\bra{i\frac{\omega}{\varepsilon}\bra{\cT(\bz_r,\by)-\cT(\bx_1,\by)-\cT_s(\bx_1,\bz_s)}}} \notag \\
    &~\simeq -\frac{2\pi ic_0s_3(1+s_3)}{\omega|\bz_r|}\exp\bra{i\frac{\omega}{c_0}\bra{\frac{|\widetilde{\bx_1}|^2}{2|\bz_r|}\bra{1-\frac{1}{v^2}}-\eta_3-\frac{v^2(\eta_1^2+\eta_2^2)}{2|\bz_r|c_0}+\frac{\widetilde{x_{1,1}}z_{f,3}}{|\bz_f||\bz_r|}\eta_1+\frac{\widetilde{x_{1,2}}}{|\bz_r|}\eta_2}}.
\end{align*}
It gives 
\begin{align*}
     &\iint ds_1 ds_2 ~ \red{e^{i\frac{\Phi_{\text{\Rn{1}}+}(\omega,\by,\bx_1,\bx_2,\bz_s)}{\varepsilon}}} \simeq
     -\frac{2\pi ic_0s_3(1+s_3)}{\omega|\bz_r|}\notag\\
     &~\times\exp\bra{i\frac{\omega}{c_0}\bra{\frac{|\widetilde{\bx_1}|^2+|\widetilde{\bx_2}|^2}{2|\bz_r|}\bra{1-\frac{1}{v^2}}-2\eta_3-\frac{v^2(\eta_1^2+\eta_2^2)}{|\bz_r|}+\frac{z_{f,3}\eta_1}{|\bz_f||\bz_r|}(\widetilde{x_{1,1}}+\widetilde{\bx_{2,1}})+\frac{\eta_2}{|\bz_r|}(\widetilde{x_{1,2}}+\widetilde{\bx_{2,2}})}}.
\end{align*}

Therefore, we obtain
\begin{align*}
    \cI_{\text{\Rn{1}}+}(\bz_s,c_s) \blue{\approx}& \frac{\sigma_r\ell_r^3N^2\cK(\mathbf{0},\bz_r)}{\blue{2^7}\pi^3c_0|\bz_r|^2\varepsilon}\int_\cB d\omega ~ \bra{-i\omega\hat{F}(\omega)}\exp\bra{-i\frac{\omega}{c_0}\bra{2\eta_3+\frac{c_s^2(\eta_1^2+\eta_2^2)}{c_0^2|\bz_r|}}} \notag\\
    & \times \cG^2_{\alpha_r,\alpha_f}\bra{\frac{-a_0\omega}{c_0|\bz_r|}\eta_1,
\frac{-a_0\omega}{c_0|\bz_r|}\eta_2,0,
\frac{-a_0^2\omega}{c_0|\bz_r|}\bra{\bra{\frac{c_0}{c_s}}^2-1}}
\end{align*}
Similar calculation can be done for $\cI_{\text{\Rn{2}}+}$, $\cI_{\text{\Rn{1}}-}$, and $\cI_{\text{\Rn{2}}-}$. Only $\cI_{\text{\Rn{2}}-}$ provides non-vanishing contribution. Hence, the expression in (\ref{eq: functional result}) is derived.

\subsection{Proof of Proposition \ref{prop: functional}}\label{append_3.2}
We revise the parametrization of the \red{search point} as $$\bz_s = \bz_f + \varepsilon^{\frac{1}{2}}v^2\eta_1 \hat{\bg}_1+\varepsilon^{\frac{1}{2}}v^2\eta_2 \hat{\bg}_2+v\eta_3 \hat{\bg}_3.$$
\red{In the following calculation, the relationship $|\bz_f| = v|\bz_r|$ will be frequently used.}
Compared to the previous one, the scaling in direction $\hat{\bg}_3$ is changed. We conduct a similar calculation and get 

\begin{align}\label{eq: z_s-x_2 new}
    |\bz_s-\bx_2| 
    &\simeq |\bz_s| - \varepsilon^{\frac{1}{2}} \frac{\lb \bz_s,\widetilde{\bx_2}\rb}{|\bz_s|}+\varepsilon\frac{|\bz_s|^2|\widetilde{\bx_2}|^2-\lb\bz_s,\widetilde{\bx_2}\rb^2}{\red{2|\bz_s|^3}}\notag\\
    & = |\bz_f|+v\eta_3-\varepsilon^{\frac{1}{2}}\lb\widetilde{\bx_2},\hat{\bg}_3\rb \notag \\
    &~\ +\varepsilon\bra{\frac{v^4(\eta_1^2+\eta_2^2)}{2(|\bz_f|+v\eta_3)} -\frac{%
    v^2\left( \eta_1\langle \widetilde{\bx_2}, \hat{\bg}_1 \rangle 
    + \eta_2\langle \widetilde{\bx_2}, \hat{\bg}_2 \rangle \right)%
    }{|\bz_f| + v\eta_3}+\frac{|\widetilde{\bx_2}|^2-\lb\widetilde{\bx_2},\hat{\bg}_3\rb^2}{2(|\bz_f|+v\eta_3)}}
\end{align}
Then, we have 
\begin{align}
    \cT(\bx_2,\bz_r)-\cT_s(\bx_2,\bz_s) 
    \simeq
    & \blue{-\frac{\eta_3}{c_0}+\frac{\varepsilon}{c_0}\left(\frac{|\widetilde{\bx_2}|^2}{2|\bz_r|}-\frac{|\widetilde{\bx_2}|^2}{2v(|\bz_f|+v\eta_3)}+\frac{\lb\widetilde{\bx_2},\hat{\bg}_3\rb^2}{2v(|\bz_f|+v\eta_3)}-\frac{\lb\widetilde{\bx_2},\hat{\bf}_3\rb^2}{2|\bz_r|}\right.}\notag\\
    &~\left. +\frac{v\left( \eta_1\langle \widetilde{\bx_2}, \hat{\bg}_1 \rangle 
    + \eta_2\langle \widetilde{\bx_2}, \hat{\bg}_2 \rangle \right)}{|\bz_f| + v\eta_3} -\frac{v^3(\eta_1^2+\eta_2^2)}{2(|\bz_f| + v\eta_3)}\right) \notag\\
    &= -\frac{\eta_3}{c_0}+\frac{\varepsilon}{c_0}\left(\frac{|\widetilde{\bx_2}|^2}{2}\bra{\frac{1}{|\bz_r|}-\frac{1}{v^2(|\bz_r|+\eta_3)}}+\frac{\widetilde{\bx_{2,1}}^2z_{r,1}^2}{2\red{|\bz_r|^2}}\bra{\frac{1}{|\bz_r|+\eta_3}-\frac{1}{|\bz_r|}}\right.\notag\\
    &~\left. \red{+}\frac{\widetilde{\bx_{2,1}}}{|\bz_r|+\eta_3}\frac{z_{f,3}}{|\bz_f|}\eta_1+\frac{\widetilde{\bx_{2,2}}}{|\bz_r|+\eta_3}\eta_2-\frac{v^2(\eta_1^2+\eta_2^2)}{2(|\bz_r|+\eta_3)}\right),
\end{align}
and
\begin{align}
    &\cT(\bz_r,\by)-\cT(\bx_1,\by)-\cT_s(\bx_1,\bz_s) 
    \simeq -\frac{\eta_3}{c_0}+\frac{\varepsilon}{c_0}\left( -\frac{(s_1^2+s_2^2)|\bz_r|}{2s_3(1+s_3)}+\frac{\widetilde{x_{1,1}}^2z_{r,1}^2}{2\red{|\bz_r|^2}}\bra{\frac{1}{s_3|\bz_r|}+\frac{1}{|\bz_r|+\eta_3}}\right.\notag\\
    &~\left.-\frac{|\widetilde{\bx_1}|^2}{2}\bra{\frac{1}{|\bz_r|s_3}+\frac{1}{v^2(|\bz_r|+\eta_3)}}+\blue{\frac{s_1}{s_3}\frac{z_{r,3}}{|\bz_{r}|}\widetilde{x_{1,1}}}+\frac{s_2}{s_3}\widetilde{x_{1,2}}-\frac{v^2(\eta_1^2+\eta_2^2)}{2(|\bz_r|+\eta_3)}+\frac{z_{f,3}}{|\bz_f|}\frac{\widetilde{x_{1,1}}\eta_1}{|\bz_r|+\eta_3}+\frac{\widetilde{x_{1,2}}\eta_2}{|\bz_r|+\eta_3}\right).
\end{align}
Computing the integral in $s_1$ and $s_2$ \blue{using Lemma \ref{lem: integral}}, we derive 
\begin{align*}
    &\iint ds_1 ds_2 ~ \red{e^{i\frac{\Phi_{\text{\Rn{1}}+}(\omega,\by,\bx_1,\bx_2,\bz_s)}{\varepsilon}}} \simeq -\frac{2\pi ic_0s_3(1+s_3)}{\omega|\bz_r|}\exp\bra{i\frac{\omega}{c_0}
    \bra{
    -2\frac{\eta_3}{\varepsilon}-\frac{v^2(\eta_1^2+\eta_2^2)}{|\bz_r|+\eta_3}+\varphi(\widetilde{\bx_1})+\varphi(\widetilde{\bx_2})
    }
    },
\end{align*}
where, for $\bx = (x_1,x_2)\in \RR^2 $, $\varphi(\bx)$ is defined by
\begin{align}
    \varphi(\bx) = \frac{z_{f,3}}{|\bz_f|}\frac{x_1\eta_1}{|\bz_r|+\eta_3}+\frac{x_2\eta_2}{|\bz_r|+\eta_3}+\bra{\frac{x_1^2}{2}\frac{z_{r,3}^2}{|\bz_r|^2}+\frac{x_2^2}{2}}\frac{\eta_3}{|\bz_r|(|\bz_r|+\eta_3)}+\frac{|\bx|^2}{2}\frac{1-\frac{1}{v^2}}{|\bz_r|+\eta_3}.
\end{align}
Therefore, we have 
\begin{align*}
    \cI_{\text{\Rn{1}}+}(\bz_s,c_s) \blue{\approx}& \frac{\sigma_r\ell_r^3N^2\cK(\mathbf{0},\bz_r)}{\blue{2^7}\pi^3c_0|\bz_r|^2\varepsilon}\int_\cB d\omega ~ \bra{-i\omega\hat{F}(\omega)}\exp\bra{-i\frac{\omega}{c_0}\bra{2\frac{\eta_3}{\varepsilon}+\frac{c_s^2(\eta_1^2+\eta_2^2)}{c_0^2(|\bz_r|+\eta_3)}}} \notag\\
    & \times \cG^2_{\alpha_r,\alpha_f}\bra{\frac{-a_0\omega}{c_0(|\bz_r|+\eta_3)}\eta_1,
\frac{-a_0\omega}{c_0(|\bz_r|+\eta_3)}\eta_2,\frac{-a_0^2\omega}{c_0|\bz_r|(|\bz_r|+\eta_3)}\eta_3,
\frac{-a_0^2\omega}{c_0(|\bz_r|+\eta_3)}\bra{\bra{\frac{c_0}{c_s}}^2-1}}
\end{align*}
Similar calculation can be done for $\cI_{\text{\Rn{2}}+}$, $\cI_{\text{\Rn{1}}-}$, and $\cI_{\text{\Rn{2}}-}$. Only $\cI_{\text{\Rn{2}}-}$ provides non-vanishing contribution. Hence, we have
\begin{align}
    \cI(\bz_s,c_s) \approx& \frac{\sigma_r\ell_r^3N^2\cK(\mathbf{0},\bz_r)}{2^6\pi^3c_0|\bz_r|^2\varepsilon}
    \operatorname{Re}\left\{\int_\cB d\omega ~ \bra{-i\omega\hat{F}(\omega)}\exp\bra{-i\frac{\omega}{c_0}\bra{2\frac{\eta_3}{\varepsilon}+\frac{c_s^2(\eta_1^2+\eta_2^2)}{c_0^2(|\bz_r|+\eta_3)}}}\right. \notag\\
    &\left. \times \cG^2_{\alpha_r,\alpha_f}\bra{\frac{-a_0\omega}{c_0(|\bz_r|+\eta_3)}\eta_1,
\frac{-a_0\omega}{c_0(|\bz_r|+\eta_3)}\eta_2,\frac{-a_0^2\omega}{c_0|\bz_r|(|\bz_r|+\eta_3)}\eta_3,
\frac{-a_0^2\omega}{c_0(|\bz_r|+\eta_3)}\bra{\bra{\frac{c_0}{c_s}}^2-1}}\right\}
\end{align}

\blue{Further, by the assumption that the function $\hat{F}(\omega)$ has the following form:}
\begin{align*}
    \hat{F}(\omega) = \frac{1}{\varepsilon \red{B_0}}\bra{\hat{F}_{0}\bra{\frac{\omega_0-\omega}{\varepsilon \red{B_0}}}+\hat{F}_{0}\bra{\frac{\omega_0+\omega}{\varepsilon \red{B_0}}} },
\end{align*}
we derive the following expression \blue{
\begin{align}
    \cI(\bz_s,c_s) \approx& \red{\frac{\sigma_r\ell_r^3N^2\cK(\mathbf{0},\bz_r)\omega_0}{2^4\pi^2c_0|\bz_r|^2\varepsilon}}
    \operatorname{Re}\left\{ -i\exp\bra{-i\frac{\omega_0}{c_0}\bra{2\frac{\eta_3}{\varepsilon}+\frac{c_s^2(\eta_1^2+\eta_2^2)}{c_0^2(|\bz_r|+\eta_3)}}} F_{0}\bra{-\frac{2B_0\eta_3}{c_0}}\right. \notag\\
    &\left. \times \cG^2_{\alpha_r,\alpha_f}\bra{\frac{-a_0\omega_0}{c_0(|\bz_r|+\eta_3)}\eta_1,
\frac{-a_0\omega_0}{c_0(|\bz_r|+\eta_3)}\eta_2,\frac{-a_0^2\omega_0}{c_0|\bz_r|(|\bz_r|+\eta_3)}\eta_3,
\frac{-a_0^2\omega_0}{c_0(|\bz_r|+\eta_3)}\bra{\bra{\frac{c_0}{c_s}}^2-1}}\right\}
\end{align}}
The condition that $|\br_s|\ll|\bz_r|$ gives the expression (\ref{eq: functional narrow}). When $B_0\ll B_c$, we get the expression (\ref{eq: functional narrow smaller}). When $B_0\gg B_c$, we get the expression (\ref{eq: functional narrow larger}).

\section{Proofs of the results in Section \ref{sec: random wave speed}}
\subsection{Proof of Theorem \ref{thm: random_psf}}\label{append_4.1}
\blue{Similar to the proof of Theorem \ref{thm: functional}, we have
\begin{align*}
    \cI(\bz_s,c_s) \approx \cI_{\text{\Rn{1}}+}(\bz_s,c_s)+\cI_{\text{\Rn{2}}+}(\bz_s,c_s)+\cI_{\text{\Rn{1}}-}(\bz_s,c_s)+\cI_{\text{\Rn{2}}-}(\bz_s,c_s),
\end{align*}
where $\cI_{\text{\Rn{1}}\pm}(\bz_s,c_s)$ and $\cI_{\text{\Rn{2}}\pm}(\bz_s,c_s)$ are given by (\ref{eq: I_1_plus_neg}) and (\ref{eq: I_2_plus_neg}) respectively.}

We first focus on the calculation of $\cI_{\text{\Rn{1}}+}$. We have
\begin{align*}
    C^{(1)}_{\text{\Rn{1}}}\bra{\tau,\bx_1,\bx_2}=\frac{1}{2^7\pi^4c_0^2\varepsilon^2} \int_{\cD}d\bz \rho_\mu(\bz) \iint d\by d\omega ~\frac{\omega^2\hat{F}(\omega)K(\by)}{|\bx_2-\bz||\bz-\by||\bx_1-\by|}e^{i\frac{\red{\Phi_{\text{\Rn{1}}}(\omega,\bx_1,\bx_2,\bz,\by,\tau)}}{\varepsilon}},
\end{align*}
where 
\begin{align*}
    \red{\Phi_{\text{\Rn{1}}}(\omega,\bx_1,\bx_2,\bz,\by,\tau) = \omega(\cT(\bx_2,\bz)+\cT(\bz,\by)-\cT(\bx_1,\by)-\tau).}
\end{align*}
Therefore, we derive that
\begin{align}
    \blue{\cI_{\text{\Rn{1}}+}(\bz_s,c_s) = \frac{1}{2^8\pi^4c_0^2\varepsilon^2}\frac{N^2}{|\cA|^2} \iint_{\cA^2}d\bx_1d\bx_2\int_{\cD}\rho_\mu(\bz)d\bz\iint d\by d\omega~\frac{\omega^2\hat{F}(\omega)K(\by)e^{i\frac{\Phi_{\text{\Rn{1}}+}(\omega,\by,\bx_1,\bx_2,\bz_s,\bz)}{\varepsilon}}}{\red{|\bx_2-\bz||\bz-\by||\bx_1-\by|}}},
\end{align}
where 
\begin{align*}
    \Phi_{\text{\Rn{1}}+}(\omega,\by,\bx_1,\bx_2,\bz_s,\bz) = \omega(\cT(\bx_2,\bz)+\cT(\bz,\by)-\cT(\bx_1,\by)-\cT_s(\bx_1,\bz_s)-\cT_s(\bx_2,\bz_s)).
\end{align*}
The analysis of the imaging function highly relies on the analysis of the term $\Phi_{\text{\Rn{1}}+}$, in the following, we divide the calculation of the (generalized) travel times into two parts.

First, we consider $\cT(\bx_2,\bz)-\cT_s(\bx_2,\bz_s)$:

By Taylor expansion, we have 
\begin{align}
    |\bx_2-\bz_s |\simeq |\bz_s| + \frac{|\widetilde{\bx_2}|^2}{2|\bz_s|}\varepsilon,
\end{align}
and 
\begin{align}
    |\bx_2-\bz| \simeq |\bz_g|+\frac{\varepsilon}{2|\bz_g|}\bra{\eta_1^2+\eta_2^2+2\eta_3|\bz_g|+|\widetilde{\bx_2}|^2-2(\eta_1\lb\hat{\be}_1,\widetilde{\bx_2}\rb+\eta_2\lb\red{\hat{\be}_2},\widetilde{\bx_2}\rb)}
\end{align}
Then, we have 
\begin{align}
    \cT(\bx_2,\bz)-\cT_s(\bx_2,\bz_s) \simeq \frac{\varepsilon}{c_0}\bra{\frac{\eta_1^2+\eta_2^2}{2|\bz_g|}+\eta_3+\frac{|\widetilde{\bx_2}|^2}{2|\bz_g|}\bra{1-\frac{1}{v^2}}-\frac{1}{|\bz_g|}(\eta_1\lb\hat{\be}_1,\widetilde{\bx_2}\rb+\eta_2\lb\hat{\be_2},\widetilde{\bx_2}\rb))}
\end{align}
To calculate the term $\cT(\bz,\by)-\cT(\bx_1,\by)-\cT_s(\bx_1,\bz_s)$, we introduce the parametrization of $\by$ as:
\begin{align*}
    \by = |\bz_g|\bra{-s_3 \hat{\be}_3+\varepsilon^{\frac{1}{2}}s_1\hat{\be}_1+\varepsilon^{\frac{1}{2}}s_2\hat{\be}_2}.
\end{align*}
By Taylor expansion, we have
\begin{align}
    &|\bx_1-\bz_s| \simeq |\bz_s|+\frac{|\widetilde{\bx_1}|^2}{2|\bz_s|}\varepsilon,\\
    &\blue{|\bz-\by| \simeq (1+s_3)|\bz_g|+\frac{\varepsilon}{2(1+s_3)|\bz_g|}\bra{2\eta_3(1+s_3)|\bz_g|+(\eta_1-\red{|\bz_g|}s_1)^2+(\eta_2-\red{|\bz_g|}s_2)^2}},\\
    &|\bx_1-\by| \simeq |\bz_g|s_3+\varepsilon\bra{\frac{|\widetilde{\bx_1}|^2}{2|\bz_g|s_3}+\frac{(s_1^2+s_2^2)|\bz_g|}{2s_3}-\frac{s_1\lb\widetilde{\bx_1},\hat{\be}_1\rb+s_2\lb\widetilde{\bx_1},\hat{\be}_2\rb}{s_3}}.
\end{align}
Hence, we have 
\begin{align}
    \cT(\bz,\by)-\cT(\bx_1,\by)&-\cT_s(\bx_1,\bz_s) \simeq\frac{\varepsilon}{c_0}\left(-\frac{|\widetilde{\bx_1}|^2}{2|\bz_g|}\bra{\frac{1}{v^2}+\frac{1}{s_3}}-\frac{(s_1^2+s_2^2)|\bz_g|}{2s_3}+\frac{s_1\lb\widetilde{\bx_1},\hat{\be}_1\rb+s_2\lb\widetilde{\bx_1},\hat{\be}_2\rb}{s_3}\right. \notag \\
    &\left.+\eta_3+\frac{1}{2(1+s_3)|\bz_g|}\bra{|\bz_g|^2(s_1^2+s_2^2)-2|\bz_g|(s_1\eta_1+s_2\eta_2)+(\eta_1^2+\eta_2^2)} \right).
\end{align}
Computing the integral in $s_1$ and $s_2$ \blue{using Lemma \ref{lem: integral}}, we derive 
\begin{align*}
    \iint ds_1 ds_2 ~ \red{e^{i\frac{\Phi_{\text{\Rn{1}}+}(\omega,\by,\bx_1,\bx_2,\bz_s,\bz)}{\varepsilon}} }\simeq -\frac{2\pi ic_0s_3(1+s_3)}{\omega|\bz_g|}\exp\bra{i\frac{\omega}{c_0}
    \bra{
    2\eta_3+\frac{\eta_1^2+\eta_2^2}{|\bz_g|}+\psi(\widetilde{\bx_1})+\psi(\widetilde{\bx_2})
    }
    },
\end{align*}
where, for $\bx = (x_1,x_2)\in \RR^2 $, \red{$\psi(\bx)$} is defined by
\begin{align}
    \psi(\bx) = \frac{|\bx|^2}{2|\bz_g|}\bra{1-\frac{1}{v^2}}-\frac{\eta_1x_1+\eta_2x_2}{|\bz_g|}.
\end{align}

Therefore, we obtain \blue{
\begin{align*}
    \cI_{\text{\Rn{1}}+}(\bz_s,c_s) =& \frac{N^2\cK(\mathbf{0},\bz_g)\varepsilon}{2^7\pi^3c_0|\bz_g|^2}\int d \bet~ \rho_\mu(\bz_g(t_g)+\br(\bet)) \int_\cB d\omega ~ \bra{-i\omega\hat{F}(\omega)}\exp\bra{i\frac{\omega}{c_0}\bra{2\eta_3+\frac{(\eta_1^2+\eta_2^2)}{|\bz_g|}}} \notag\\
    & \times \cG^2_{1,1}\bra{\frac{a_0\omega}{c_0|\bz_g|}\eta_1,
    \frac{a_0\omega}{c_0|\bz_g|}\eta_2,0,
    \frac{-a_0^2\omega}{c_0|\bz_g|}\bra{\bra{\frac{c_0}{c_s}}^2-1}}.
 \end{align*}}
Similar calculation can be done for $\cI_{\text{\Rn{2}}+}$, $\cI_{\text{\Rn{1}}-}$, and $\cI_{\text{\Rn{2}}-}$. Only $\cI_{\text{\Rn{2}}-}$ provides non-vanishing contribution. Hence, we have \blue{
\begin{align*}
    \cI(\bz_s,c_s) =& \frac{N^2\cK(\mathbf{0},\bz_g)\varepsilon}{2^6\pi^3c_0|\bz_g|^2}\int d \bet~  \rho_\mu(\bz_g(t_g)+\br(\bet)) \operatorname{Re}\left\{\int_\cB d\omega ~ \bra{-i\omega\hat{F}(\omega)}\exp\bra{i\frac{\omega}{c_0}\bra{2\eta_3+\frac{(\eta_1^2+\eta_2^2)}{|\bz_g|}}}\right. \notag\\
    &\left. \times \cG^2_{1,1}\bra{\frac{a_0\omega}{c_0|\bz_g|}\eta_1,
    \frac{a_0\omega}{c_0|\bz_g|}\eta_2,0,
    \frac{-a_0^2\omega}{c_0|\bz_g|}\bra{\bra{\frac{c_0}{c_s}}^2-1}}\right\} .
\end{align*}
}
\subsection{Proof of Proposition \ref{prop: second order moment}}\label{append_4.3}

We observe that
\begin{align*}
    \EE \left[\left| \cI(\bz_s,c_s) \right|^2\right]
    &\sim
    \int_{\RR^3}\int_{\RR^3}
    d\bet\,d\bet'~
    \EE\left[
    \rho_\mu(\bz_g+\br(\bet))
    \rho_\mu(\bz_g+\br(\bet'))
    \right]
    \cdot
    \red{\caP_{\rm v}}(t_g,\bet,c_s)
    \overline{\red{\caP_{\rm v}}(t_g,\bet',c_s)}
    \\
    &=
    \int_{\RR^3}\int_{\RR^3}
    d\bet\,d\bet'~
    \Sigma\left(
    \frac{\br(\bet)-\br(\bet')}{\ell_c}
    \right)
    \cdot
    \red{\caP_{\rm v}}(t_g,\bet,c_s)
    \overline{\red{\caP_{\rm v}}(t_g,\bet',c_s)}
    \\
    &\simeq
    \frac{\ell_c^3}{\varepsilon^{2}}\cdot
    \|\Sigma\|_{L^1(\RR^3)}\cdot
    \int_{\RR^3}d\bet~
    |\red{\caP_{\rm v}}(t_g,\bet,c_s)|^2,
\end{align*}
where we have used Lemma~\ref{lem: cov} in the last equality.

To calculate the profile of the point spread function for
$B_H\ll\omega_0$, we first denote
\[
\phi(\eta_1,\eta_2,\eta_3)
=
\frac{1}{c_0}
\left(
2\eta_3+
\frac{\eta_1^2+\eta_2^2}{|\bz_g|}
\right),
\]
and
\[
\cG(\eta_1,\eta_2;c_s)
=
\cG_{1,1}\left(
\frac{a_0\omega_0}{c_0|\bz_g|}\eta_1,
\frac{a_0\omega_0}{c_0|\bz_g|}\eta_2,
0,
-\frac{a_0^2\omega_0}{c_0|\bz_g|}
\left(
\left(\frac{c_0}{c_s}\right)^2-1
\right)
\right).
\]
Then, we define
\[
f(\eta_1,\eta_2,\eta_3)
=
-4\pi\omega_0 i\,
\exp\left\{
i\omega_0\phi(\eta_1,\eta_2,\eta_3)
\right\}
F_0\left(
B_H\phi(\eta_1,\eta_2,\eta_3)
\right)
\cG^2(\eta_1,\eta_2;c_s).
\]

Since
\[
\red{\caP_{\rm v}}(t_g,\bet,c_s)
=
\operatorname{Re}\{f(\bet)\},
\]
we have
\begin{align}
    \int_{\RR^3}d\bet~
    |\red{\caP_{\rm v}}(t_g,\bet,c_s)|^2
    &=
    \frac{1}{2}
    \int_{\RR^3}d\bet~
    |f(\bet)|^2
    +
    \frac{1}{2}
    \operatorname{Re}
    \left\{
    \int_{\RR^3}d\bet~
    f^2(\bet)
    \right\}
    \notag\\
    &\triangleq I_1+I_2.
\end{align}

For $I_1$, we have
\begin{align*}
    I_1
    =
    8\pi^2\omega_0^2
    \int_{\RR^3}
    d\eta_1d\eta_2d\eta_3~
    \left|
    F_0\left(
    B_H\phi(\eta_1,\eta_2,\eta_3)
    \right)
    \right|^2
    \left|
    \cG(\eta_1,\eta_2;c_s)
    \right|^4.
\end{align*}

For $I_2$, we note that
\[
\frac{\partial}{\partial\eta_3}
\phi(\eta_1,\eta_2,\eta_3)
=
\frac{2}{c_0}.
\]
For fixed $(\eta_1,\eta_2)$, we introduce the change of variable
\[
\zeta
=
B_H\phi(\eta_1,\eta_2,\eta_3),
\]
which gives
\[
\eta_3
=
\frac{c_0}{2}
\left(
\frac{\zeta}{B_H}
-
\frac{\eta_1^2+\eta_2^2}{c_0|\bz_g|}
\right),
\qquad
\frac{\partial\eta_3}{\partial\zeta}
=
\frac{c_0}{2B_H}.
\]
Consequently,
\begin{align*}
    \int_{\RR^3}d\bet~
    f^2(\bet)
    &=
    -\frac{8\pi^2\omega_0^2c_0}{B_H}
    \int_{\RR^2}d\eta_1d\eta_2
    \int_{\RR}d\zeta~
    e^{i\frac{2\omega_0}{B_H}\zeta}
    F_0^2(\zeta)
    \cG^4(\eta_1,\eta_2;c_s).
\end{align*}
Due to the existence of the rapid phase term, the integral $I_2$ is negligible. Therefore, we have
\begin{align*}
    &\int_{\RR^3}d\bet~
    |\red{\caP_{\rm v}}(t_g,\bet,c_s)|^2
    \\
    &\simeq
    8\pi^2\omega_0^2
    \int_{\RR^3}
    d\eta_1d\eta_2d\eta_3~
    \left|
    F_0\left(
    \frac{B_H}{c_0}
    \left(
    2\eta_3+
    \frac{\eta_1^2+\eta_2^2}{|\bz_g|}
    \right)
    \right)
    \right|^2
    \\
    &\qquad\qquad\times
    \left|
    \cG_{1,1}\left(
    \frac{a_0\omega_0}{c_0|\bz_g|}\eta_1,
    \frac{a_0\omega_0}{c_0|\bz_g|}\eta_2,
    0,
    -\frac{a_0^2\omega_0}{c_0|\bz_g|}
    \left(
    \left(\frac{c_0}{c_s}\right)^2-1
    \right)
    \right)
    \right|^4.
\end{align*}

\subsection{Proof of Proposition \ref{prop: ergodic}}\label{append_4.5}
Here, we provide a proof in the case where $\rho_\mu$ is a stationary
Gaussian process.

The covariance function of the stationary Gaussian process $\rho_\mu$
belongs to $L^1(\mathbb{R}^3)$, and hence $\rho_\mu$ is ergodic.
By Proposition~\ref{prop:stationary}, the leading-order imaging
functional is a measurable translation-equivariant functional of
$\rho_\mu$. Therefore, the corresponding random field
\[
\Delta\bz
\mapsto
\left|
\cI\left(\bz_s(c_s,t_g)+\varepsilon\Delta\bz,c_s\right)
\right|^2
\]
is stationary and ergodic at leading order.

Let $l=l(\varepsilon)$ satisfy
\begin{align}
    l(\varepsilon)\to\infty,
    \qquad
    \varepsilon l(\varepsilon)\to0,
    \qquad
    \text{as }\varepsilon\to0.
\end{align}
Since
\begin{align}
    \sup_{\Delta\bz\in\cS(l(\varepsilon))}
    |\varepsilon\Delta\bz|
    =O(\varepsilon l(\varepsilon))
    \longrightarrow0,
\end{align}
the local stationary representation established in
Proposition~\ref{prop:stationary} remains valid throughout the
averaging region.

More precisely, at leading order, Proposition~\ref{prop:stationary}
gives
\begin{align}
    \cI\left(\bz_s(c_s,t_g)+\varepsilon\Delta\bz,c_s\right)
    \simeq
    C\int_{\RR^3}
    \rho_\mu\left(
    \bz_g+\varepsilon\bp(\Delta\bz,c_s)+\br(\bet)
    \right)
    \red{\caP_{\rm v}}(t_g,\bet,c_s)\,d\bet,
\end{align}
where $C$ is independent of $\Delta\bz$. Here,
\begin{align}
    \bp(\Delta\bz,c_s)
    =
    \left(
    \frac{\Delta\bz^\perp}{v^2},
    \frac{\Delta z^\parallel}{v}
    \right).
\end{align}
Thus, for fixed $c_s$, a translation of $\Delta\bz$ induces a
translation of the spatial argument of the stationary random field
$\rho_\mu$. Hence, the leading-order imaging functional, and therefore
its squared modulus, is stationary and ergodic with respect to the
rescaled variable $\Delta\bz$.

Since $l(\varepsilon)\to\infty$, Birkhoff's ergodic theorem can be
applied over $\cS(l(\varepsilon))$. Moreover,
$\varepsilon l(\varepsilon)\to0$ guarantees that the entire physical
averaging region remains within the local regime. Consequently,
\begin{align}
&
\frac{1}{|\cS(l(\varepsilon))|}
\int_{\cS(l(\varepsilon))}
d\Delta\bz~
\left|
\cI\left(
\bz_s(c_s,t_g)+\varepsilon\Delta\bz,c_s
\right)
\right|^2
\longrightarrow
\EE \left[\left| \cI(\bz_s(c_s,t_g),c_s) \right|^2\right]
\qquad\text{a.s.}
\end{align}
and hence also in probability.

For more general stationary processes, the same conclusion can be
obtained in probability provided that the corresponding spatial
average has vanishing variance.

\subsection{Proof of Corollary \ref{cor: random estimator}}\label{append_4.6}
Here, we offer a qualitative argument. We notice that for fixed $c_s$, the essential support $\cG_{1,1}$ concentrates around $\bet = 0$, and also the essential support of $F_{0}$ concentrate around $0$. The quantity $\EE\left[ |\cI(\bz_s,c_s)|^2\right]$ behaves qualitatively as the function $\left|\cG_{1,1}\bra{0,0,0,\frac{-a_0^2\omega_0}{c_0|\bz_g|}\bra{\bra{\frac{c_0}{c_s}}^2-1}}\right|^4$ up to some multiplicity constant. Then, it is clear that the maximum is achieved at $c_s=c_0$.

\section{Auxiliary lemmas}

\begin{lem}\label{lem: cov}
Let $n(\bx)$ be a random process for $\bx\in\RR^d$, and let
$\Sigma(\bx)$ be its covariance function satisfying
\[
|\Sigma(\bx)|
\lesssim
(1+|\bx|)^{-(d+1)-\eta},
\qquad
\eta>0.
\]
Let $a_1,\ldots,a_d>0$ be small parameters such that $\max_j a_j \ll 1$, 
 and define
\[
n_{\mathbf{a}}(\bx)
=
n\left(
\frac{x_1}{a_1},\ldots,\frac{x_d}{a_d}
\right).
\]
For any deterministic functions
$f\in\mathcal C^1(\RR^d)\cap H^1(\RR^d)$ and
$g\in\mathcal L^2(\RR^d)$, we have
\begin{align}
&\int_{\RR^d\times\RR^d}
d\bx\,d\by~
\operatorname{Cov}
\left(
n_{ \mathbf{a}}(\bx),n_{ \mathbf{a}}(\by)
\right)
f(\bx)\overline{g(\by)}
\notag\\
&\qquad=
\left(\prod_{j=1}^d a_j\right)
\left(
\int_{\RR^d}d\bz~\Sigma(\bz)
\right)
\int_{\RR^d}d\bx~
f(\bx)\overline{g(\bx)}
+
\mathcal O\left(
\left(\prod_{j=1}^d a_j\right)
\max_{1\leq j\leq d}a_j
\right).
\end{align}
\end{lem}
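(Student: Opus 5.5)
The plan is to reduce the double integral to a single convolution-type integral by changing variables to the rescaled difference $\bz=((x_1-y_1)/a_1,\ldots,(x_d-y_d)/a_d)$, and then to control the error with a first-order Taylor expansion of $f$ weighted by the decay of $\Sigma$. By stationarity, $\operatorname{Cov}(n_{\mathbf a}(\bx),n_{\mathbf a}(\by))=\Sigma\bigl((x_1-y_1)/a_1,\ldots,(x_d-y_d)/a_d\bigr)$. Keeping $\by$ and writing $\bx=\by+A\bz$ with $A=\operatorname{diag}(a_1,\ldots,a_d)$, we have $d\bx=\bigl(\prod_j a_j\bigr)d\bz$. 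The left-hand side therefore becomes
\begin{align*}
\Bigl(\prod_j a_j\Bigr)\int_{\RR^d}d\bz~\Sigma(\bz)\int_{\RR^d}d\by~f(\by+A\bz)\overline{g(\by)}.
\end{align*}
Fubini is justified because $\Sigma\in L^1$ (the decay exponent $d+1+\eta>d$) and $|\int f(\by+A\bz)\bar g(\by)d\by|\le\|f\|_{L^2}\|g\|_{L^2}$ uniformly in $\bz$.

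Next, I would write $f(\by+A\bz)=f(\by)+\bigl(f(\by+A\bz)-f(\by)\bigr)$. The first piece yields exactly the main term $\bigl(\prod_j a_j\bigr)\bigl(\int\Sigma\bigr)\int f\bar g$. For the remainder, I would use the integral form of the mean value theorem, $f(\by+A\bz)-f(\by)=\int_0^1\nabla f(\by+tA\bz)\cdot A\bz\,dt$, which is valid since $f\in\mathcal C^1$. By Cauchy--Schwarz in $\by$ and translation invariance of the $L^2$ norm,
\begin{align*}
\Bigl|\int d\by~\bigl(f(\by+A\bz)-f(\by)\bigr)\overline{g(\by)}\Bigr|\le |A\bz|\,\|\nabla f\|_{L^2}\|g\|_{L^2}\le \max_j a_j\,|\bz|\,\|f\|_{H^1}\|g\|_{L^2}.
\end{align*}
Integrating this bound against $|\Sigma(\bz)|$ produces the error $\bigl(\prod_j a_j\bigr)\max_j a_j\,\|f\|_{H^1}\|g\|_{L^2}\int|\bz||\Sigma(\bz)|d\bz$.

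The main point to check is that the first moment $\int|\bz||\Sigma(\bz)|\,d\bz$ is finite, and this is exactly what the hypothesis $|\Sigma(\bz)|\lesssim(1+|\bz|)^{-(d+1)-\eta}$ is designed for. In polar coordinates the integrand is bounded by $r^{d}(1+r)^{-(d+1)-\eta}$, which is integrable since it behaves like $r^{-1-\eta}$ at infinity. A secondary technical point is the Fubini and measurability bookkeeping. Since everything is bounded by $|\Sigma(\bz)|(1+|\bz|)$ times $L^2$ norms, absolute integrability holds and no further difficulty is expected. Collecting the main term and the remainder gives the stated expansion with an $\mathcal O\bigl((\prod_j a_j)\max_j a_j\bigr)$ error, where the constant depends only on $\|f\|_{H^1}$, $\|g\|_{L^2}$ and $\Sigma$.
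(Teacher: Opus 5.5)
Your proof is correct and follows the paper's argument: the rescaled change of variables $\bx=\by+A\bz$, the mean value theorem for $f$, and finiteness of $\int|\bz||\Sigma(\bz)|\,d\bz$ from the decay hypothesis. Your Cauchy--Schwarz step with translation invariance, giving the bound $\|\nabla f\|_{L^2}\|g\|_{L^2}$, is the right way to close the estimate. The paper's final bound by $\int|\nabla f(\by)||g(\by)|\,d\by$ silently drops the shift $tA\bz$ in the argument of $\nabla f$, and your version repairs that.
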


\begin{proof}
By definition,
\[
\operatorname{Cov}
\left(
n_{ \mathbf{a}}(\bx),n_{ \mathbf{a}}(\by)
\right)
=
\Sigma\left(
\frac{x_1-y_1}{a_1},
\ldots,
\frac{x_d-y_d}{a_d}
\right).
\]
Therefore, by the change of variables
\[
z_j=\frac{x_j-y_j}{a_j},
\qquad j=1,\ldots,d,
\]
we obtain
\begin{align*}
&\int_{\RR^d\times\RR^d}
d\bx\,d\by~
\operatorname{Cov}
\left(
n_{ \mathbf{a}}(\bx),n_{ \mathbf{a}}(\by)
\right)
f(\bx)\overline{g(\by)}
\\
&\qquad=
\left(\prod_{j=1}^d a_j\right)
\int_{\RR^d}d\by
\int_{\RR^d}d\bz~
\Sigma(\bz)
f(\by+A\bz)
\overline{g(\by)},
\end{align*}
where
\[
A=\operatorname{diag}(a_1,\ldots,a_d).
\]
Then, we have
\begin{align*}
&\left|
\int_{\RR^d\times\RR^d}
d\bx\,d\by~
\operatorname{Cov}
\left(
n_{ \mathbf{a}}(\bx),n_{ \mathbf{a}}(\by)
\right)
f(\bx)\overline{g(\by)}
\right.
\\
&\qquad\left.
-
\left(\prod_{j=1}^d a_j\right)
\left(
\int_{\RR^d}d\bz~\Sigma(\bz)
\right)
\int_{\RR^d}d\by~
f(\by)\overline{g(\by)}
\right|
\\
&\qquad\lesssim
\left(\prod_{j=1}^d a_j\right)
\int_{\RR^d}d\bz~
|\Sigma(\bz)|
\int_{\RR^d}d\by~
|g(\by)|
\left|
f(\by+A\bz)-f(\by)
\right|.
\end{align*}
By the mean value theorem,
\[
\left|
f(\by+A\bz)-f(\by)
\right|
\leq
|A\bz|
\int_0^1
\left|
\nabla f(\by+tA\bz)
\right|dt.
\]
Since
\[
|A\bz|
\leq
\left(\max_{1\leq j\leq d}a_j\right)|\bz|,
\]
we obtain
\begin{align*}
&\left|
\int_{\RR^d\times\RR^d}
d\bx\,d\by~
\operatorname{Cov}
\left(
n_{ \mathbf{a}}(\bx),n_{ \mathbf{a}}(\by)
\right)
f(\bx)\overline{g(\by)}
\right.
\\
&\qquad\left.
-
\left(\prod_{j=1}^d a_j\right)
\left(
\int_{\RR^d}d\bz~\Sigma(\bz)
\right)
\int_{\RR^d}d\by~
f(\by)\overline{g(\by)}
\right|
\\
&\qquad\lesssim
\left(\prod_{j=1}^d a_j\right)
\left(\max_{1\leq j\leq d}a_j\right)
\left(
\int_{\RR^d}d\bz~
|\bz|\,|\Sigma(\bz)|
\right)
\left(
\int_{\RR^d}d\by~
|\nabla f(\by)||g(\by)|
\right).
\end{align*}
The assumed decay of $\Sigma$ implies
\[
\int_{\RR^d}
|\bz|\,|\Sigma(\bz)|\,d\bz<\infty,
\]
which proves the result.
\end{proof}

\begin{lem}\label{lem: integral}
    We present an integral which is frequently used in the calculation of the point spread function: 
    \begin{align}
        \int_\RR dx ~e^{i(-Ax^2+Bx)} = \sqrt{\frac{\pi}{A}}e^{-i\frac{\pi}{4}}e^{i\frac{B^2}{4A}}, \blue{\quad A>0}.
    \end{align}
\end{lem}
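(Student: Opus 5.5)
The proof reduces the one-dimensional oscillatory integral to a Fresnel integral along a rotated ray in the complex plane. First complete the square in the exponent:
\[
-Ax^2+Bx=-A\Bigl(x-\tfrac{B}{2A}\Bigr)^2+\tfrac{B^2}{4A}.
\]
The constant factor $e^{iB^2/(4A)}$ comes out of the integral. The translation $x\mapsto x+\tfrac{B}{2A}$ is then applied. For real $B$ it leaves the improper integral over $\RR$ unchanged, since the integrand has modulus one and the integral is understood as $\lim_{R\to\infty}\int_{-R}^{R}$ and converges by Fresnel-type cancellation. The next rescaling $x=y/\sqrt{A}$, with $A>0$, reduces the claim to
\[
\int_\RR dy~e^{-iy^2}=\sqrt{\pi}\,e^{-i\pi/4}.
\]

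The classical Fresnel integral is established with a contour rotation. Consider the entire function $e^{-z^2}$ and integrate it over the boundary of the sector $\{re^{i\theta}:0\le r\le R,\ -\pi/4\le\theta\le 0\}$. By Cauchy's theorem the contour integral vanishes. On the ray $z=re^{-i\pi/4}$ we have $z^2=-ir^2$, so this ray contributes $e^{-i\pi/4}\int_0^R e^{ir^2}dr$. Along the real axis we get $\int_0^R e^{-r^2}dr\to\sqrt{\pi}/2$. This yields $\int_0^\infty e^{ir^2}dr=\tfrac{\sqrt\pi}{2}e^{i\pi/4}$. Complex conjugation and evenness of the integrand then give the value $\sqrt{\pi}e^{-i\pi/4}$ over all of $\RR$.

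The only genuinely delicate step is showing that the arc contribution vanishes, namely
\[
\Bigl|\int_{-\pi/4}^{0} e^{-R^2e^{2i\theta}}\,iRe^{i\theta}\,d\theta\Bigr|\le R\int_0^{\pi/4}e^{-R^2\cos 2\theta}\,d\theta .
\]
This is handled with Jordan's inequality, $\cos 2\theta\ge 1-4\theta/\pi$ on $[0,\pi/4]$. It bounds the arc term by $\tfrac{\pi}{4R}(1-e^{-R^2})\to0$.

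For the applications in the paper, where $B$ may be complex or the integral only conditionally convergent, one may instead insert a regularization $e^{-\delta x^2}$. The Gaussian integral with complex parameter $A'=\delta+iA$ is then computed by analytic continuation, and one lets $\delta\to0^+$, using the principal branch of $\sqrt{\cdot}$ so that $\sqrt{\pi/(\delta+iA)}\to\sqrt{\pi/A}\,e^{-i\pi/4}$. This regularized route also justifies the earlier shift of variables rigorously.
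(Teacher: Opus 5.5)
Your proof is correct and follows the same route as the paper: complete the square, translate, rescale, and reduce to the Fresnel integral. The only difference is that the paper quotes $\int_\RR e^{-is^2/2}\,ds=\sqrt{2\pi}\,e^{-i\pi/4}$ without proof, whereas you derive the equivalent identity $\int_\RR e^{-iy^2}\,dy=\sqrt{\pi}\,e^{-i\pi/4}$ by the sector contour and Jordan's inequality, which is a valid addition.

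One small precision concerns the translation of the conditionally convergent integral. It is legitimate because the one-sided limits $\lim_{R\to\infty}\int_0^{\pm R}e^{-iAu^2}\,du$ exist, as your contour argument shows, so the tails over intervals of fixed length $|B|/(2A)$ vanish. Unit modulus of the integrand alone would not suffice.
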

\begin{proof}
    \begin{align}
        \int_\RR dx~ e^{i(-Ax^2+Bx)} = \int_\RR dx~ e^{-iA(x-\frac{B}{2A})^2}\cdot e^{i\frac{B^2}{4A}}= \sqrt{\frac{\pi}{A}}e^{-i\frac{\pi}{4}}e^{i\frac{B^2}{4A}},
    \end{align}
    where we use the identity that 
    \begin{align*}
        \int_\RR ds~e^{-i\frac{s^2}{2}}  = \sqrt{2\pi}e^{-i\frac{\pi}{4}}.
    \end{align*}
\end{proof}

\bibliographystyle{IEEEtran}
\bibliography{reference}

@book{Garnier_Papanicolaou_2016, place={Cambridge}, title={Passive Imaging with Ambient Noise}, publisher={Cambridge University Press}, author={Garnier, Josselin and Papanicolaou, George}, year={2016}}

@article{Josselin_Garnier_2025_Inverse_Problems_and_Imaging,
  title = {Probing the speckle to estimate the effective speed of sound, a first step towards quantitative ultrasound imaging},
  journal = {Inverse Problems and Imaging},
  year = {2025},
  pages = {},
  issn = {1930-8337},
  doi = {10.3934/ipi.2026001},
  url = {https://www.aimsciences.org/article/id/692581ef1511553245b010e2},
  author = {Josselin Garnier and Laure Giovangigli and Quentin Goepfert and Pierre Millien}
}

@article{de2009semiclassical,
  title={Semiclassical analysis and passive imaging},
  author={De Verdi{\`e}re, Yves Colin},
  journal={Nonlinearity},
  volume={22},
  number={6},
  pages={R45},
  year={2009},
  publisher={IOP Publishing}
}

@article{Bardos_2008,
doi = {10.1088/0266-5611/24/1/015011},
url = {https://doi.org/10.1088/0266-5611/24/1/015011},
year = {2008},
month = {jan},
publisher = {},
volume = {24},
number = {1},
pages = {015011},
author = {Bardos, Claude and Garnier, Josselin and Papanicolaou, George},
title = {Identification of Green's functions singularities by cross correlation of noisy signals},
journal = {Inverse Problems}
}

@article{doi:10.1137/080723454,
author = {Garnier, Josselin and Papanicolaou, George},
title = {Passive Sensor Imaging Using Cross Correlations of Noisy Signals in a Scattering Medium},
journal = {SIAM Journal on Imaging Sciences},
volume = {2},
number = {2},
pages = {396-437},
year = {2009},
doi = {10.1137/080723454},

URL = { 
    
        https://doi.org/10.1137/080723454
    
    

},
eprint = { 
    
        https://doi.org/10.1137/080723454
    
    

}
}

@article{garnier2010resolution,
  title={Resolution analysis for imaging with noise},
  author={Garnier, Josselin and Papanicolaou, George},
  journal={Inverse Problems},
  volume={26},
  number={7},
  pages={074001},
  year={2010},
  publisher={IOP Publishing}
}

@article{HELIN2018132,
title = {Correlation based passive imaging with a white noise source},
journal = {Journal de Mathématiques Pures et Appliquées},
volume = {116},
pages = {132-160},
year = {2018},
issn = {0021-7824},
doi = {https://doi.org/10.1016/j.matpur.2018.05.001},
url = {https://www.sciencedirect.com/science/article/pii/S0021782418300710},
author = {T. Helin and M. Lassas and L. Oksanen and T. Saksala}
}

@article{muller2024quantitative,
  title={Quantitative passive imaging by iterative holography: the example of helioseismic holography},
  author={M{\"u}ller, Bj{\"o}rn and Hohage, Thorsten and Fournier, Damien and Gizon, Laurent},
  journal={Inverse Problems},
  volume={40},
  number={4},
  pages={045016},
  year={2024},
  publisher={IOP Publishing}
}

@article{lambert2022ultrasound,
  title={Ultrasound matrix imaging—Part I: The focused reflection matrix, the F-factor and the role of multiple scattering},
  author={Lambert, William and Robin, Justine and Cobus, Laura A and Fink, Mathias and Aubry, Alexandre},
  journal={IEEE Transactions on Medical Imaging},
  volume={41},
  number={12},
  pages={3907--3920},
  year={2022},
  publisher={IEEE}
}

@ARTICLE{9858891,
  author={Lambert, William and Cobus, Laura A. and Robin, Justine and Fink, Mathias and Aubry, Alexandre},
  journal={IEEE Transactions on Medical Imaging}, 
  title={Ultrasound Matrix Imaging—Part II: The Distortion Matrix for Aberration Correction Over Multiple Isoplanatic Patches}, 
  year={2022},
  volume={41},
  number={12},
  pages={3921-3938},
  doi={10.1109/TMI.2022.3199483}}

@article{bureau2024reflection,
  title={Reflection matrix imaging for wave velocity tomography},
  author={Bureau, Flavien and Giraudat, Elsa and Ber, Arthur Le and Lambert, William and Carmier, Louis and Guibal, Aymeric and Fink, Mathias and Aubry, Alexandre},
  journal={arXiv preprint arXiv:2409.13901},
  year={2024}
}

@article{giraudat2024matrix,
  title={Matrix imaging as a tool for high-resolution monitoring of deep volcanic plumbing systems with seismic noise},
  author={Giraudat, Elsa and Burtin, Arnaud and Le Ber, Arthur and Fink, Mathias and Komorowski, Jean-Christophe and Aubry, Alexandre},
  journal={Communications Earth \& Environment},
  volume={5},
  number={1},
  pages={509},
  year={2024},
  publisher={Nature Publishing Group UK London}
}

@article{agaltsov2020global,
  title={Global uniqueness in a passive inverse problem of helioseismology},
  author={Agaltsov, AD and Hohage, Thorsten and Novikov, RG},
  journal={Inverse Problems},
  volume={36},
  number={5},
  pages={055004},
  year={2020},
  publisher={IOP Publishing}
}

@article{https://doi.org/10.1111/j.1365-2478.2007.00684.x,
author = {Gouédard, P. and Stehly, L. and Brenguier, F. and Campillo, M. and Colin de Verdière, Y. and Larose, E. and Margerin, L. and Roux, P. and Sánchez-Sesma, F. J. and Shapiro, N. M. and Weaver, R. L.},
title = {Cross-correlation of random fields: mathematical approach and applications},
journal = {Geophysical Prospecting},
volume = {56},
number = {3},
pages = {375-393},
doi = {https://doi.org/10.1111/j.1365-2478.2007.00684.x},
url = {https://onlinelibrary.wiley.com/doi/abs/10.1111/j.1365-2478.2007.00684.x},
eprint = {https://onlinelibrary.wiley.com/doi/pdf/10.1111/j.1365-2478.2007.00684.x},
year = {2008}
}

@article{harmankaya2013locating,
  title={Locating near-surface scatterers using non-physical scattered waves resulting from seismic interferometry},
  author={Harmankaya, Utku and Kaslilar, Ayse and Thorbecke, J and Wapenaar, K and Draganov, D},
  journal={Journal of Applied Geophysics},
  volume={91},
  pages={66--81},
  year={2013},
  publisher={Elsevier}
}

@article{kaslilar2013estimating,
  title={Estimating the location of a tunnel using correlation and inversion of Rayleigh wave scattering},
  author={Kaslilar, Ayse and Harmankaya, Utku and Wapenaar, K and Draganov, D},
  journal={Geophysical Research Letters},
  volume={40},
  number={23},
  pages={6084--6088},
  year={2013},
  publisher={Wiley Online Library}
}

@article{konstantaki2013imaging,
  title={Imaging scatterers in landfills using seismic interferometry},
  author={Konstantaki, Laura Amalia and Draganov, Deyan and Heimovaara, Timo and Ghose, Ranajit},
  journal={Geophysics},
  volume={78},
  number={6},
  pages={EN107--EN116},
  year={2013},
  publisher={Society of Exploration Geophysicists}
}

@article{bakulin2006virtual,
  title={The virtual source method: Theory and case study},
  author={Bakulin, Andrey and Calvert, Rodney},
  journal={Geophysics},
  volume={71},
  number={4},
  pages={SI139--SI150},
  year={2006},
  publisher={Society of Exploration Geophysicists}
}

@article{wapenaar2017virtual,
  title={Virtual sources and their responses, Part I: time-reversal acoustics and seismic interferometry},
  author={Wapenaar, Kees and Thorbecke, Jan},
  journal={Geophysical Prospecting},
  volume={65},
  number={6},
  pages={1411--1429},
  year={2017},
  publisher={European Association of Geoscientists \& Engineers}
}

@article{garnier2014role,
  title={Role of scattering in virtual source array imaging},
  author={Garnier, Josselin and Papanicolaou, George},
  journal={SIAM Journal on Imaging Sciences},
  volume={7},
  number={2},
  pages={1210--1236},
  year={2014},
  publisher={SIAM}
}

@article{https://doi.org/10.1111/1365-2478.12495,
author = {Wapenaar, Kees and Thorbecke, Jan and van der Neut, Joost and Slob, Evert and Snieder, Roel},
title = {Review paper: Virtual sources and their responses, Part II: data-driven single-sided focusing},
journal = {Geophysical Prospecting},
volume = {65},
number = {6},
pages = {1430-1451},
doi = {https://doi.org/10.1111/1365-2478.12495},
url = {https://onlinelibrary.wiley.com/doi/abs/10.1111/1365-2478.12495},
eprint = {https://onlinelibrary.wiley.com/doi/pdf/10.1111/1365-2478.12495},
year = {2017}
}

@article{shapiro2005high,
  title={High-resolution surface-wave tomography from ambient seismic noise},
  author={Shapiro, Nikolai M and Campillo, Michel and Stehly, Laurent and Ritzwoller, Michael H},
  journal={Science},
  volume={307},
  number={5715},
  pages={1615--1618},
  year={2005},
  publisher={American Association for the Advancement of Science}
}

@article{lin2009eikonal,
  title={Eikonal tomography: surface wave tomography by phase front tracking across a regional broad-band seismic array},
  author={Lin, Fan-Chi and Ritzwoller, Michael H and Snieder, Roel},
  journal={Geophysical Journal International},
  volume={177},
  number={3},
  pages={1091--1110},
  year={2009},
  publisher={Blackwell Publishing Ltd Oxford, UK}
}

@article{ritzwoller2011ambient,
  title={Ambient noise tomography with a large seismic array},
  author={Ritzwoller, Michael H and Lin, Fan-Chi and Shen, Weisen},
  journal={Comptes Rendus Geoscience},
  volume={343},
  number={8-9},
  pages={558--570},
  year={2011},
  publisher={Elsevier}
}

@article{sager2018towards,
  title={Towards full waveform ambient noise inversion},
  author={Sager, Korbinian and Ermert, Laura and Boehm, Christian and Fichtner, Andreas},
  journal={Geophysical Journal International},
  volume={212},
  number={1},
  pages={566--590},
  year={2018},
  publisher={Oxford University Press}
}

@incollection{fichtner_tsai_2019,
  author    = {Andreas Fichtner and Victor C. Tsai},
  title     = {Theoretical foundations of noise interferometry},
  booktitle = {Seismic Ambient Noise},
  editor    = {Nori Nakata and Lucia Gualtieri and Andreas Fichtner},
  publisher = {Cambridge University Press},
  year      = {2019},
  chapter   = {4},
  pages     = {109--143}
}

@article{weaver2001ultrasonics,
  title={Ultrasonics without a source: Thermal fluctuation correlations at MHz frequencies},
  author={Weaver, Richard L and Lobkis, Oleg I},
  journal={Physical Review Letters},
  volume={87},
  number={13},
  pages={134301},
  year={2001},
  publisher={APS}
}

@article{duvall1993time,
  title={Time--distance helioseismology},
  author={Duvall Jr, Thomas L and Jeffferies, SM and Harvey, JW and Pomerantz, MA},
  journal={Nature},
  volume={362},
  number={6419},
  pages={430--432},
  year={1993},
  publisher={Nature Publishing Group UK London}
}

@article{sabra2006extracting,
  title={Extracting coherent coda arrivals from cross-correlations of long period seismic waves during the Mount St. Helens 2004 eruption},
  author={Sabra, Karim G and Roux, Philippe and Gerstoft, Peter and Kuperman, WA and Fehler, Michael C},
  journal={Geophysical research letters},
  volume={33},
  number={6},
  year={2006},
  publisher={Wiley Online Library}
}

@article{curtis2006seismic,
  title={Seismic interferometry—Turning noise into signal},
  author={Curtis, Andrew and Gerstoft, Peter and Sato, Haruo and Snieder, Roel and Wapenaar, Kees},
  journal={The Leading Edge},
  volume={25},
  number={9},
  pages={1082--1092},
  year={2006},
  publisher={Society of Exploration Geophysicists}
}

@article{larose2006correlation,
  title={Correlation of random wavefields: An interdisciplinary review},
  author={Larose, Eric and Margerin, Ludovic and Derode, Arnaud and van Tiggelen, Bart and Campillo, Michel and Shapiro, Nikolai and Paul, Anne and Stehly, Laurent and Tanter, Mickael},
  journal={Geophysics},
  volume={71},
  number={4},
  pages={SI11--SI21},
  year={2006},
  publisher={Society of Exploration Geophysicists}
}

@article{lobkis2001emergence,
  title={On the emergence of the Green’s function in the correlations of a diffuse field},
  author={Lobkis, Oleg I and Weaver, Richard L},
  journal={The Journal of the Acoustical Society of America},
  volume={110},
  number={6},
  pages={3011--3017},
  year={2001},
  publisher={Acoustical Society of America}
}

@article{PhysRevE.69.046610,
  title = {Extracting the Green's function from the correlation of coda waves: A derivation based on stationary phase},
  author = {Snieder, Roel},
  journal = {Phys. Rev. E},
  volume = {69},
  issue = {4},
  pages = {046610},
  numpages = {8},
  year = {2004},
  month = {Apr},
  publisher = {American Physical Society},
  doi = {10.1103/PhysRevE.69.046610},
  url = {https://link.aps.org/doi/10.1103/PhysRevE.69.046610}
}

@article{Garnier_2012,
doi = {10.1088/0266-5611/28/7/075002},
url = {https://doi.org/10.1088/0266-5611/28/7/075002},
year = {2012},
month = {jun},
publisher = {IOP Publishing},
volume = {28},
number = {7},
pages = {075002},
author = {Garnier, Josselin and Papanicolaou, George},
title = {Correlation-based virtual source imaging in strongly scattering random media},
journal = {Inverse Problems}
}

@article{Garnier_2009,
doi = {10.1088/0266-5611/25/4/045005},
url = {https://doi.org/10.1088/0266-5611/25/4/045005},
year = {2009},
month = {feb},
publisher = {},
volume = {25},
number = {4},
pages = {045005},
author = {Garnier, Josselin and Sølna, Knut},
title = {Background velocity estimation with cross correlations of incoherent waves in the parabolic scaling},
journal = {Inverse Problems}
}

@book{BornWolf1993,
  author    = {Max Born and Emil Wolf},
  title     = {Principles of Optics},
  edition   = {6},
  publisher = {Pergamon Press},
  address   = {Oxford, United Kingdom},
  year      = {1993}
}

\end{document}